\documentclass[journal,11pt,onecolumn,draftclsnofoot,]{IEEEtran}

\usepackage{caption}
\usepackage{graphicx}
\usepackage{cite}
\usepackage{amsmath, amsthm, amssymb, bm}
\usepackage{amsthm}
\usepackage{algpseudocode}
\usepackage{epstopdf}
\usepackage{color}
\usepackage{subfigure}
\usepackage{epsfig}
\usepackage{flushend}
\usepackage{cleveref}
\usepackage{color}
\newtheorem{theorem}{Theorem}

\newtheorem{proposition}{Proposition}
\theoremstyle{remark}
\newtheorem{remark}{Remark}
\newtheorem{lemma}{Lemma}
\newtheorem{assumption}{Assumption}
\newtheorem{corollary}{Corollary}

\newcommand{\skipval}{0.901mm} 
\addtolength{\abovedisplayskip}{-\skipval} 
\addtolength{\belowdisplayskip}{-\skipval}
\usepackage{amsmath}
\usepackage{epstopdf}
\usepackage{amssymb}
\usepackage{systeme}
\usepackage{color}
\usepackage{pgfplots}
\usepackage{bbm}

%
\setlength\unitlength{1mm}

\long\def\comment#1{}


\DeclareMathOperator*{\argmax}{arg\,max}
\DeclareMathOperator*{\argmin}{arg\,min}

\newfont{\bbb}{msbm10 scaled 700}

\newfont{\bb}{msbm10 scaled 1100}


\newcommand{\gv}{{\bf g}}
\newcommand{\hv}{{\bf h}}

\newcommand{\qv}{{\bf q}}

\newcommand{\sv}{{\bf s}}

\newcommand{\uv}{{\bf u}}
\newcommand{\wv}{{\bf w}}
\newcommand{\vv}{{\bf v}}
\newcommand{\xv}{{\bf x}}
\newcommand{\yv}{{\bf y}}
\newcommand{\zv}{{\bf z}}


\newcommand{\Am}{{\bf A}}

\newcommand{\Cm}{{\bf C}}

\newcommand{\Gm}{{\bf G}}
\newcommand{\Hm}{{\bf H}}
\newcommand{\Id}{{\bf I}}

\newcommand{\Xm}{{\bf X}}
\newcommand{\Ym}{{\bf Y}}
\newcommand{\Zm}{{\bf Z}}



\newcommand{\Deltam}{\hbox{\boldmath$\Delta$}}

\ifCLASSINFOpdf
\else
\fi
%
%
\begin{document}
%
\title{Asymptotic Performance of Box-RLS Decoders under Imperfect CSI with Optimized Resource Allocation}
%
%
\author{
Ayed M. Alrashdi, Abla Kammoun, Ali H. Muqaibel and Tareq Y. Al-Naffouri
\vspace{-12pt}
\thanks{
Part of this work was presented in \cite{alrashdi2018optimum}.

A. M. Alrashdi is with the Department of Electrical Engineering, College of Engineering, University of Ha'il, P.O. Box 2440, Ha'il, 81441, Saudi Arabia (e-mail: am.alrashdi@uoh.edu.sa).

A. H. Muqaibel is with the Department of Electrical Engineering, King Fahd University of Petroleum and Minerals, Dhahran 3126, Saudi Arabia (e-mail: muqaibel@kfupm.edu.sa).

A. Kammoun and T. Y. Al-Naffouri are with the Computer, Electrical, and Mathematical Sciences and Engineering Division, King Abdullah University of Science and Technology, Thuwal 23955, Saudi Arabia (e-mail: abla.kammoun@kaust.edu.sa; tareq.alnaffouri@kaust.edu.sa).
}
}
\maketitle
\begin{abstract}
This paper considers the problem of symbol detection in massive multiple-input multiple-output (MIMO) wireless communication systems. We consider hard-thresholding preceded by two variants of the regularized least squares (RLS) decoder; namely the unconstrained RLS and the RLS with box constraint. For all schemes, we focus on the evaluation of the mean squared error (MSE) and the symbol error probability (SEP) for $M$-ary pulse amplitude modulation ($M$-PAM) symbols transmitted over a massive MIMO system when the channel is estimated using linear minimum mean squared error (LMMSE) estimator. Under such circumstances, the channel estimation error is Gaussian which allows for the use of the convex Gaussian min-max theorem (CGMT) to derive asymptotic approximations for the MSE and SEP when the system dimensions and the coherence duration grow large with the same pace. The obtained  expressions are then leveraged to derive the optimal power distribution between pilot and data under a total transmit energy constraint. In addition, we derive an asymptotic approximation  of the goodput for all schemes which is then used to jointly optimize the number of training symbols and their associated power. Numerical results are presented to support the accuracy of the theoretical results.  
\end{abstract}
\begin{IEEEkeywords}
Power allocation, mean squared error, symbol error probability, goodput, regularized least squares, massive MIMO, channel estimation, box-constraint.
\end{IEEEkeywords}
\IEEEpeerreviewmaketitle
\section{Introduction}
\label{sec:intro}
The use of multiple-input multiple-output (MIMO) systems has been recognized as an efficient technology to meet the ever-increasing demand in spectral efficiency. It is indeed known since the early works of Telatar \cite{Tela99} and Foshini \cite{Fosc98} that the mutual information scales with the minimum of the number of transmit and receive antennas. In practice, however, the spectral efficiency of a wireless link depends not only on how many antennas are deployed at the transmit and receive sides but also on the channel estimation accuracy, the detection procedure as well as the distribution of the power resources, all of which have a direct bearing on the end-to-end signal-to-noise-ratio.       
At the receiver side, accurate channel estimation and symbol detection are crucial to reap the gains promised by the additional degrees of freedom offered by MIMO systems. Channel estimation is performed by allocating a training period during which the transmitter sends known \emph{pilot} symbols to the receiver for it to acquire an estimate of the channel state information (CSI). During the data transmission phase, this estimate is leveraged by the receiver to equalize the channel and recover the data symbols. It is worth mentioning that it is important for the success of the data recovery step to acquire accurate channel estimates because otherwise the error in the channel estimation would propagate to the symbol recovery, crippling the overall performance even under state-of-the-art detection strategies.  Resource allocation in terms of power and time  is also an essential part of the design of wireless systems. Increasing the duration of the transmitted pilot sequence would lead to enhanced channel estimation quality but at the cost of spectral efficiency losses since less time is spent to transmit useful data. Moreover, as the total power allocated to data and training transmission is fixed, we cannot increase the power allocated to data without affecting the channel accuracy estimation and vice versa.        

The problem of finding optimal power allocation between pilot and data has received a lot of attention over the last decades. It has been applied to different contexts  with the aim of realizing various communication objectives. In \cite{hassibi2003much}, \cite{kannu2005capacity} and \cite{gottumukkala2009}, the authors proposed  power designs that optimizes bounds on the average channel capacity. A different line of research works in \cite{simko2011, vsimko2012optimal, wang2011power, zhang2016energy} considered the post-equalization signal to interference and noise ratio  as a target metric to determine the optimal power allocation. Depending on the application at hand, different other metrics such as  the bit error rate (BER) \cite{wang2014ber,wang2009super}, the symbol error rate (SER) \cite{cai2004error}, mean squared error (MSE)-related indexes \cite{liu2016pilot,zhao2017game}, max-min fairness utilities \cite{van2018joint} or bounds on the received signal-to-noise ratio \cite{zhou2010two} have been used in several existing papers. Of interest in these works are a wide range of contexts including classical MIMO systems \cite{hassibi2003much}, mutli-carrier systems \cite{rottenberg2016generalized, xu2012energy, ma2003optimal, khosravi2014joint, chen2003pilot, montalban2013power}, amplify-and-forward relaying \cite{wang2014ber}, cognitive radio systems  \cite{yu2016optimal} and very recently massive MIMO systems in both single-cell \cite{cheng2016optimal} and multi-user multi-cell settings \cite{van2018joint, dao2018pilot, ngo2014massive, liu2016pilot, guo2015energy}.

Most of the aforementioned works, despite looking at the power allocation problem from different angles, present the  common denominator of relying on  linear receivers and optimizing some related performance metrics. To achieve optimal performance, it is well known that  non-linear detectors are required, but they are often not implemented as they require a prohibitively high computation complexity. 
In this work, we consider optimization of the power allocation of a non-linear decoder coined Box-RLS decoder, the idea behind which has been proposed in \cite{atitallah2017ber,thrampoulidis2016ber} and earlier in \cite{tan2001constrained}.  As explained in \cite{tan2001constrained}, this decoder is rooted in the formulation of the maximum-likelihood (ML) problem. At the one hand,  the ML decoder  is known to be NP-hard since its solution is constrained to belong to a finite discrete set, at the other hand, the Least Square (LS) decoder or its regularized version referred to as   Regularized Least Squares (RLS) decoder \cite{tikhonov1963solution} are linear detectors, obtained by solving the unconstrained ML problem. The Box-RLS decoder falls between the ML  and the RLS decoders in that the solution is constrained to lie within a closed convex-set. As a consequence, unlike the RLS decoder, the solution of the Box-RLS cannot be expressed in closed-form but is numerically computed using standard convex-optimization tools. Both the RLS and Box-RLS decoders are more computationally efficient than the ML decoder. Besides they are also shown to outperform many heuristic algorithms such as zero-forcing (ZF), successive interference cancellation and decision-feedback, \cite{atitallah2017ber, thrampoulidis2016ber, tan2001constrained}. Performance analysis of the Box-RLS has been carried out in \cite{atitallah2017ber}, \cite{thrampoulidis2016ber} but these studies assume unrealistic scenarios in which the channel is perfectly known and do not address the problem of power allocation. To fill this gap, this work addresses the problem of finding optimal power allocation between pilots and data for RLS and Box-RLS under a fixed total power constraint and when the channel is estimated using the linear minimum mean square error estimator (LMMSE). Particularly, we derive sharp characterization of the MSE and symbol error probability (SEP) for the RLS and Box-RLS decoders under $M$-ary pulse amplitude modulation ($M$-PAM) when assuming that the symbol is recovered by hard-thresholding the output of RLS and Box-RLS decoders. Our analysis, based on the assumption that the MIMO channel and the noise are independent and follow  standard  Gaussian distributions, builds upon the CGMT framework put forth in \cite{thrampoulidis2018precise, alrashdi2017precise,alrashdi2019precise}. As compared to previous works dealing with the use of CGMT in high-dimensional regression problems, our consideration of imperfect CSI poses technical challenges towards assessing the performance of the Box-RLS. Particularly, contrary to previous studies, the application of the CGMT in the asymptotic regime leads to  a non-convex optimization problem; the uniqueness of its solution which is an important step in the analysis becomes thus extremely challenging. This required us to develop new techniques to break these difficulties, which while being accommodated 
to this specific scenario, may be of independent interest. We refer the interested reader to Appendix B devoted to the detailed proofs of our main results.       
To summarize, the main contributions of this work can be listed as follows:
\begin{enumerate}
	\item We derive sharp characterizations of the MSE, SEP and goodput expressions for the RLS, LS and Box-RLS under imperfect CSI. Our expressions shed light on interesting relationships between MSE and SEP for $M$-PAM modulation and under imperfect CSI.  


	\item We determine the optimal power allocation between training and data symbols when SEP or MSE are used as target criteria. 
	\item We optimize the power and the number of pilot symbols to maximize the goodput for all studied decoders.
	 

\end{enumerate}

To the best of our knowledge, none of the above was previously derived for the $M$-PAM case in the presence of imperfect CSI.
\subsection{Paper Organization}
This paper is organized as follows. The system model is presented in Section II. In Section III, we discuss channel estimation, the properties of the estimator and the channel estimation error, as well as symbol estimation. The MSE/SEP of symbol estimation is derived in Section IV by applying the CGMT. These expressions are then validated through an assortment of numerical results and  leveraged to find optimal power strategies in Section V. The key ingredient of the analysis which is the CGMT is reviewed in Appendix A. The proofs for the derived MSE and SEP are presented in Appendix B and Appendix C for the Boxed RLS and un-boxed RLS, respectively.
\subsection{Notations}
Scalars are denoted by lower-case letters (e.g., $\alpha$), column vectors are represented by boldface lowercase
letters (e.g., $\xv$), whereas matrices are denoted by boldface upper-case letters (e.g., $\Xm$). The notations $(\cdot)^T$ and $(\cdot)^{-1}$ denote the transpose and inversion operators, respectively. The $j$-th element of vector ${\bf x}$ will be denoted by ${x}_j$.  The symbol $\Id_N$ is used to represent the identity matrix of dimension $N \times N$. We use the standard notation $\mathbb{P}[\cdot]$ and $\mathbb{E}[\cdot]$ to denote probability and expectation. We write $X \thicksim p_X$ to denote that a random variable $X$ has a probability density function (pdf) $p_X$. In particular, $G \thicksim  \mathcal{N}(\mu, \sigma^2)$ implies that $G$ has a Gaussian (normal) distribution of mean $\mu$ and variance $\sigma^2$. $p(x)= \frac{1}{\sqrt{2 \pi}} e^{\frac{-x^2}{2}}$ and $Q(x) = \frac{1}{\sqrt{2\pi}}\int_{x}^{\infty} e^{-t^2/2} {\rm{d}} t$ denote the pdf of a standard normal distribution and its associated $Q$-function respectively.
Finally, $\| \cdot \|$ indicates the Euclidean norm (i.e., the $\ell_2$-norm) of a vector and $\| \cdot \|_\infty$ represents its $\ell_\infty$-norm.\footnote{For a vector $\xv$, $\| \xv \|_\infty = \max_{j} | x_j|$.}
\begin{table}[ht]
\centering
\caption{Summary of Main Variables}
\begin{tabular}{|l|l|}
\hline
\textbf{Symbol} &  \quad \quad \quad \quad \textbf{Meaning}  \\
\hline
$T_{\tiny{p}}$ & number of pilot symbols \\
\hline
$T_ {d}$ & number of data symbols \\ 
\hline
$T$ & total number of symbols \\
\hline
$M$ & size of the PAM signal \\ 
\hline
$ \mathcal{E}$ & average power of non-normalized $M$-PAM signal \\
\hline
$\rho_ {p}$ & average training power \\
\hline
$\rho_ {d}$ & average data power \\
\hline
$\rho$ & average received power per receive antenna \\
\hline
$\rho_\text{eff}$ & effective SNR \\
\hline
$\alpha$ & data power ratio \\
\hline
$K$ & Number of transmit (Tx) antennas \\
\hline
$N$ & Number of receive (Rx) antennas\\
\hline
$\delta$ & ratio of Rx to Tx antennas $N/K$ \\
\hline
$ \lambda$ & regularization coefficient \\
\hline
$t$ & box-constraint threshold  \\
\hline
$ \Hm$ & channel matrix \\
\hline
{{$ \widehat{\Hm}$}} & estimated channel matrix \\
\hline
$ \Deltam$ & channel estimation error matrix \\
\hline
$\sigma_{ {\Delta}}^2$ & variance of error matrix  \\
\hline
$\sigma_{ {\hat{H}}}^2$ & variance of estimated channel matrix  \\
\hline
$\Xm_ {p}$ & pilot symbols matrix\\
\hline
$\xv_{0}$ & transmitted data symbol vector \\
\hline
$\widehat{\xv}$ & estimated data symbol vector \\
\hline
$\yv$ & received data symbol vector \\
\hline
$\zv$ & noise vector \\
\hline
\end{tabular}
\label{tab:1}
\end{table}
\section{System Model}
\label{sec:model}
We consider a flat block-fading massive MIMO system with $K$ transmitter antennas and $N$ receiver antennas. The transmission consists of $T$ symbols that occur in a time interval within which the channel is assumed to be static. A number $T_ {p}$ pilot symbols (for channel estimation) occupy the first part of the transmission interval with power, $\rho_ {p}$. The remaining part is devoted for transmitting $T_ {d}= T - T_ {p}$ data symbols with power, $\rho_{ {d}}$. Figure \ref{fig:channels} illustrates the  system model. It implies from conservation of time and energy that:
\begin{equation}
\label{eq:energy conseve}
\rho_ {p} T_ {p} + \rho_{ {d}} T_ {d}= \rho T,
\end{equation}
where $\rho$ is the expected average power.
Alternatively, we have $ \rho_ {d} T_ {d} = \alpha \rho T$, where $\alpha \in (0,1)$ is the ratio of the power allocated to the data so that 
\begin{equation}
\label{eq:data power}
\rho_ {p} T_ {p} =  (1- \alpha) \rho T.
\end{equation}
The received signal model for the \emph{data} transmission phase is given by
\begin{equation}
\label{eq:data model}
\yv = \sqrt{\frac{\rho_ {d}}{K}}  \Hm \xv_{0}+ \zv,
\end{equation}
where $\yv \in \mathbb{R}^{N}$ is the received data symbol vector, $\xv_0 \in \mathbb{R}^{K}$ is the transmitted data symbol vector, $\Hm\in \mathbb{R}^{N\times K}$ is a channel matrix with i.i.d. Gaussian elements $h_{ij}\thicksim \mathcal{N}\left(0, 1\right)$, and $\zv \in \mathbb{R}^{N}$ stands for the additive Gaussian noise at the receiver with i.i.d. elements of mean $0$ and variance 1. It is assumed that $\xv_0$ has i.i.d. $M$-PAM symbols normalized to have unit variance ($\mathbb{E}[\xv_0 \xv_0^T] = \Id_K$), such that each transmit antenna sends a data symbol $x_{0,j}$ that takes values (with equal probability $1/M$) in the set:
\begin{equation}
x_{0,j} \in \mathcal{C}:= \bigg\{\pm\frac{1}{\sqrt{\mathcal{E}}}, \pm\frac{3}{\sqrt{\mathcal{E}}}, \cdots, \pm\frac{(M-1)}{\sqrt{\mathcal{E}}}  \bigg\}, j=1,2,\cdots,K,
\end{equation}
{{where $\mathcal{E} = \frac{M^2-1}{3}$ is the average power of the non-normalized $M$-PAM signal}}, $M=2^{b}$ being the modulation order and $b$ the number of bits carried by each symbol.

As the channel matrix $\Hm$ is unknown to the receiver, a training phase during which the transmitter sends $T_{ {p}} \geq K$ \emph{pilot} symbols is dedicated. 
The received signals corresponding to this phase can be modeled as
\begin{equation}
\label{eq:pilot model}
\Ym_ {p} = \sqrt{\frac{\rho_ {p}}{K}}  \Hm \Xm_ {p} + \Zm_ {p},
\end{equation}
where $\Ym_ {p}\in \mathbb{R}^{N \times T_ {p}}$ is the received signal matrix, $\Xm_ {p}\in \mathbb{R}^{K \times T_ {p}}$ is the matrix of transmitted pilot symbols, and $\Zm_ {p}\in \mathbb{R}^{N \times T_ {p}}$  stands for the additive Gaussian noise with $\mathbb{E}[\Zm_{ {p}} \Zm_{ {p}}^T] =T_{ {p}} \Id_N$.

For the reader convenience, we summarize in Table \ref{tab:1} the notation symbols of the parameters used in this paper.
	\begin{figure}
		\begin{center}
		\includegraphics[width=9.0cm]{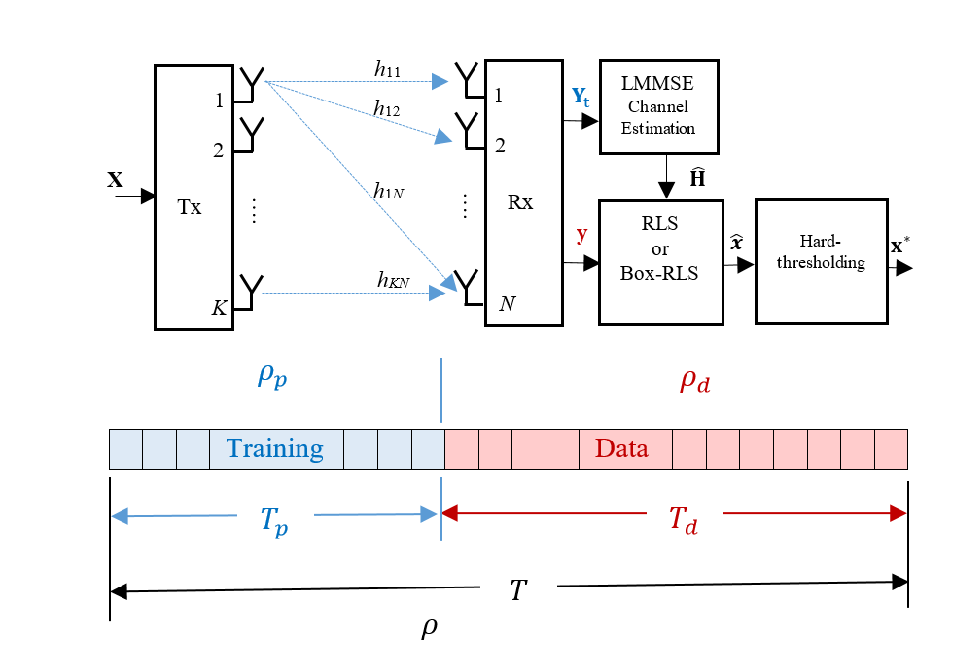}
		\caption{\scriptsize{Massive MIMO system with training-based transmission.}}
		\label{fig:channels}
		\end{center}
	\end{figure}
\section{MIMO Symbols Detection under LMMSE Channel Estimation}
\label{sec:ch estimate}
\subsection{LMMSE Channel Estimation}
%
Based on the knowledge of $\Ym_ {p}$ from \eqref{eq:pilot model}, the LMMSE channel estimate  is given by \cite{kay1993fundamentals}
\begin{align}
\label{eq:LMMSE}
\widehat{\Hm} &= \sqrt{\frac{K}{\rho_ {p}}}\Ym_ {p} \Xm_{ {p}}^T \left( \frac{K}{\rho_ {p}} \Id_ {K}+ \Xm_{ {p}} \Xm_{ {p}}^T \right)^{-1},\nonumber \\
& = \Hm - \Deltam,
\end{align}
where $\Deltam$ is the zero-mean channel estimation error matrix, which is independent of $\widehat{\Hm}$, as per the orthogonality principle of the LMMSE estimation \cite{hassibi2003much}, \cite{kay1993fundamentals}. For MIMO channels with i.i.d. entries, it has been proved that the optimal $\Xm_ {p}$ that minimizes the estimation mean squared error under a total power constraint satifies \cite{hassibi2003much}
\begin{equation}
\label{eq:pilot orthogonality}
 \Xm_ {p} \Xm_{ {p}}^{T}= T_ {p}\Id_ {K}.
\end{equation}
For the above condition to hold, the number of training symbols should be greater than or equal to $K$. Moreover, under \eqref{eq:pilot orthogonality},  the channel estimate $\widehat{\Hm}$ has i.i.d. zero-mean Gaussian entries with mean $0$ and variance  $\sigma_{ {\hat{H}}}^2 = 1-\sigma_{ {\Delta}}^2$ \cite{hassibi2003much}, with
\begin{equation}
\label{eq:error matrix variance}
\sigma_{ {\Delta}}^2 = \frac{1}{ 1+ \frac{\rho_ {p}}{K} T_ {p} },
\end{equation}
being the variance of each element in $\Deltam$.
It appears from \eqref{eq:error matrix variance} that channel estimation error decreases with the pilot energy given by $ \rho_ {p} T_ {p}$. 
\subsection{Symbol Detection under Imperfect Channel Estimation}


With the channel estimate  $\widehat{\Hm}$ at hand, the receiver can proceed to the recovery of the transmitted symbols. The optimal decoder which minimizes the probability of error is the maximum likelihood (ML)  decoder under perfect channel knowledge which is given by:
\begin{equation}
	\widehat{\xv}_{\text{\tiny{ML}}} = \underset{{\xv \in \mathcal{C}^{ {K}}}}{\argmin} \big\| \yv -\sqrt{\frac{\rho_ {d}}{K}} \Hm \xv \big\|^2.
	\label{eq:ML}
\end{equation} 
As can be seen, the ML decoder involves a combinatorial optimization problem. It presents thus a prohibitively high computational complexity, especially when the system dimensions become large as envisioned by current communication systems. 
 To overcome this issue, suboptimal strategies that require less computational complexity are in general used. They often proceed in two steps. First, a real-valued approximation of the transmitted symbol is obtained. This estimate is then hard-thresholded in a second step to produce the final estimate. In this work, the focus is on the regularized least squares (RLS) and the Box-regularized least squares (Box-RLS) decoders. 
    

The RLS decoder is based on regularizing the cost in \eqref{eq:ML} and relaxing the finite-alphabet constraint, thus leading to:
\begin{subequations}
\begin{align}\label{eq:LS matrix}
\widehat{\xv} &= \argmin_{\xv \in \mathbb{R}^{ {K}}}\| \yv -  \sqrt{\frac{\rho_ {d}}{K}}  \widehat{\Hm}  \xv \|^2 + \lambda \rho_ {d}\| \xv \|^2,  \\
&= (\Am^T \Am + {\lambda} \rho_ {d} \Id_ {K})^{-1} \Am^T \yv,\label{eq:LS matrixb}\\
	{x}_j^* &= \argmin_{s \in \mathcal{C}} \bigg| \frac{\widehat{x}_j}{B} -s \bigg|, j=1,2,\cdots,K,\label{eq:LS matrixc}
\end{align}
\end{subequations}
where $\lambda \geq 0$ is the regularization coefficient, $\Am=  \sqrt{\frac{\rho_ {d}}{K}} \widehat{\Hm}$ and $B$ is a normalization constant, the value of which will be suggested from our analysis so as  to remove the bias of the decoder \cite{practical_guide} \footnote{It follows from our analysis that $\widehat{x}_j$ behaves as $B {x}_{0,j}$ plus some independent Gaussian noise where $B$ is some constant depending on the regularization factor, the data and channel estimation powers. It is thus sensible to divide $\widehat{x}_{j}$ by $B$ to remove the induced bias.}. Note that the optimization in \eqref{eq:LS matrixc} simply selects the symbol value $s$ that is closest to the solution $\frac{\widehat{ x}_j}{B}$ among a total of $M$ possible choices. As can be seen from \eqref{eq:LS matrixb}, the elements of the solution $\widehat{\xv}$ may take large values in high-noise conditions or poor channel estimation scenarios. This motivates the Box-RLS decoder  \cite{yener2002cdma, tan2001constrained, ma2002quasi}  given by
\begin{subequations}
\begin{align}\label{eq:Box-RLS matrix}
\widehat{\xv} &= \argmin_{\|\xv\|_{\infty}\leq t} \|\yv -  \sqrt{\frac{\rho_ {d}}{K}} \widehat{\Hm}   \xv \|^2 + \lambda \rho_ {d}\| \xv\|^2, \\
	&  { x}_j^* = \argmin_{s \in \mathcal{C}} \bigg|\frac{\widehat{x}_j}{B} -s \bigg|, j=1,2,\cdots,K,\label{eq:BLS matrixb}
\end{align}
\end{subequations}
 which is based on relaxing the finite-alphabet constraint to the convex constraint $\xv \in [-t,t]^K$, where $t$ is a fixed threshold that can be optimally tailored according to the propagation scenario. 
\subsection{Performance Metrics}
This work considers the performance evaluation of the RLS and the Box-RLS decoders in terms of three different performance metrics, which are:

\noindent\textbf{Mean Squared Error}: A natural and heavily used measure of performance is the reconstruction \textit{mean squared error} (MSE), which measures the deviation of $\widehat{\xv}$ from the true signal $\xv_0$. This assesses the performance of the \textit{first} step of the decoding algorithms. Formally, the MSE is defined as
\begin{equation}
{\rm{MSE}} := \frac{1}{K}\| \widehat{\xv}  - \xv_0\|^2.
\end{equation}
\textbf{Symbol Error Probability}: The symbol error rate (SER) characterizes the performance of the detection process and is defined as:
\begin{equation}
{\rm{SER}} := \frac{1}{K} \sum_{j=1}^{K} {\bf 1}_{\{{x}^*_j \neq x_{0,j} \}},
\end{equation}
where ${\bf 1}_{\{\cdot\}}$ indicates the indicator function.\\
In relation to the SER is the symbol error probability (SEP) which is defined as the expectation of the SER averaged over the noise, the channel and the constellation. Formally, the symbol error probability denoted by ${\rm SEP}$ is given by:
\begin{equation}
	{\rm SEP}:=\mathbb{E}[{\rm SER}] =\frac{1}{K}\sum_{j=1}^K \mathbb{P}\left[{x}^*_j \neq x_{0,j} \right].
	\label{eq:SEP}
\end{equation}
\textbf{Goodput}: The goodput is a performance measure that accounts for the amount of useful data transmitted, divided by the time it takes to \textit{successfully} transmit it. The amount of data considered excludes protocol overhead bits as well as retransmitted data packets \cite{grote2008ieee}. In our context, it can be defined as
\begin{equation}\label{good_def}
	G := \left( \frac{T -T_ {p}}{T}\right) \left(1- {\rm SEP} \right).
\end{equation}
Goodput and throughput are connected performance parameters in that the throughput can be obtained by simply dividing the goodput by the data transmission rate.  
\section{Analysis of the Mean Squared Error (MSE) and Symbol-Error Probability (SEP)}
\label{sec:BER}
In this section, we derive asymptotic expressions of the MSE and SEP for the RLS and Box-RLS decoders. Particularly, we show that these metrics can be approximated by deterministic quantities that involve the power and time devoted for data and training transmissions. 
Our analysis builds upon the CGMT framework. For the RLS decoder, the same results could have been obtained using tools from random matrix theory as the decoder possesses a closed-form expression. However, since the use of the CGMT framework is more adapted to the Box-RLS decoder that cannot be expressed in closed-form, we rely in this work on the CGMT framework for both decoders for the sake of a unified presentation. 

Prior to stating our main results, we shall introduce the following assumptions which describe the considered growth rate regime:
\subsection{Technical Assumptions} 
\begin{assumption}
 We consider the asymptotic regime in which the system dimensions $K$ and $N$ grow simultaneously to infinity at a fixed ratio $$\delta :=\frac{N}{K} \in (0,\infty).$$ 
\end{assumption}
\begin{assumption}
 We assume a fixed normalized coherence interval $$\tau:= \frac{T}{K}\in (1,\infty).$$ 
and that the pilot and data symbols grow proportionally with $K$, where:
	$$\tau_{ {p}} :=\frac{T_{ {p}}}{K} \in [1,\infty),$$ and $$\tau_{ {d}} :=\frac{T_{ {d}}}{K},$$ are fixed and denote the normalized number of pilot and data symbols, respectively.  
\end{assumption}
In the sequel, we leverage the statistical distribution of the channel and the channel estimate as well as the asymptotic regime specified in Assumption 1 and 2 to provide asymptotic approximations of  the MSE and SEP for RLS and Box-RLS.  We use the standard notation ${\rm{plim}}_{n\to\infty}\ X_n = X$ to denote that a sequence of random variables $X_n$ converges in probability towards a constant $X$.
\subsection{MSE and SEP Analysis for RLS}
We provide herein asymptotic approximations of the MSE and SEP for RLS under imperfect channel state information.  
The derived closed form expressions are given in Theorem \ref{MSE-RLS}, and Theorem \ref{SER-RLS} while the proof is given in Appendix~\ref{sec:unbox}.
\begin{theorem}(MSE for RLS): \label{MSE-RLS} \normalfont
Fix $\lambda>0$, $\delta>0$, and let $\widehat{\xv}$ be a minimizer of the RLS problem in (\ref{eq:LS matrix}). 
Define
\begin{equation*}
\Upsilon(\lambda,\delta)=\frac{-(\delta-\frac{\lambda}{ \sigma_{ {\hat{H}}}^2}-1)+\sqrt{(\delta-\frac{\lambda}{ \sigma_{ {\hat{H}}}^2}-1)^2+4\frac{\lambda}{\sigma_{ {\hat{H}}}^2}\delta}}{2\delta},
\end{equation*}
\label{th:MSE_RLS}
and
	\begin{equation*}
		\theta_\star = \sqrt{\frac{\rho_ {d} \sigma_{ {\hat{H}}}^2 \left(\frac{\Upsilon(\lambda,\delta)}{1+\Upsilon(\lambda,\delta)}\right)^2+  \rho_ {d} \sigma_{ {\Delta}}^2 +1 }{\delta-\frac{1}{(1+\Upsilon(\lambda,\delta))^2}}}.
		\end{equation*}
	Then, under Assumption 1 and Assumption 2, it holds:
\begin{equation}\label{RLS-MSE}
\underset{K\to\infty}{{\rm{plim}}} {\rm{ MSE}} = \frac{1}{\rho_ {d} \sigma_{ {\hat{H}}}^2} \biggl(\delta \theta_{\star}^2  - \rho_ {d}   \sigma_{ {\Delta}}^2 -1 \biggr),
\end{equation}
\label{th:MSE_RLS}
\end{theorem}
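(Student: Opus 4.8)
The plan is to cast the RLS problem as a primary optimization in the sense of the CGMT (Appendix A), after separating out the randomness that is independent of the estimated channel. First I would insert the data model $\yv=\sqrt{\rho_{\mathsmaller{d}}/K}\,\Hm\xv_0+\zv$ and the decomposition $\Hm=\widehat{\Hm}+\Deltam$ from \eqref{eq:LMMSE} into the cost \eqref{eq:LS matrix} and change variables to the error $\wv:=\xv-\xv_0$. The residual becomes $-\sqrt{\rho_{\mathsmaller{d}}/K}\,\widehat{\Hm}\wv+\vv$ with $\vv:=\sqrt{\rho_{\mathsmaller{d}}/K}\,\Deltam\xv_0+\zv$. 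The decisive structural point, which I would establish at the outset, is that by the orthogonality principle $\Deltam$ is independent of $\widehat{\Hm}$, so conditionally on $(\xv_0,\Deltam,\zv)$ the sole source of randomness in the cost is the Gaussian matrix $\widehat{\Hm}$, while $\vv$ is conditionally Gaussian with i.i.d.\ entries whose variance concentrates to $\sigma_{\rm eff}^2:=\rho_{\mathsmaller{d}}\sigma_{\mathsmaller{\Delta}}^2+1$ because $\|\xv_0\|^2/K\to1$. This turns the task into a standard noisy linear-regression instance with effective noise level $\sigma_{\rm eff}^2$, to which the usual CGMT machinery applies.

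I would then write the squared residual variationally, $\|\cdot\|^2=\max_{\uv}\{2\uv^{\transp}(\cdot)-\|\uv\|^2\}$, and factor $\widehat{\Hm}=\sigma_{\mathsmaller{\hat{H}}}\Gm$ with $\Gm$ having i.i.d.\ $\mathcal N(0,1)$ entries, exposing the bilinear Gaussian form $\uv^{\transp}\Gm\wv$. The CGMT replaces it by $\|\wv\|\,\gv^{\transp}\uv+\|\uv\|\,\hv^{\transp}\wv$ for independent $\gv\in\RR^{N},\hv\in\RR^{K}$. Before doing so I would verify the theorem's hypotheses: the cost is convex in $\wv$ and concave in $\uv$ (the penalty $\lambda\rho_{\mathsmaller{d}}\|\xv_0+\wv\|^2$ is convex, the remaining terms affine/quadratic), and the optimizers can be confined to a compact ball by an a-priori bound on $\|\wv\|$ and $\|\uv\|$, which legitimizes the transfer from the auxiliary to the primary problem.

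The heart of the argument is the scalarization of the auxiliary optimization. Optimizing over the direction and magnitude of $\uv$, then over the direction of $\wv$ at fixed norm, and replacing inner products by their concentrated values ($\|\gv\|^2/N\to1$, $\|\hv\|^2/K\to1$, $\|\vv\|^2/N\to\sigma_{\rm eff}^2$, and $\gv^{\transp}\vv,\hv^{\transp}\xv_0\to0$ relative to the norms), I expect the auxiliary objective to collapse to the deterministic two-variable saddle problem, in the normalized error $\bar a$ (with $\bar a^2=\|\wv\|^2/K$) and normalized dual magnitude $\bar r=\|\uv\|/\sqrt K$,
\[
\min_{\bar a\ge0}\max_{\bar r\ge0}\left\{2\bar r\sqrt{\delta\big(\rho_{\mathsmaller{d}}\sigma_{\mathsmaller{\hat{H}}}^2\bar a^2+\sigma_{\rm eff}^2\big)}-2\bar a\sqrt{\rho_{\mathsmaller{d}}\big(\sigma_{\mathsmaller{\hat{H}}}^2\bar r^2+\lambda^2\rho_{\mathsmaller{d}}\big)}-\bar r^2+\lambda\rho_{\mathsmaller{d}}\big(1+\bar a^2\big)\right\}.
\]
Here the saddle value of $\bar a^2$ is the limiting MSE, and $\theta_\star$ in the statement is precisely the concentrated value $\sqrt{(\rho_{\mathsmaller{d}}\sigma_{\mathsmaller{\hat{H}}}^2\bar a^2+\sigma_{\rm eff}^2)/\delta}$ of the effective-noise term at the saddle; this makes the identity $\delta\theta_\star^2=\rho_{\mathsmaller{d}}\sigma_{\mathsmaller{\hat{H}}}^2\,\mathrm{MSE}+\rho_{\mathsmaller{d}}\sigma_{\mathsmaller{\Delta}}^2+1$ equivalent to \eqref{RLS-MSE}, which is what I would ultimately need.

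Finally I would solve the scalar saddle point by setting the derivatives in $\bar a$ and $\bar r$ to zero and eliminating $\bar r$; after clearing radicals this reduces to a fixed-point equation, $\delta\,\Upsilon-\tfrac{\Upsilon}{1+\Upsilon}=\lambda/\sigma_{\mathsmaller{\hat{H}}}^2$, whose positive root is exactly $\Upsilon(\lambda,\delta)$ (equivalently $\tfrac{\Upsilon}{1+\Upsilon}=\tfrac{\lambda}{\sigma_{\mathsmaller{\hat{H}}}^2}m(-\lambda/\sigma_{\mathsmaller{\hat{H}}}^2)$ with $m$ the Marchenko--Pastur Stieltjes transform); substituting back yields the stated closed form for $\theta_\star$ and hence \eqref{RLS-MSE}. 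I expect the main obstacle to lie not in this algebra but in the scalarization/saddle-point step: one must justify exchanging the inner min and max (convexity--concavity plus Sion's theorem), control the truncation event that keeps the maximizing dual magnitude nonnegative, and argue uniqueness of the scalar saddle so that the CGMT license to transfer the auxiliary optimum to the MSE of the primary problem is valid. As a cross-check I would exploit the fact that RLS has the closed form $\wv=-\lambda\rho_{\mathsmaller{d}}\Mm^{-1}\xv_0+\Mm^{-1}\Am^{\transp}\vv$ with $\Mm=\Am^{\transp}\Am+\lambda\rho_{\mathsmaller{d}}\Id_{\mathsmaller{K}}$: a deterministic-equivalent evaluation of $\tfrac1K\trace(\Mm^{-2})$ and $\tfrac1K\trace(\Mm^{-2}\Am^{\transp}\Am)$ through the same Marchenko--Pastur law reproduces \eqref{RLS-MSE} and the definition of $\Upsilon$, confirming the result independently of the CGMT.
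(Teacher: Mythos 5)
Your proposal is correct, but it takes a genuinely different route from the paper's proof. The paper never conditions on the estimation error: in Appendix B it stacks the unknowns into the $2K$-dimensional vector $\vv=[\sigma_{\hat{H}}\sqrt{\rho_d}\,\wv^{\transp},\,-\sigma_{\Delta}\sqrt{\rho_d}\,\xv_0^{\transp}]^{\transp}$ and the randomness into the $N\times 2K$ Gaussian matrix $\Gm=[\widetilde{\Hm}\ \widetilde{\Deltam}]$, so the CGMT is invoked on the enlarged bilinear form $\uv^{\transp}\Gm\vv$; the RLS theorem (Appendix C) then follows by dropping the box constraint, scalarizing via $\sqrt{\chi}=\min_{r>0}\frac{1}{2r}+\frac{r\chi}{2}$, and solving the stationarity system \eqref{all_derv} to get $\theta=\frac{\Upsilon(\lambda',\delta)}{2\lambda'}\beta$ and then $\theta_\star$. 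You instead exploit the LMMSE orthogonality principle to fold $\sqrt{\rho_d/K}\,\Deltam\xv_0+\zv$ into an effective noise vector independent of $\widehat{\Hm}$ --- exactly i.i.d.\ Gaussian given $\xv_0$, with per-entry variance $\rho_d\sigma_{\Delta}^2\|\xv_0\|^2/K+1\to\sigma_{\rm eff}^2$ --- and apply the CGMT conditionally, with only the $N\times K$ matrix $\widehat{\Hm}$ in the bilinear term; this reduces the imperfect-CSI problem to a perfect-CSI one at noise level $\sigma_{\rm eff}^2=1+\rho_d\sigma_{\Delta}^2$. I verified that your two-scalar saddle is consistent with the theorem: its stationarity conditions force $\bar r/\theta=\lambda/(\sigma_{\hat{H}}^2\Upsilon)$ where $\Upsilon$ solves your fixed point $\delta\Upsilon-\frac{\Upsilon}{1+\Upsilon}=\lambda/\sigma_{\hat{H}}^2$ (algebraically identical to the quadratic defining $\Upsilon(\lambda,\delta)$ in the statement), and eliminating $\bar r$ yields $\theta^2\bigl[\delta-\frac{1}{(1+\Upsilon)^2}\bigr]=\rho_d\sigma_{\hat{H}}^2\bigl(\frac{\Upsilon}{1+\Upsilon}\bigr)^2+\rho_d\sigma_{\Delta}^2+1$ together with ${\rm MSE}=\bar a_\star^2=\frac{1}{\rho_d\sigma_{\hat{H}}^2}(\delta\theta_\star^2-\rho_d\sigma_{\Delta}^2-1)$, i.e.\ exactly \eqref{RLS-MSE}. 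What the paper's heavier $2K$-dimensional stacking buys is a single framework that also carries the Box-RLS analysis, where the coordinate-wise structure of the AO and the uniqueness/transfer arguments are essential for the per-symbol (SEP) results; what your route buys is a shorter, more elementary derivation for the unconstrained RLS that makes the effective-SNR structure of the answer transparent, plus an independent random-matrix cross-check via the closed form $\wv=-\lambda\rho_d\Mm^{-1}\xv_0+\Mm^{-1}\Am^{\transp}\vv$, which the paper explicitly acknowledges as possible but does not carry out. The remaining work in your sketch (justifying the min-max exchange, compactness truncation, and uniqueness of the scalar saddle so that the AO conclusion transfers to the MSE of the PO) is real but standard, and corresponds to the arguments the paper develops in Appendix B.
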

\begin{proof}
The proof of  Theorem \ref{MSE-RLS} is given in Appendix~\ref{sec:unbox}.
\end{proof}
It is worth mentioning that the above formula is not restricted to $\xv_0$ belonging to $M$-PAM constellation and is valid for $\xv_0$ from any distribution provided that $\xv_0$ is normalized to have unit-variance. However, assuming that $\xv_0$ is drawn from $M$-PAM constellations, the SEP can be approximated as:
\begin{theorem}(SEP of RLS):\label{SER-RLS} \normalfont
Under the same setting of Theorem~\ref{MSE-RLS}, the SEP under $M$-PAM modulation of the RLS decoder employing the normalization constant
	$$
	B=\frac{\sigma_{ {\hat{H}}}^2\frac{\beta_\star}{\theta_\star}}{\sigma_{ {\hat{H}}}^2\frac{\beta_\star}{\theta_\star}+2\lambda},
	$$
	with $\beta_\star=\frac{2\lambda\theta_\star}{\sigma_{ {\hat{H}}}^2\Upsilon(\lambda,\delta)}$,
	converges to:
\begin{align}\label{SER-RLS1}
	&\underset{K\to\infty}{{\rm{plim}}} {\rm{SEP}}  = \widetilde{\rm {SEP}}_{\rm RLS},
\end{align}
where
	\begin{align}
		\widetilde{\rm {SEP}}_{\rm RLS}=2\bigg(1- \frac{1}{M}\bigg) Q \biggl( \frac{\sqrt{\rho_{ {d}} \sigma_{ {\hat{H}}}^2}}{\sqrt{\mathcal{E}}\theta_\star}  \biggr).
	\end{align}
	where $\theta_\star$ is defined in Theorem \ref{MSE-RLS}.
\label{th:SER_RLS}
\end{theorem}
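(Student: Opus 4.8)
The plan is to build on the coordinate decoupling that the CGMT already delivers in the proof of Theorem~\ref{MSE-RLS} (Appendix~\ref{sec:unbox}). The auxiliary optimization (AO) attached to the RLS problem \eqref{eq:LS matrix} separates across the $K$ coordinates, so that the empirical joint law of the pairs $(\widehat{x}_j, x_{0,j})$ converges to the law of a scalar pair $(\widehat{X}, X_0)$. For the unconstrained ridge penalty the associated scalar minimizer is an affine function of a single noisy observation, which produces the representation $\widehat{X} = B\,X_0 + \sigma_{\rm eff}\,G$ with $G\sim\mathcal{N}(0,1)$ independent of $X_0$. First I would read off from the AO saddle point the two deterministic constants of this law — the signal gain and the noise scale $\sigma_{\rm eff}$ — and check that the normalization constant $B$ quoted in the statement is exactly the signal gain, so that dividing by $B$ removes the bias and gives the unbiased decision statistic $\widehat{X}/B = X_0 + (\sigma_{\rm eff}/B)\,G$.

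Next I would pin down the effective noise level. Using the saddle-point parameters $\theta_\star$ and $\beta_\star=\tfrac{2\lambda\theta_\star}{\sigma_{\mathsmaller{\hat{H}}}^2\,\Upsilon(\lambda,\delta)}$ of the scalar min-max problem, a direct computation shows that after the normalization the residual noise has standard deviation $\sigma_{\rm eff}/B = \theta_\star/\sqrt{\rho_{\mathsmaller{d}}\,\sigma_{\mathsmaller{\hat{H}}}^2}$. The explicit forms of $B$ and $\beta_\star$ in the statement are precisely what force the signal coefficient to equal $B$ and fix this noise scale; this is the step in which the closed form of $\Upsilon(\lambda,\delta)$ and the relation to $\theta_\star$ from Theorem~\ref{MSE-RLS} are used.

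Given the decoupled model, $\widehat{X}/B$ is a classical $M$-PAM observation in additive white Gaussian noise, and the hard-thresholding in \eqref{eq:LS matrixc} is exactly minimum-distance (nearest-symbol) decoding. The constellation points are spaced by $2/\sqrt{\mathcal{E}}$, so the decision half-width is $1/\sqrt{\mathcal{E}}$. Averaging the standard per-symbol error over the $M$ equiprobable points — each of the two edge symbols erring with probability $Q(\cdot)$ and each of the $M-2$ interior symbols with probability $2Q(\cdot)$ — yields $\tfrac{1}{M}\big[2\cdot Q(\cdot)+(M-2)\cdot 2Q(\cdot)\big]=2\big(1-\tfrac1M\big)Q(\cdot)$. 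Substituting the half-width over the noise scale gives the argument $\dfrac{1/\sqrt{\mathcal{E}}}{\theta_\star/\sqrt{\rho_{\mathsmaller{d}}\sigma_{\mathsmaller{\hat{H}}}^2}}=\dfrac{\sqrt{\rho_{\mathsmaller{d}}\sigma_{\mathsmaller{\hat{H}}}^2}}{\sqrt{\mathcal{E}}\,\theta_\star}$, which is the claimed $\widetilde{\rm {SEP}}_{\rm RLS}$.

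The main obstacle is upgrading the decoupling to a statement about the SEP functional itself, since $\frac{1}{K}\sum_j \mathbf{1}_{\{x_j^*\neq x_{0,j}\}}$ is an average of \emph{discontinuous} indicators of $\widehat{\xv}$, whereas the CGMT transfers cleanly only for Lipschitz or quadratic observables. I would bridge this by noting that the limiting law of $\widehat{X}/B - X_0$ is Gaussian and hence atomless, so the decision boundaries are almost-sure continuity points of the limiting distribution; convergence of the empirical measure of the residuals (as supplied by the CGMT argument) then forces convergence of the error count, for instance via a sandwich of the indicator between Lipschitz approximants whose gap vanishes in the limit. The supporting checks — that the AO saddle point $(\theta_\star,\beta_\star)$ is well defined and bounded, and that the stated $B$ exactly cancels the asymptotic bias so that \eqref{eq:LS matrixc} operates on an unbiased statistic — are routine but necessary consequences of the analysis already carried out for Theorem~\ref{MSE-RLS}.
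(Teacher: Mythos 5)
Your proposal is correct and follows essentially the same route as the paper: the paper's Appendix~\ref{sec:unbox} reuses the Box-RLS machinery of Appendix~\ref{sec:box}, namely the CGMT scalarization giving the affine per-coordinate representation (whose signal gain is exactly the stated $B$ and whose normalized noise scale is $\theta_\star/\sqrt{\rho_{\mathsmaller{d}}\sigma_{\mathsmaller{\hat{H}}}^2}$), the Lipschitz-to-indicator sandwich of Lemma~\ref{lem:indicator} exploiting the atomless limiting law, and the edge/interior symbol averaging that produces $2(1-\tfrac{1}{M})Q\bigl(\sqrt{\rho_{\mathsmaller{d}}\sigma_{\mathsmaller{\hat{H}}}^2}/(\sqrt{\mathcal{E}}\theta_\star)\bigr)$.
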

\begin{proof}
	A sketch of the proof is provided in Appendix \ref{sec:unbox}. 
\end{proof}
Before proceeding further, we validate the approximations provided in Theorem \ref{th:MSE_RLS} and Theorem \ref{th:SER_RLS}. To this end, we report in Figure \ref{xz2} and \ref{fig221} the MSE and SEP for the RLS decoder when $K=400, \delta=1.2,  T_ {p} = 456, T=1000$,  $\alpha = 0.5$ and $M=2$ (corresponding to  Binary Phase Shift Keying (BPSK) modulation), as a function of the average power $\rho$. As seen, the simulation results, averaged over 500 realizations of the channel,  show a perfect agreement with the theoretical results. 

\begin{figure}
\begin{center}
%
%
\definecolor{mycolor1}{rgb}{0.00000,0.45098,0.74118}%
\definecolor{mycolor2}{rgb}{0.85098,0.32941,0.10196}%
\definecolor{mycolor3}{rgb}{0.63529,0.07843,0.18431}%
\definecolor{mycolor4}{rgb}{1.00000,0.00000,1.00000}%
\definecolor{OliveGreen}{rgb}{0,0.5,0}%
\begin{tikzpicture}[scale=1,font=\small]
    \renewcommand{\axisdefaulttryminticks}{4}
    \tikzstyle{every major grid}+=[style=densely dashed]
    \tikzstyle{every axis y label}+=[yshift=-10pt]
    \tikzstyle{every axis x label}+=[yshift=5pt]
    \tikzstyle{every axis legend}+=[cells={anchor=west},fill=white,
        at={(0.98,0.98)}, anchor=north east, font=\tiny ]
        
       \begin{axis}				[
      xmin=0,
       xmax=35,
      ymin=0,
      ymax=0.9,
      grid=major,
      scaled ticks=true,
       xlabel={$\rho$ (dB)},
       ylabel={$\rm MSE$},
   			]
\addplot [color=red, line width=1.0pt]
  table[row sep=crcr]{%
0	0.871212202685771\\
1	0.833246210722417\\
2	0.789615492297682\\
3	0.741123719713622\\
4	0.68883449134569\\
5	0.633901262880953\\
6	0.577435940482042\\
7	0.520444321203\\
8	0.463820843797481\\
9	0.408379789072107\\
10	0.354897176634555\\
11	0.304138570539753\\
12	0.256851554403679\\
13	0.213713366654539\\
14	0.175247668997248\\
15	0.141747458444321\\
16	0.113240350519047\\
17	0.0895059891235959\\
18	0.0701311101522994\\
19	0.0545819874303803\\
20	0.0422767820546166\\
21	0.0326447148617247\\
22	0.0251656421567238\\
23	0.0193903487730797\\
24	0.0149459134917745\\
25	0.0115315743646757\\
26	0.00890975679755877\\
27	0.00689550996954936\\
28	0.00534625416029225\\
29	0.00415275368428929\\
30	0.0032316136912938\\
31	0.00251926670567999\\
32	0.00196726580944954\\
33	0.0015386567089992\\
34	0.00120520846132127\\
35	0.000945311881560552\\
};
\addlegendentry{Theory: Box-RLS; BPSK}
\addplot [only marks, line width =1pt, mark size=1pt, mark=o, mark options={solid, OliveGreen}]
  table[row sep=crcr]{%
0	0.868874658597943\\
1	0.830823375490779\\
2	0.788800105583126\\
3	0.742100074972399\\
4	0.686910585456143\\
5	0.632843671406033\\
6	0.575433153897887\\
7	0.51757579345007\\
8	0.463939521651431\\
9	0.411084127482249\\
10	0.358748599110781\\
11	0.303702917119962\\
12	0.259701889135899\\
13	0.213899694578417\\
14	0.174595502345566\\
15	0.142213476522434\\
16	0.114720578064387\\
17	0.08995263553743\\
18	0.0706426757515566\\
19	0.054760184006676\\
20	0.0423776666427946\\
21	0.0328021583758602\\
22	0.025442744204253\\
23	0.0194510777794442\\
24	0.0151842752329766\\
25	0.0117111088910959\\
26	0.00907098093865294\\
27	0.00688452214640817\\
28	0.00534947745245011\\
29	0.00420925773458161\\
30	0.00323407818600913\\
31	0.00254092243032044\\
32	0.00196133524041875\\
33	0.00154084096065981\\
34	0.00120664726114794\\
35	0.000964901473619992\\
};
\addlegendentry{Simulation: Box-RLS; BPSK}

\addplot [color=black, dashdotted, line width=1.0pt]
  table[row sep=crcr]{%
0	0.871437073370017\\
1	0.833967209385253\\
2	0.791324365730621\\
3	0.744429689589993\\
4	0.694413620683675\\
5	0.642474118340835\\
6	0.589752229154876\\
7	0.537249879055271\\
8	0.485794863988491\\
9	0.436041351695187\\
10	0.388489516238648\\
11	0.343511177752101\\
12	0.301373702979184\\
13	0.262258973939979\\
14	0.226277384459951\\
15	0.193478823812637\\
16	0.163863349867904\\
17	0.137393068772852\\
18	0.114003376487199\\
19	0.0936079587950949\\
20	0.076092003140543\\
21	0.0612955263975044\\
22	0.0489998231718719\\
23	0.0389319774671206\\
24	0.0307886431960565\\
25	0.024265468161524\\
26	0.0190790920981082\\
27	0.0149787038694583\\
28	0.0117500141105737\\
29	0.00921468081299643\\
30	0.00722719150840245\\
31	0.00567055942687679\\
32	0.00445173672275899\\
33	0.0034972766699961\\
34	0.00274950461211057\\
35	0.00216328277075067\\
};
\addlegendentry{Theory: Box-RLS; 4-PAM}

\addplot [only marks, line width =1pt, mark size=1.2pt, mark=diamond, mark options={solid, blue}]
  table[row sep=crcr]{%
0	0.872087483056199\\
1	0.838215003021759\\
2	0.792718292367387\\
3	0.742365916414344\\
4	0.693533667386333\\
5	0.642987633367621\\
6	0.588098737554053\\
7	0.533954786656438\\
8	0.485991399272924\\
9	0.435683897949337\\
10	0.38905423004054\\
11	0.341879337809327\\
12	0.301694279563105\\
13	0.262892205538514\\
14	0.227963296443368\\
15	0.193781234908991\\
16	0.16441514451878\\
17	0.138804512823268\\
18	0.115220105856301\\
19	0.0931668644096088\\
20	0.0764905597853412\\
21	0.0614979175324972\\
22	0.049737760323294\\
23	0.0389979767669223\\
24	0.0312859743231903\\
25	0.0240899080654295\\
26	0.0191815972543288\\
27	0.0150082712994617\\
28	0.0119753472738706\\
29	0.00927168403785697\\
30	0.00730538451074182\\
31	0.00572645503979219\\
32	0.00453187561298582\\
33	0.0034639541433993\\
34	0.00274250351161656\\
35	0.00220342030542312\\
};
\addlegendentry{Simulation: Box-RLS; 4-PAM}

\addplot [color=mycolor4, dashed, line width=1.0pt]
  table[row sep=crcr]{%
0	0.871446072678727\\
1	0.83402783533993\\
2	0.791568725279436\\
3	0.745122063159603\\
4	0.695948460707339\\
5	0.645332509678497\\
6	0.594442537355299\\
7	0.54425266239889\\
8	0.495519859221039\\
9	0.448797037747719\\
10	0.404463487622635\\
11	0.36275967912656\\
12	0.323819403046235\\
13	0.287696519183532\\
14	0.254385987629118\\
15	0.223839905167041\\
16	0.195979545129218\\
17	0.170704321165657\\
18	0.147898404716632\\
19	0.127435538501213\\
20	0.10918244834212\\
21	0.0930011715704469\\
22	0.0787505839609838\\
23	0.0662874020557637\\
24	0.0554669417298609\\
25	0.0461439021455937\\
26	0.0381733954740033\\
27	0.0314123483189251\\
28	0.0257212726167438\\
29	0.0209662738459123\\
30	0.0170210722015505\\
31	0.0137687861471301\\
32	0.0111032703921362\\
33	0.00892988974130204\\
34	0.00716571221438941\\
35	0.00573918927522137\\
};
\addlegendentry{Theory: RLS}

\addplot [only marks, line width =1pt, mark size=0.95pt, mark=square, mark options={solid, mycolor1}]
  table[row sep=crcr]{%
0	0.868232310127075\\
1	0.836390729922559\\
2	0.790747613780211\\
3	0.746499240999121\\
4	0.692110868099238\\
5	0.647234808251292\\
6	0.595623772973538\\
7	0.550840747476073\\
8	0.498225294335201\\
9	0.456592322635981\\
10	0.405656239498327\\
11	0.365305865133001\\
12	0.327661662260216\\
13	0.288200649355961\\
14	0.252661904267047\\
15	0.224168029295299\\
16	0.195426783724478\\
17	0.174736919320239\\
18	0.146928422835637\\
19	0.128935530178427\\
20	0.110148003981192\\
21	0.0939440549239677\\
22	0.079100403868471\\
23	0.0668051532280007\\
24	0.0577706349725925\\
25	0.0476144916372671\\
26	0.0384035222455941\\
27	0.0313834514001114\\
28	0.0256315737936198\\
29	0.0208632212278472\\
30	0.0172028722835446\\
31	0.0135702626693177\\
32	0.0112534629477695\\
33	0.00915163654733009\\
34	0.00727465281478669\\
35	0.0057292443800899\\
};
\addlegendentry{Simulation: RLS}
\end{axis}

\end{tikzpicture} \vskip-4mm
\caption{\scriptsize {MSE performance of both RLS and Box-RLS decoders.}}%
\label{xz2}
\end{center}
\end{figure}
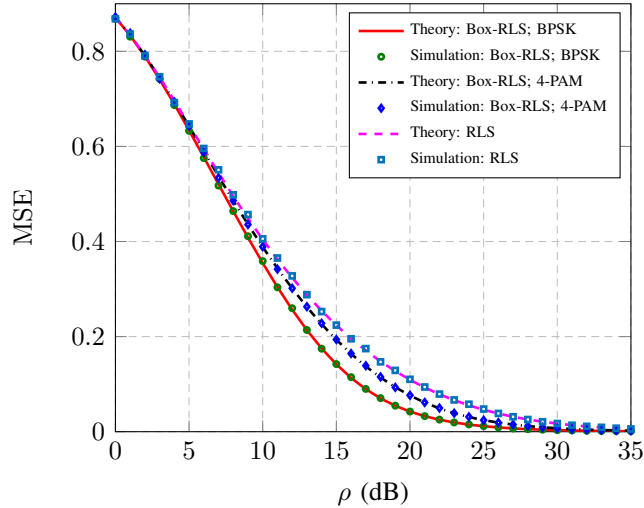

\begin{figure}
\begin{center}
\definecolor{OliveGreen}{rgb}{0.,0.5,0}%
\begin{tikzpicture}
    \renewcommand{\axisdefaulttryminticks}{4}
    \tikzstyle{every major grid}+=[style=densely dashed]
    \tikzstyle{every axis y label}+=[yshift=-10pt]
    \tikzstyle{every axis x label}+=[yshift=5pt]
    \tikzstyle{every axis legend}+=[cells={anchor=west},fill=white,
        at={(0.01,0.01)}, anchor=south west, font=\tiny ]
\begin{semilogyaxis}[
xmin=0,
xmax=35,
xlabel={$\rho$ (dB)},
ymin=1e-08,
ymax=1,
 grid=major,
 scaled ticks=true,
ylabel={$\rm SEP$},
]

\addplot [color=blue, dashed, line width=1.0pt]
  table[row sep=crcr]{%
0	0.350459340292567\\
1	0.327764311848152\\
2	0.303926176211485\\
3	0.279320706337796\\
4	0.254314158316087\\
5	0.229242888725249\\
6	0.204406786760934\\
7	0.180073309514013\\
8	0.15648712837236\\
9	0.133880765103076\\
10	0.112482833893392\\
11	0.0925217103150944\\
12	0.0742233021576124\\
13	0.0578021802676642\\
14	0.0434459087685994\\
15	0.0312933201890589\\
16	0.0214089981539884\\
17	0.0137584510384389\\
18	0.00819101549419987\\
19	0.00443925225095467\\
20	0.00214242164474999\\
21	0.000895369378812596\\
22	0.000312783009952526\\
23	8.73276360949533e-05\\
24	1.84089840021163e-05\\
25	2.72630922944329e-06\\
26	2.58913272686158e-07\\
27	1.4049048537798e-08\\
28	3.76485824561519e-10\\
29	4.14564850454516e-12\\
30	1.48766661366215e-14\\
31	1.29918691063005e-17\\
32	1.91165053296695e-21\\
33	2.98331362174482e-26\\
34	2.75751521374633e-32\\
35	7.2508171277502e-40\\
};
\addlegendentry{Theory: RLS}
\addplot [color=red, line width=1.0pt]
  table[row sep=crcr]{%
0	0.350457294241988\\
1	0.327753869309766\\
2	0.303887706964755\\
3	0.279208559407752\\
4	0.254039230640741\\
5	0.228652819430034\\
6	0.203267789562573\\
7	0.178060056621547\\
8	0.153189736493996\\
9	0.128840221301295\\
10	0.105266338996575\\
11	0.08284346209772\\
12	0.0620987274007333\\
13	0.0436925267371205\\
14	0.0283163162140137\\
15	0.0164998657742636\\
16	0.00838245060260327\\
17	0.00357064911454958\\
18	0.00121384188672377\\
19	0.000309424762544297\\
20	5.46775883898135e-05\\
21	6.06616166277274e-06\\
22	3.72931776763772e-07\\
23	1.08526141113352e-08\\
24	1.22546152720128e-10\\
25	4.17823828367716e-13\\
26	3.13450117928608e-16\\
27	3.47074559128233e-20\\
28	3.427905354052e-25\\
29	1.6000686196605e-31\\
30	1.58464101983168e-39\\
31	1.21313235359248e-49\\
32	2.01068942655409e-62\\
33	1.45087587121845e-78\\
34	6.03839370290781e-99\\
35	1.13531914618792e-124\\
};
\addlegendentry{Theory: Box-RLS}
\addplot [only marks, line width =1pt, mark size=1.5pt, mark=o, mark options={solid, OliveGreen}]
  table[row sep=crcr]{%
0	0.35029\\
1	0.329365\\
2	0.304495\\
3	0.279165\\
4	0.254035\\
5	0.228675\\
6	0.204945\\
7	0.1804\\
8	0.156435\\
9	0.133495\\
10	0.11249\\
11	0.092495\\
12	0.07419\\
13	0.0587350000000001\\
14	0.044275\\
15	0.03324\\
16	0.022035\\
17	0.01348\\
18	0.00859999999999996\\
19	0.00450999999999997\\
20	0.00217999999999999\\
21	0.001145\\
22	0.000415\\
23	0.000115\\
24	1e-05\\
25	5e-06\\
26	0\\
27	0\\
28	0\\
29	0\\
30	0\\
31	0\\
32	0\\
33	0\\
34	0\\
35	0\\
};
\addlegendentry{Simulation: RLS}

\addplot [only marks, line width =1pt,mark size=1.2pt,mark=square, mark options={solid, black}]
  table[row sep=crcr]{%
0	0.34829\\
1	0.32627\\
2	0.303635\\
3	0.27983\\
4	0.25391\\
5	0.228285\\
6	0.20187\\
7	0.176435\\
8	0.153285\\
9	0.130705\\
10	0.10746\\
11	0.08269\\
12	0.062785\\
13	0.043285\\
14	0.028045\\
15	0.016865\\
16	0.00917499999999997\\
17	0.00385499999999997\\
18	0.0014\\
19	0.000315\\
20	0.0001\\
21	5e-06\\
22	0\\
23	0\\
24	0\\
25	0\\
26	0\\
27	0\\
28	0\\
29	0\\
30	0\\
31	0\\
32	0\\
33	0\\
34	0\\
35	0\\
};
\addlegendentry{Simulation: Box-RLS}

\addplot [color=red, line width=1.0pt, forget plot]
  table[row sep=crcr]{%
0	0.711253484048703\\
1	0.699586152608233\\
2	0.684194241923735\\
3	0.665403607486198\\
4	0.64380420939113\\
5	0.619989082397616\\
6	0.594431502722788\\
7	0.567456173242119\\
8	0.539253203928065\\
9	0.509905654983553\\
10	0.479418518895088\\
11	0.447745860428822\\
12	0.414816677914106\\
13	0.380561547333089\\
14	0.344942741954312\\
15	0.307991150743048\\
16	0.269854274186925\\
17	0.230859692545542\\
18	0.191593448717738\\
19	0.152975987179971\\
20	0.116288181974942\\
21	0.0830759900420946\\
22	0.0548855129461526\\
23	0.0328681141245825\\
24	0.0173970488830062\\
25	0.00788370017741921\\
26	0.00293811162958705\\
27	0.000855912780302872\\
28	0.000182791493980562\\
29	2.63918324912536e-05\\
30	2.32571139485218e-06\\
31	1.09939941030772e-07\\
32	2.36883399824932e-09\\
33	1.89440157570858e-11\\
34	4.33970489021406e-14\\
35	2.05429471551232e-17\\
};
\addplot [only marks, line width =1pt,mark size=1.2pt,mark=square, mark options={solid, black}, forget plot]
  table[row sep=crcr]{%
0	0.713095\\
1	0.700564999999999\\
2	0.685090000000001\\
3	0.663655\\
4	0.64351\\
5	0.61872\\
6	0.59371\\
7	0.566244999999999\\
8	0.539895\\
9	0.50927\\
10	0.478585\\
11	0.44592\\
12	0.414390000000001\\
13	0.38022\\
14	0.34742\\
15	0.306595\\
16	0.27048\\
17	0.233195\\
18	0.193285\\
19	0.15074\\
20	0.11712\\
21	0.08325\\
22	0.05689\\
23	0.033485\\
24	0.018745\\
25	0.00813499999999997\\
26	0.00341499999999998\\
27	0.00107\\
28	0.00026\\
29	6.5e-05\\
30	5e-06\\
31	0\\
32	0\\
33	0\\
34	0\\
35	0\\
};
\addplot [color=blue, dashed, line width=1.0pt, forget plot]
  table[row sep=crcr]{%
0	0.711253924048519\\
1	0.699591479103027\\
2	0.684226136323834\\
3	0.665522151352639\\
4	0.644122242583016\\
5	0.620670282741894\\
6	0.59567845637359\\
7	0.569500002679421\\
8	0.542351369380446\\
9	0.51434830732354\\
10	0.485540284863718\\
11	0.455938763226462\\
12	0.425539662428557\\
13	0.394341881735279\\
14	0.362363919102911\\
15	0.329660332960631\\
16	0.296339302277958\\
17	0.262581873203749\\
18	0.228662489830488\\
19	0.194968876508967\\
20	0.162017069633976\\
21	0.130454388200832\\
22	0.101039936472539\\
23	0.0745904697728985\\
24	0.0518822443133251\\
25	0.0335109857886377\\
26	0.0197348904292777\\
27	0.0103547023768895\\
28	0.00470180553535449\\
29	0.0017811494572203\\
30	0.000537518182497747\\
31	0.000121922198230349\\
32	1.93174416766044e-05\\
33	1.94945927541291e-06\\
34	1.11556932947844e-07\\
35	3.12695901256859e-09\\
};
\addplot [only marks, line width =1pt, mark size=1.5pt, mark=o, mark options={solid, OliveGreen}, forget plot]
  table[row sep=crcr]{%
0	0.710835\\
1	0.699545\\
2	0.682605\\
3	0.6664\\
4	0.64303\\
5	0.62046\\
6	0.59462\\
7	0.570305\\
8	0.53977\\
9	0.51497\\
10	0.48589\\
11	0.45662\\
12	0.42586\\
13	0.39182\\
14	0.363355\\
15	0.32811\\
16	0.296835\\
17	0.26537\\
18	0.229275\\
19	0.195645\\
20	0.16357\\
21	0.13141\\
22	0.10251\\
23	0.075945\\
24	0.05345\\
25	0.034135\\
26	0.021585\\
27	0.011185\\
28	0.00540499999999997\\
29	0.00211499999999999\\
30	0.000715\\
31	0.000175\\
32	4e-05\\
33	1e-05\\
34	0\\
35	0\\
};
\addplot [color=red, line width=1.0pt, forget plot]
  table[row sep=crcr]{%
0	0.856560776522316\\
1	0.850133914691904\\
2	0.842154681233208\\
3	0.832552638443874\\
4	0.821367596201006\\
5	0.808719061693533\\
6	0.794752906007021\\
7	0.77959728627331\\
8	0.763338251523865\\
9	0.746011163400331\\
10	0.727601046573941\\
11	0.708046785737042\\
12	0.687246486827392\\
13	0.665063042317415\\
14	0.641329954218627\\
15	0.615858022031592\\
16	0.588443805120912\\
17	0.55888093272155\\
18	0.526975388185607\\
19	0.492565801741042\\
20	0.455549495149654\\
21	0.415914645772482\\
22	0.373779166115683\\
23	0.329438883102927\\
24	0.283430315129813\\
25	0.236609119050697\\
26	0.190222537502513\\
27	0.145915433224433\\
28	0.105590134886961\\
29	0.0710877716827082\\
30	0.0437630656502699\\
31	0.0241037388312149\\
32	0.0115533331425682\\
33	0.00465331448729876\\
34	0.00150670595908694\\
35	0.000370887903393403\\
};
\addplot [only marks, line width =1pt,mark size=1.2pt,mark=square, mark options={solid, black}, forget plot]
  table[row sep=crcr]{%
0	0.856510000000002\\
1	0.849320000000001\\
2	0.844285\\
3	0.832799999999999\\
4	0.820784999999999\\
5	0.808029999999999\\
6	0.794555000000001\\
7	0.77926\\
8	0.763515\\
9	0.746525\\
10	0.727165000000001\\
11	0.7068\\
12	0.68862\\
13	0.66251\\
14	0.641995\\
15	0.617685\\
16	0.58951\\
17	0.557095\\
18	0.52677\\
19	0.491905\\
20	0.454865\\
21	0.41631\\
22	0.3762\\
23	0.329715\\
24	0.284344999999999\\
25	0.2367\\
26	0.190945\\
27	0.146995\\
28	0.10251\\
29	0.072645\\
30	0.045095\\
31	0.02587\\
32	0.012415\\
33	0.00526499999999997\\
34	0.00207999999999999\\
35	0.00046\\
};
\addplot [color=blue, dashed, line width=1.0pt, forget plot]
  table[row sep=crcr]{%
0	0.856560809832214\\
1	0.850134419878044\\
2	0.842158542692523\\
3	0.832570741289359\\
4	0.821427119263506\\
5	0.808870186196997\\
6	0.795070694238331\\
7	0.780179755462676\\
8	0.764304000692831\\
9	0.747499246190571\\
10	0.729774049819035\\
11	0.711096728316407\\
12	0.691402468491484\\
13	0.670599295889109\\
14	0.648572814872365\\
15	0.625190176892908\\
16	0.600303990859824\\
17	0.573757053962806\\
18	0.545388946448561\\
19	0.515045726852003\\
20	0.48259416031167\\
21	0.447942031240205\\
22	0.411065980251948\\
23	0.372047720791533\\
24	0.331118094210097\\
25	0.288705812888075\\
26	0.245483580282788\\
27	0.202398568783768\\
28	0.160667859443718\\
29	0.1217149524468\\
30	0.0870258298879691\\
31	0.0579193827920901\\
32	0.0352629710293805\\
33	0.019215009279918\\
34	0.0091168365245574\\
35	0.00363820395796678\\
};
\addplot [only marks, line width =1pt, mark size=1.5pt, mark=o, mark options={solid, OliveGreen}, forget plot]
  table[row sep=crcr]{%
0	0.857660000000002\\
1	0.850910000000001\\
2	0.841335\\
3	0.832469999999999\\
4	0.823169999999998\\
5	0.80854\\
6	0.79551\\
7	0.779579999999999\\
8	0.76415\\
9	0.75004\\
10	0.72983\\
11	0.71202\\
12	0.689675\\
13	0.66997\\
14	0.649554999999999\\
15	0.62468\\
16	0.600745\\
17	0.57293\\
18	0.546655\\
19	0.51363\\
20	0.480245\\
21	0.44767\\
22	0.41179\\
23	0.370195\\
24	0.33083\\
25	0.28816\\
26	0.2458\\
27	0.20456\\
28	0.162835\\
29	0.12168\\
30	0.0883350000000001\\
31	0.060655\\
32	0.0356\\
33	0.019625\\
34	0.010535\\
35	0.00469499999999997\\
};

\draw [black, line width=0.60pt] (axis cs: 19.20,0.001) ellipse [x radius=10, y radius=2];
\node at (axis cs: 15,0.0001) (nodeA) {\scriptsize {BPSK}};
  \draw[black,thin,->] (axis cs: 18,0.001) - - (nodeA) ;
  
  \draw [black, line width=0.60pt] (axis cs: 28.20,0.001) ellipse [x radius=10, y radius=2];
\node at (axis cs: 26,0.0001) (nodeA) {\scriptsize {4-PAM}};
  \draw[black,thin,->] (axis cs: 27.20,0.001) - - (nodeA) ;
  
  \draw [black, line width=0.60pt] (axis cs: 31.20,0.05) ellipse [x radius=8, y radius=1.2];
\node at (axis cs: 30,0.01) (nodeA) {\scriptsize {8-PAM}};
  \draw[black,thin,->] (axis cs: 30.6,0.08) - - (nodeA) ;
  \end{semilogyaxis}
\end{tikzpicture}%
\caption{\scriptsize {Symbol error probability of both RLS and Box-RLS decoders.}}%
\label{fig221}
\end{center}
\end{figure}
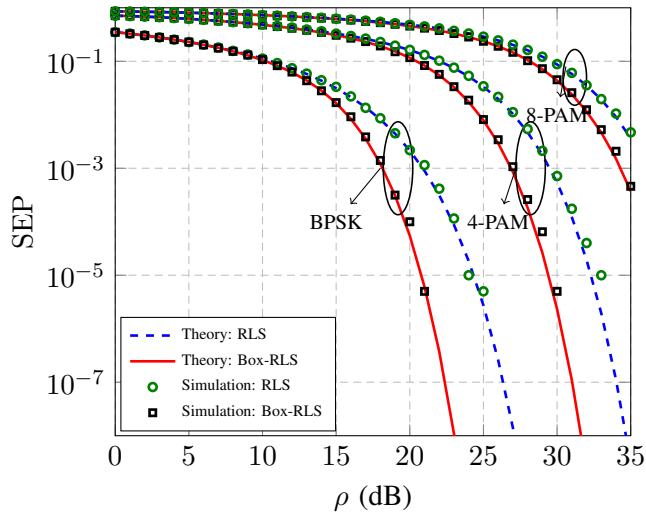
\begin{corollary}{(Optimal regularization coefficient for RLS in  MSE and SEP senses):}\label{C:opt_Lmda} \normalfont
Let $\lambda_\star$ denote the optimal regularization coefficient that minimizes the limit in (\ref{RLS-MSE}) or in \eqref{SER-RLS1}. Then,
\begin{equation}\label{opt_Lmda}
\lambda_\star = \frac{1}{\rho_ {d}}+ \sigma_{ {\Delta}}^2.
\end{equation}
\end{corollary}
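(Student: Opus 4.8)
The plan is to reduce the joint minimization to a single scalar problem. First I would observe that both limits in \eqref{RLS-MSE} and \eqref{SER-RLS1} are strictly increasing functions of $\theta_\star^2$: the MSE limit is affine in $\theta_\star^2$ with positive slope $\delta/(\rho_\mathsmaller{d}\sigma_{\mathsmaller{\hat{H}}}^2)$, while the SEP limit is $2(1-1/M)\,Q\!\left(\sqrt{\rho_\mathsmaller{d}\sigma_{\mathsmaller{\hat{H}}}^2}/(\sqrt{\mathcal{E}}\,\theta_\star)\right)$, whose argument decreases in $\theta_\star$ so that, $Q(\cdot)$ being decreasing, the SEP increases in $\theta_\star$. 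Hence the two metrics are minimized at the same $\lambda_\star$, and it suffices to minimize $\theta_\star^2(\lambda)$ over $\lambda>0$.

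The key simplification is a change of variable to $u:=\Upsilon(\lambda,\delta)$. I would show that $\Upsilon$ is the positive root of $\delta u^2 + (\delta - \tfrac{\lambda}{\sigma_{\mathsmaller{\hat{H}}}^2} - 1)u - \tfrac{\lambda}{\sigma_{\mathsmaller{\hat{H}}}^2}=0$, which rearranges to the identity $(u+1)(\delta u - \tfrac{\lambda}{\sigma_{\mathsmaller{\hat{H}}}^2}) = u$, yielding the clean relation
\begin{equation*}
\lambda = \sigma_{\mathsmaller{\hat{H}}}^2\, u\left(\delta - \frac{1}{1+u}\right).
\end{equation*}
Differentiating gives $\tfrac{d}{du}(\lambda/\sigma_{\mathsmaller{\hat{H}}}^2) = \delta - \tfrac{1}{(1+u)^2}$, which is exactly the denominator appearing in $\theta_\star$ and is positive throughout the region where $\theta_\star$ is defined. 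Thus $\lambda\mapsto u$ is a strictly increasing bijection on the admissible domain, and minimizing $\theta_\star^2$ over $\lambda$ is equivalent to minimizing it over $v:=1/(1+u)$.

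Writing $A:=\rho_\mathsmaller{d}\sigma_{\mathsmaller{\hat{H}}}^2$ and $C:=\rho_\mathsmaller{d}\sigma_{\mathsmaller{\Delta}}^2+1$, the objective becomes $\theta_\star^2 = f(v) := \left(A(1-v)^2 + C\right)/(\delta - v^2)$. Setting $f'(v)=0$, the numerator simplifies (the quadratic cross terms cancel after factoring out $A(1-v)$) to the stationarity condition $A(1-v)(\delta-v)=vC$. Combining this with the change-of-variable relation rewritten in $v$, namely $\lambda = \sigma_{\mathsmaller{\hat{H}}}^2\,(1-v)(\delta-v)/v$, the factor $(1-v)(\delta-v)$ is eliminated and I obtain
\begin{equation*}
\lambda_\star = \sigma_{\mathsmaller{\hat{H}}}^2\,\frac{C}{A} = \frac{C}{\rho_\mathsmaller{d}} = \frac{1}{\rho_\mathsmaller{d}} + \sigma_{\mathsmaller{\Delta}}^2,
\end{equation*}
which is independent of $\delta$. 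To certify this is the global minimizer I would note that the numerator of $f'$ is a downward parabola in $v$ whose two roots have product $\delta$; exactly one root lies in the feasible interval $v\in(0,\min(1,\sqrt\delta))$, and across it $f'$ changes sign from negative to positive.

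The main obstacle I anticipate is the bookkeeping around the change of variable: establishing that $\Upsilon$ solves the stated quadratic, that the induced map $\lambda\mapsto u$ is monotone on the admissible set (so a stationary point in $v$ genuinely corresponds to the optimum in $\lambda$), and tracking the feasibility constraint $\delta - v^2>0$. Once the factorization $(u+1)(\delta u - \lambda/\sigma_{\mathsmaller{\hat{H}}}^2)=u$ is in hand, the remaining algebra collapses and the $\delta$-dependence cancels, which is precisely what makes the final formula so compact.
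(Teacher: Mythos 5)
Your proposal is correct, and it fleshes out by a different route what the paper leaves as a two-line sketch. The paper's own proof consists exactly of your first step — since $\lambda$ enters both limits only through $\theta_\star$ (the MSE being affine increasing in $\theta_\star^2$ and the SEP increasing in $\theta_\star$), it sets $\lambda_\star = \argmin_{\lambda\geq 0}\theta_\star$ — followed by the bare statement ``taking the derivative of $\theta_\star$ with respect to $\lambda$, setting it to zero and solving completes the proof,'' i.e.\ direct differentiation of the closed-form expression involving $\Upsilon(\lambda,\delta)$, with all algebra omitted. You instead exploit the fact that $u=\Upsilon(\lambda,\delta)$ is the positive root of $\delta u^2+(\delta-\lambda/\sigma_{\hat{H}}^2-1)u-\lambda/\sigma_{\hat{H}}^2=0$, equivalently $(u+1)(\delta u-\lambda/\sigma_{\hat{H}}^2)=u$, to invert the map as $\lambda=\sigma_{\hat{H}}^2(1-v)(\delta-v)/v$ with $v=1/(1+u)$, and then minimize $f(v)=\bigl(A(1-v)^2+C\bigr)/(\delta-v^2)$ where $A=\rho_d\sigma_{\hat{H}}^2$ and $C=\rho_d\sigma_{\Delta}^2+1$; the stationarity condition $A(1-v)(\delta-v)=vC$ cancels against the change-of-variable relation to give $\lambda_\star=C/\rho_d=1/\rho_d+\sigma_{\Delta}^2$. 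All the algebra here checks out. What your route buys over the paper's: it avoids differentiating the square-root formula for $\Upsilon$, it justifies the substitution via the monotonicity $d\lambda/du=\sigma_{\hat{H}}^2\bigl(\delta-1/(1+u)^2\bigr)>0$ on the admissible range, and it certifies a \emph{global} minimum through the sign change of $f'$ (downward parabola, root product $\delta$) — a point the paper's sketch does not address at all.

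One minor imprecision, not a gap: for $\delta<1$ the admissible range of $v$ as $\lambda$ sweeps $(0,\infty)$ is $(0,\delta)$, since $\Upsilon(0,\delta)=(1-\delta)/\delta$ gives $v\to\delta$ as $\lambda\to 0^+$; it is not $(0,\sqrt{\delta})$ as you state. This is harmless: the numerator of $f'$ evaluated at $v=\delta$ equals $C\delta>0$ while it is negative at $v=0$, so the unique stationary root still lies inside $(0,\delta)$ and your negative-to-positive sign-change argument goes through verbatim on the correct interval.
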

\begin{proof}
Note that in both \eqref{RLS-MSE} and \eqref{SER-RLS1}, the regularization coefficient $\lambda$ appears through $\theta_\star$ only. Then, $\lambda_\star = {\argmin}_{\lambda\geq 0} \theta_\star$. Taking the derivative of $\theta_\star$ with respect to $\lambda$, setting it to zero and solving completes the proof of the corollary.
\end{proof}
\begin{remark}
	It is worth mentioning that the optimal regularization coefficient in \eqref{opt_Lmda} minimizes both the MSE and SEP.
Moreover, it can be written in terms of the so called effective SNR of the system \cite{hassibi2003much} as $\lambda_\star = \frac{\sigma_{ {\hat{H}}}^2}{\rho_{\text{eff}}}$, where
	\begin{equation}\rho_{\text{eff}} := \frac{\rho_ {d} \sigma_{ {\hat{H}}}^2}{1 + \rho_ {d} \sigma_{ {\Delta}}^2}. \label{rho_eff} \end{equation} 
\end{remark}
\begin{remark}
{In Appendix \ref{LMMSE_Append}, we show that the RLS detector with optimal regularization coefficient is equivalent to the LMMSE detector. The later is known by definition to minimize the MSE, but it turns out according to Corollary \ref{C:opt_Lmda} that it also minimizes the asymptotic SEP among all other choices of $\lambda$. In the perfect CSI case, $\sigma_{ {\Delta}}^2 = 0$, hence the optimal regularization coefficient becomes ${\lambda}_\star = \frac{1}{\rho_ {d}}$ which is clearly equivalent to the LMMSE decoder. This shows that in both perfect and imperfect settings, the RLS with optimal regularization coefficient turns out to be the LMMSE detector. Such a finding  is appealing due to the fundamental importance of the LMMSE decoder in many applications.}
\end{remark}
\subsection{MSE and SEP Analysis for Box-RLS}
In this subsection, we study the asymptotic performance of the Box-RLS decoder in terms of the MSE and SEP. We first present the MSE results in the following theorem.
\begin{theorem}(MSE of Box-RLS):\label{Box-RLS MSE} \normalfont
Fix $\lambda>0$, $\delta>0$, and let $\widehat{\xv}$ be a minimizer of the Box-RLS problem in (\ref{eq:Box-RLS matrix}). 
Let $\beta_\star$ and $\theta_\star$ be the unique solutions in $\beta$ and $\theta$ to the following max-min problem:
\begin{align} \label{box-minimax}
		\sup_{\beta >0}\min_{ \theta \geq 0} & \ D(\theta,\beta):= \frac{\beta\delta \theta}{2} +\frac{\beta}{2\theta}(1+\rho_{ {d}})- \frac{\beta^2}{4} 
+  \frac{1}{M} \sum_{i =\pm1,\pm3,\cdots,\pm(M-1)} \Psi (\theta,\beta;i),
\end{align}
with
\begin{align}
	\Psi (\theta,\beta; i) :=& t \big(c_i Q(-\ell_i) + d_i Q(\mu_i) \big) - \beta \sqrt{\rho_{ {d}}\sigma_{\hat{ {H}}}^2} t \big( p(\ell_i) + p(\mu_i) \big) \nonumber \\
	-&\frac{\beta^2}{{2 \rho_{ {d}}\sigma_{\hat{ {H}}}^2\frac{\beta}{\theta}} + 4 \lambda\rho_{ {d}}} \int_{\ell_{i}}^{\mu_{i}}  \bigg( \frac{\rho_{ {d}}\sigma_{\hat{ {H}}}^2 i}{\sqrt{\mathcal{E}\theta} } +  \sqrt{\rho_{ {d}}\sigma_{\hat{ {H}}}^2} h \bigg)^2 p(h) {\rm{d}}h,
\end{align}
	where $\ell_i =-t(\frac{\sqrt{\rho_{ {d}}\sigma_{\hat{ {H}}}^2}}{\theta}+\frac{2\lambda \rho_d}{\sqrt{\rho_{ {d}}\sigma_{\hat{ {H}}}^2} \beta})-\frac{\sqrt{\rho_{ {d}}\sigma_{\hat{ {H}}}^2}i}{\theta \sqrt{\mathcal{E}}}$, $\mu_i =t(\frac{\sqrt{\rho_{ {d}}\sigma_{\hat{ {H}}}^2}}{\theta}+\frac{2\lambda\rho_d}{\sqrt{\rho_{ {d}}\sigma_{\hat{ {H}}}^2} \beta})-\frac{\sqrt{\rho_{ {d}}\sigma_{\hat{ {H}}}^2} i}{\theta \sqrt{\mathcal{E}}}$, $c_i = \frac{-\beta \sqrt{\rho_{ {d}}\sigma_{\hat{ {H}}}^2}}{2} \ell_i + \frac{\beta {\rho_{ {d}}\sigma_{\hat{ {H}}}^2} i}{2 \theta\sqrt{\mathcal{E}}}$, and $d_i = \frac{\beta \sqrt{\rho_{ {d}}\sigma_{\hat{ {H}}}^2} }{2} \mu_i - \frac{\beta {\rho_{ {d}}\sigma_{\hat{ {H}}}^2} i}{2 \theta\sqrt{\mathcal{E}} }$.
Then, under Assumption 1 and Assumption 2, it holds:
\begin{equation}
\underset{K\to\infty}{{\rm{plim}}} {\rm{MSE}} = \frac{1}{\rho_ {d} \sigma_{ {\hat{H}}}^2} \biggl(\delta \theta_{\star}^2  - \rho_ {d}   \sigma_{ {\Delta}}^2 -1 \biggr),
\end{equation}
\end{theorem}
\begin{proof}
The proof of this theorem is deferred to Appendix \ref{sec:box}.
\end{proof}
\begin{remark}
	It is worth mentioning that contrary to previous works based on  the framework of the CGMT, the optimization problem in \eqref{box-minimax} which  resulted from the asymptotic analysis  is not convex-concave in the variables $\theta$ and $\beta$. Indeed, it is concave in $\beta$ but not convex in $\theta$. This poses a major challenge to prove the uniqueness of the solutions in $\theta$ of \eqref{box-minimax}, which is a crucial step that is required to ensure  convergence results. The reader can refer to Appendix \ref{sec:box} for more details of the technical arguments developed to show the uniqueness of the solutions of \eqref{box-minimax}.  
\end{remark}
\begin{remark}
If the optimal values $\theta_\star$ and $\beta_\star$ are strictly positive, then they satisfy the following first-order stationarity conditions: 
\begin{align*}
  & \frac{\partial D(\theta,\beta)}{\partial \theta} =0,\nonumber \\
  &\frac{\partial D(\theta,\beta)}{\partial \beta} =0.
\end{align*}
 which can be exploited in practice to facilitate their numerical evaluation. 
\end{remark}

The following theorem provides the asymptotic expression of the SEP for the Box-RLS decoder.
\begin{theorem}(SEP for Box-RLS):\label{Box_RLS SER} \normalfont
	Under the same settings of Theorem \ref{Box-RLS MSE}, assuming that the Box-RLS decoder uses a normalization constant given by  $B=\frac{\sigma_{\hat{ {H}}}^2\frac{\beta_\star}{\theta_\star}}{\sigma_{\hat{ {H}}}^2\frac{\beta_\star}{\theta_\star}+2\lambda} $ where $\beta_\star$ is the solution to \eqref{box-minimax} in $\beta$, and that $t\notin\left\{\frac{Bi}{\sqrt{\mathcal{E}}},i= 1,3\cdots, M-1\right\}$, 
	it holds that:
$$
	\underset{K\to\infty}{{\rm{plim}}}  {\rm SEP}=\widetilde{{\rm SEP}}_{\text{\tiny Box-RLS}},
$$
	where $\widetilde{\rm {SEP}}_{\text{\tiny Box-RLS}}$ is given in \eqref{eq:SEP_BoXc}.
	\begin{figure*}
	\begin{equation}
		\label{eq:SEP_BoXc}
		\begin{aligned}	\widetilde{\rm SEP}_{\text{\tiny Box-RLS}}&=\frac{4}{M}\sum_{i=1,3,\dots,M-3}{\bf 1}_{\{\frac{t}{B}\geq\frac{i+1}{\sqrt{\mathcal{E}}}\}}Q\left(\frac{\sqrt{\rho_{ {d}}\sigma_{\hat{ {H}}}^2}}{\sqrt{\mathcal{E}}\theta_\star}\right)+\frac{2}{M}\sum_{i=1,3,\cdots,M-3}\left\{{\bf 1}_{\{\frac{i-1}{\sqrt{\mathcal{E}}}\leq \frac{t}{B}\leq \frac{i+1}{\sqrt{\mathcal{E}}}\}}Q\left(\frac{\sqrt{\rho_{ {d}}\sigma_{\hat{ {H}}}^2}}{\sqrt{\mathcal{E}}\theta_\star}\right)+{\bf 1}_{\{\frac{t}{B}\leq \frac{i-1}{\sqrt{\mathcal{E}}}\}}\right\}\\
			&+\frac{2}{M}{\bf 1}_{\{\frac{t}{B}\geq \frac{M-2}{\sqrt{\mathcal{E}}}\}}Q\left(\frac{\sqrt{\rho_{ {d}}\sigma_{\hat{ {H}}}^2}}{\sqrt{\mathcal{E}} \theta_\star}\right)+\frac{2}{M}{\bf 1}_{\{\frac{t}{B}\leq \frac{M-2}{\sqrt{\mathcal{E}}}\}}.
	\end{aligned}
	\end{equation}
	\hrule
\end{figure*}

	If $t\geq \frac{M-1}{\sqrt{\mathcal{E}}}$, then $\widetilde{\rm SEP}_{\text{\tiny Box-RLS}}$ is simplified to:
	\begin{equation}
		\widetilde{{\rm SEP}}_{\text{\tiny Box-RLS}}=2\left(1-\frac{1}{M}\right)Q \biggl( \frac{\sqrt{\rho_ {d} \sigma_{ {\hat{H}}}^2 }}{\sqrt{\mathcal{E}}\theta_\star} \biggr).
	\end{equation}
\end{theorem}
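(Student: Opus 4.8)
The plan is to build on the convex Gaussian min-max theorem (CGMT) reduction already carried out in the proof of Theorem~\ref{Box-RLS MSE}. That analysis shows that the Box-RLS problem decouples, coordinate-wise, into a scalar optimization whose minimizer characterizes the limiting empirical law of the pairs $(\widehat{x}_j,x_{0,j})$. The first step is to read off from this scalar problem the exact law of the \emph{normalized} estimate $\widehat{x}_j/B$. Conditioned on the transmitted symbol taking the value $x_{0,j}=i/\sqrt{\mathcal{E}}$ with $i\in\{\pm1,\pm3,\dots,\pm(M-1)\}$, the unclipped scalar minimizer is an affine function of $i$ and a standard Gaussian $h$, and the box constraint truncates it to $[-t,t]$. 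I would show that, with the stated choice of $B=\frac{\sigma_{\mathsmaller{\hat{H}}}^2\beta_\star/\theta_\star}{\sigma_{\mathsmaller{\hat{H}}}^2\beta_\star/\theta_\star+2\lambda}$ (exactly the de-biasing factor produced by the scalar proximal map of the box-plus-ridge penalty), the limiting conditional law of $\widehat{x}_j/B$ given $x_{0,j}=i/\sqrt{\mathcal{E}}$ is that of the clipped Gaussian
\[
\min\!\Big\{\tfrac{t}{B},\ \max\!\big\{-\tfrac{t}{B},\ \tfrac{i}{\sqrt{\mathcal{E}}}+\tfrac{\theta_\star}{\sqrt{\rho_\mathsmaller{d}\sigma_{\mathsmaller{\hat{H}}}^2}}\,h\big\}\Big\},\qquad h\sim\mathcal{N}(0,1),
\]
i.e.\ a Gaussian centred at the true symbol $i/\sqrt{\mathcal{E}}$ with standard deviation $\theta_\star/\sqrt{\rho_\mathsmaller{d}\sigma_{\mathsmaller{\hat{H}}}^2}$, truncated with point masses at $\pm t/B$.

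Next I would translate the nearest-symbol decision \eqref{eq:BLS matrixb} into explicit decision intervals and integrate the truncated-Gaussian law against them. For an interior point $i/\sqrt{\mathcal{E}}$ (with $1\le i\le M-3$ odd) a correct decision occurs iff $\widehat{x}_j/B$ lies in $[\tfrac{i-1}{\sqrt{\mathcal{E}}},\tfrac{i+1}{\sqrt{\mathcal{E}}})$, while for the outermost points $\pm(M-1)/\sqrt{\mathcal{E}}$ the correct region is the corresponding half-line. The clipping at $\pm t/B$ forces a three-way case split according to where $t/B$ sits relative to these boundaries: (i) if $t/B\ge (i+1)/\sqrt{\mathcal{E}}$ the clip is inactive inside the relevant region and the error is the usual two-sided $2Q(\cdot)$; (ii) if $(i-1)/\sqrt{\mathcal{E}}\le t/B\le(i+1)/\sqrt{\mathcal{E}}$ the saturation mass lands \emph{inside} the correct interval, killing the upper error tail and leaving the one-sided $Q(\cdot)$; and (iii) if $t/B\le (i-1)/\sqrt{\mathcal{E}}$ the entire clipped estimate falls below the correct interval, so the symbol is always mis-detected, contributing $1$. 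Collecting these contributions, weighting each symbol value by $1/M$, and exploiting the $\pm$ symmetry of both the constellation and the law yields exactly the indicator-weighted sum \eqref{eq:SEP_BoXc}, with the factor $4/M$ on interior pairs, $2/M$ on the boundary terms, and $2/M$ on the outermost pair.

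The main obstacle is to justify that the \emph{empirical} error indicator $\frac{1}{K}\sum_j \mathbf{1}_{\{x_j^*\ne x_{0,j}\}}$ converges to the deterministic quantity just computed. Unlike the MSE, the error event is a discontinuous functional of $\widehat{\xv}$, so it cannot be handled by applying the CGMT directly to a Lipschitz cost. The strategy is to upgrade the CGMT conclusion to weak convergence of the joint empirical distribution of $(\widehat{x}_j,x_{0,j})$: approximate the indicator of each decision interval from above and below by Lipschitz functions of $\widehat{x}_j$, apply the CGMT concentration bound to each, and squeeze. For the squeeze to be tight, the decision boundaries and the saturation level must be continuity points of the limit law. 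The Gaussian part contributes no atoms, so the only atoms sit at $\pm t/B$; the hypothesis $t\notin\{Bi/\sqrt{\mathcal{E}}:\,i=1,3,\dots,M-1\}$ guarantees these atoms never coincide with a decision boundary, which is precisely what makes the upper and lower Lipschitz envelopes meet in the limit. I would also carry over from the proof of Theorem~\ref{Box-RLS MSE} the uniqueness of $(\theta_\star,\beta_\star)$ (delicate here because $D(\theta,\beta)$ is not convex in $\theta$), since both the scalar law and the de-biasing constant $B=B(\theta_\star,\beta_\star)$ are expressed through these quantities.

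Finally, for the stated simplification I would note that $B\le1$ because $\lambda>0$, whence $t/B\ge t$. Thus $t\ge (M-1)/\sqrt{\mathcal{E}}$ forces $t/B\ge (M-1)/\sqrt{\mathcal{E}}$, so every indicator of the form $\mathbf{1}_{\{t/B\ge (i+1)/\sqrt{\mathcal{E}}\}}$ and $\mathbf{1}_{\{t/B\ge (M-2)/\sqrt{\mathcal{E}}\}}$ equals $1$ while all ``always-error'' and one-sided indicators vanish. Equation \eqref{eq:SEP_BoXc} then collapses: the interior pairs give $\frac{4}{M}\cdot\frac{M-2}{2}\,Q(\cdot)=\frac{2(M-2)}{M}Q(\cdot)$ and the outermost pair gives $\frac{2}{M}Q(\cdot)$, summing to $2\left(1-\frac1M\right)Q\!\big(\tfrac{\sqrt{\rho_\mathsmaller{d}\sigma_{\mathsmaller{\hat{H}}}^2}}{\sqrt{\mathcal{E}}\,\theta_\star}\big)$, the classical $M$-PAM expression, in agreement with the RLS result of Theorem~\ref{SER-RLS}.
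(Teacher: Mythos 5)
Your proposal is correct and follows essentially the same route as the paper's own proof: the CGMT scalarization inherited from the MSE analysis, the identification of the limiting law of $\hat{x}_j/B$ as a Gaussian centred at $x_{0,j}$ with standard deviation $\theta_\star/\sqrt{\rho_d \sigma_{\hat{H}}^2}$ clipped at $\pm t/B$, the same three-way case analysis over decision intervals yielding \eqref{eq:SEP_BoXc}, and, for the discontinuous error indicators, exactly the paper's Lipschitz upper/lower envelope squeeze (Lemma \ref{lem:indicator} and Corollary \ref{cor:indicator}), together with the same collapse to $2(1-\tfrac{1}{M})Q(\cdot)$ when $t \geq \frac{M-1}{\sqrt{\mathcal{E}}}$ using $B<1$. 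One caveat you share with the paper: the squeeze genuinely requires the atoms $\pm t/B$ to avoid the decision boundaries, which are the \emph{even} multiples $(i\pm 1)/\sqrt{\mathcal{E}}$, whereas the theorem's literal hypothesis only excludes the odd multiples $i/\sqrt{\mathcal{E}}$, so your claim that this hypothesis "guarantees the atoms never coincide with a decision boundary" is not literally what it provides---though the continuity-point requirement you identify is the mathematically correct one.
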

\begin{proof}
The proof of Theorem~\ref{Box_RLS SER} is also based on the CGMT framework and is given in Appendix \ref{sec:box}.
\end{proof}
\begin{remark}
	Figure \ref{xz2} and Figure \ref{fig221} reveal a perfect match between the analytical expressions of MSE and SEP given by Theorem~\ref{Box-RLS MSE} and Theorem~\ref{Box_RLS SER} and the numerical simulations. It also shows that the Box-RLS outperforms the ordinary RLS. Figure \ref{xz2} also suggests that as $M$ increases, the performance of the Box-RLS approaches that of the un-boxed RLS. This is because in this figure $t=\frac{M-1}{\sqrt{\mathcal{E}}}$ and as such as $M \to \infty$, the box-constraint $[-t,t]$ tends to $(-\infty,\infty)$ which is the whole real line $\mathbb{R}$, thereby reducing  the Box-RLS to the RLS.
\end{remark}
\begin{remark}
It should be noted that when $t\geq\frac{M-1}{\sqrt{\mathcal{E}}}$, the MSE and SEP expressions take the same form as in the RLS case, with the single difference that $\theta_\star$ and $ \beta_\star$ do not admit a closed-form expression.  
Moreover, as for the RLS, the optimal regularization coefficient for Box-RLS is  given by $\lambda_\star = {\argmin}_{\lambda\geq 0} \theta_\star$, since $\lambda$ appears in the expressions for the MSE and SEP only through $\theta_\star$. However, in contrast to the RLS,  the optimal regularization coefficient cannot be obtained in closed-form, but could be retrieved by invoking any bisection algorithm. Moreover, as opposed to the RLS, its value depends on $M$. 
\end{remark}
\begin{remark}
Figure \ref{fig4} plots the optimal regularization coefficient computed using a bisection algorithm as a function of $\rho_d$ for RLS and Box-RLS for different values of $M$. As a first observation, we note that the optimal regularization coefficient for Box-RLS becomes zero starting from moderate values of $\rho_d$. Moreover, the Box-RLS needs less regularization, due to its achieved improvement over the RLS. On the other hand, in  low SNR regions corresponding to low $\rho_d$ values, the optimal regularization coefficient for both RLS and Box-RLS are higher than  $\frac{1}{\rho_{ {d}}}$ which coincides with the optimal regularization coefficient in the perfect CSI case. 
This can be explained by the fact, under imperfect CSI cases, more regularization is needed in low SNR regions, because of the degradation caused by channel estimation errors. 
  
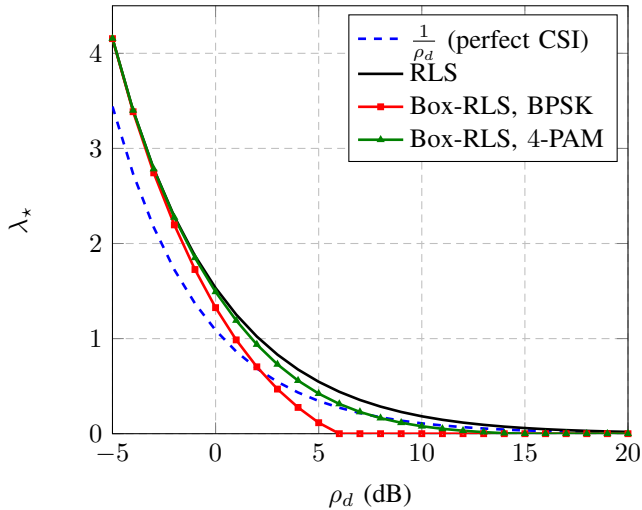
\begin{figure}
  \centering
%
%
\definecolor{OliveGreen}{rgb}{0,0.5,0}%
\begin{tikzpicture}[scale=1,font=\small]
    \renewcommand{\axisdefaulttryminticks}{4}
    \tikzstyle{every major grid}+=[style=densely dashed]
    \tikzstyle{every axis y label}+=[yshift=-10pt]
    \tikzstyle{every axis x label}+=[yshift=5pt]
    \tikzstyle{every axis legend}+=[cells={anchor=west},fill=white,
        at={(0.98,0.98)}, anchor=north east, font=\small ]
       \begin{axis}				[
      xmin=-5,
      ymin=0,
      xmax=20,
      ymax=4.5,
      grid=major,
      scaled ticks=true,
   			xlabel={$\rho_d \text{  (dB)}$},
   			ylabel={$\lambda_\star$}			
   			]
\addplot
 [color=blue,dashed ,line width=1pt]
   table[row sep=crcr]{%
-5	3.4405580942632\\
-4	2.73293243748242\\
-3	2.17084539868614\\
-2	1.72436379339769\\
-1	1.36971084803205\\
0	1.088\\
1	0.864229119380018\\
2	0.68648159079445\\
3	0.545291710186472\\
4	0.433140601562205\\
5	0.34405580942632\\
6	0.273293243748242\\
7	0.217084539868614\\
8	0.172436379339769\\
9	0.136971084803205\\
10	0.1088\\
11	0.0864229119380018\\
12	0.068648159079445\\
13	0.0545291710186472\\
14	0.0433140601562205\\
15	0.034405580942632\\
16	0.0273293243748242\\
17	0.0217084539868614\\
18	0.0172436379339769\\
19	0.0136971084803205\\
20	0.01088\\
21	0.00864229119380018\\
22	0.0068648159079445\\
23	0.00545291710186473\\
24	0.00433140601562205\\
25	0.0034405580942632\\
26	0.00273293243748242\\
27	0.00217084539868614\\
28	0.00172436379339769\\
29	0.00136971084803205\\
30	0.001088\\
31	0.000864229119380018\\
32	0.00068648159079445\\
33	0.000545291710186472\\
34	0.000433140601562205\\
35	0.00034405580942632\\
};
\addlegendentry{$\frac{1}{\rho_d}$ (perfect CSI)}

\addplot [color=black,line width=1pt]
  table[row sep=crcr]{%
-5	4.15725770016418\\
-4	3.40065233738956\\
-3	2.78566852277249\\
-2	2.28343007380496\\
-1	1.87148958002469\\
0	1.53244444444444\\
1	1.25278132016269\\
2	1.02192638242239\\
3	0.831490252525747\\
4	0.674694778175488\\
5	0.545959882944985\\
6	0.440619744849209\\
7	0.354733812549138\\
8	0.284960710962633\\
9	0.228469902778846\\
10	0.182874074074074\\
11	0.146172319407085\\
12	0.116699289543701\\
13	0.0930785145994368\\
14	0.0741796100612779\\
15	0.0590795935920746\\
16	0.0470285579271314\\
17	0.0374197669807297\\
18	0.0297640355303269\\
19	0.0236680899985631\\
20	0.0188165079365079\\
21	0.0149567907890611\\
22	0.011887123767544\\
23	0.00944640309729428\\
24	0.00750615226607409\\
25	0.00596399637240288\\
26	0.00473841155406479\\
27	0.00376451142524889\\
28	0.002990672775778\\
29	0.00237583786637068\\
30	0.00188736051159073\\
31	0.00149928815090388\\
32	0.00119099280633315\\
33	0.000946080800787821\\
34	0.000751524937135416\\
35	0.000596974038426557\\
};
\addlegendentry{RLS}

\addplot
[color=red,mark size=0.8pt,mark=square,line width=1.0pt]
  table[row sep=crcr]{%
-5	4.15424581987431\\
-4	3.38585860392213\\
-3	2.74304928828863\\
-2	2.19575501246768\\
-1	1.72650523915043\\
0	1.32553878206971\\
1	0.986474945198123\\
2	0.703273145381162\\
3	0.469171413213606\\
4	0.276273684491576\\
5	0.115532139016519\\
6	5.96086098654914e-05\\
7	5.96086098654914e-05\\
8	5.96086098654914e-05\\
9	5.96086098654914e-05\\
10	5.96086098654914e-05\\
11	5.96086098654914e-05\\
12	5.96086098654914e-05\\
13	5.96086098654914e-05\\
14	5.96086098654914e-05\\
15	5.96086098654914e-05\\
16	6.07870631868558e-05\\
17	4.89230689787146e-05\\
18	4.34667185797029e-05\\
19	5.6679915609408e-05\\
20	6.54912056601158e-05\\
21	3.7613662484417e-05\\
22	6.36483166782655e-05\\
23	5.71651555860418e-05\\
24	4.34493399547347e-05\\
25	3.77139265870937e-05\\
26	4.86540643110321e-05\\
27	3.91195412456833e-05\\
28	5.77358783821564e-05\\
29	6.60656563422069e-05\\
30	4.58610944855201e-05\\
31	4.84195352719802e-05\\
32	4.60072923578154e-05\\
33	4.1330272796911e-05\\
34	5.06278714063414e-05\\
35	3.58952074289834e-05\\
};
\addlegendentry{Box-RLS, BPSK}

\addplot
[color=OliveGreen,mark size=0.8pt,mark=triangle, line width=1.0pt]
  table[row sep=crcr]{%
-5	4.15724945478929\\
-4	3.40034706019549\\
-3	2.78335887721174\\
-2	2.27465840830513\\
-1	1.84984727229589\\
0	1.49208057091081\\
1	1.19055537203053\\
2	0.938031413445578\\
3	0.728963441437463\\
4	0.558331352666662\\
5	0.421294840179442\\
6	0.312976823947022\\
7	0.228664592895155\\
8	0.164015515013328\\
9	0.115085806282384\\
10	0.0784655350399345\\
11	0.0513249841675759\\
12	0.0312743254020996\\
13	0.0163976939512193\\
14	0.00514350879978435\\
15	5.96086098654914e-05\\
16	5.96086098654914e-05\\
17	5.96086098654914e-05\\
18	5.96086098654914e-05\\
19	5.96086098654914e-05\\
20	5.96086098654914e-05\\
21	5.96086098654914e-05\\
22	5.96086098654914e-05\\
23	5.96086098654914e-05\\
24	5.96086098654914e-05\\
25	5.96086098654914e-05\\
26	5.96086098654914e-05\\
27	5.96086098654914e-05\\
28	5.96086098654914e-05\\
29	5.96086098654914e-05\\
30	5.96086098654914e-05\\
31	5.96086098654914e-05\\
32	5.96086098654914e-05\\
33	5.96086098654914e-05\\
34	5.96086098654914e-05\\
35	5.96086098654914e-05\\
};
\addlegendentry{Box-RLS, $4$-PAM}


\end{axis}
\end{tikzpicture} \vskip-4mm
\centering
  \caption{\scriptsize {Optimal regularization coefficient $\lambda_\star$ as a function of the data power. We used $\delta=1.2, K=400, \alpha =0.5, T_p =400$, and $T =1000$.}}
  \label{fig4}
\end{figure}
\end{remark}
\begin{remark}
Similar to the regularization coefficient, we can set the threshold $t$ to the optimal value that minimizes the MSE and SEP expressions, that is  $t_\star = \argmin_{t>0} \theta_\star$. Figure \ref{fig:threshold} shows the optimal box-threshold as a function of $\rho_d$ when the regularization coefficient is already optimized as well.  As can be seen, for practical SNR regions, the optimal threshold coincides with $\frac{M-1}{\sqrt{\mathcal{E}}}$, which  is the maximum value of $\xv_0$. For this reason, we will use in the subsequent simulations this value for the threshold $t$.     
 
	\begin{figure}[htbp]
\centering
     \begin{subfigure}
         \centering
%
%
\definecolor{mycolor1}{rgb}{0.00000,0.44700,0.74100}%
\begin{tikzpicture}[scale=1,font=\small]
\begin{axis}[%
width=2.0in,
height=0.9in,
scale only axis,
xmin=-5,
xmax=30,
ymin=0.6,
ymax=1,
xticklabels={,,}
ylabel style={font=\color{white!15!black}},
ylabel={$\sqrt{\mathcal{E}} t_\star$},
      grid=major,
      scaled ticks=true,
legend style={at={(0.97,0.03)}, anchor=south east, legend cell align=left, align=left, draw=white!15!black}
]
\addplot [color=red, line width=1pt]
  table[row sep=crcr]{%
-5	 0.655784996233717\\
-4	 0.699667577315661\\
-3	 0.740154774059821\\
-2	 0.77672939261185\\
-1	 0.809236937867073\\
0	0.837816501281342\\
1	0.86276875967115\\
2	0.884597882991335\\
3	0.903777640289111\\
4	0.920801752341452\\
5	0.936147343327707\\
6	0.957623575701975\\
7	0.981872525807661\\
8	0.993699834616644\\
9	0.998291422969977\\
10	0.999648997142122\\
11	0.999932980465548\\
12	0.999968467903892\\
13	0.999966927345475\\
14	0.999962620014242\\
15	0.999959829600373\\
16	0.999974429153196\\
17	0.999970813675536\\
18	0.999971345555117\\
19	0.999969034904514\\
20	0.999966335365578\\
21	0.999966366490225\\
22	0.999961215778105\\
23	0.999980132211762\\
24	0.999959476958079\\
25	0.99998602507152\\
26	0.999966251446396\\
27	0.999965857886831\\
28	0.999965140190463\\
29	0.999963697107555\\
30	0.99996238061099\\
};
\addlegendentry{BPSK}

\end{axis}
\end{tikzpicture} \vskip-4mm  
     \end{subfigure}
         \begin{subfigure}
         \centering
         \definecolor{OliveGreen}{rgb}{0,0.5,0}%
\begin{tikzpicture}[scale=1,font=\small]
\begin{axis}[%
width=2.0in,
height=0.9in,
scale only axis,
xmin=-5,
xmax=30,
ymin=1.5,
ymax=3,
xticklabels={,,}
ylabel style={font=\color{white!15!black}},
ylabel={$\sqrt{\mathcal{E}} t_\star$},
      grid=major,
      scaled ticks=true,
legend style={at={(0.97,0.03)}, anchor=south east, legend cell align=left, align=left, draw=white!15!black}
]
\addplot [color=OliveGreen, dashed, line width=1.0pt]
  table[row sep=crcr]{%
-5	  1.64513355963242\\
-4	  1.76498782268879\\
-3	   1.87830095435446\\
-2	   1.98322236085632\\
-1	     2.07892037574507\\
0	2.1654684168812\\
1	2.24350166033929\\
2	2.31417389444816\\
3	2.37883334504961\\
4	2.43893183943836\\
5	2.49587539453415\\
6	2.55093974932897\\
7	2.60507842042143\\
8	2.65900273774476\\
9	2.71276281688572\\
10	2.7659482244386\\
11	2.81748544590484\\
12	2.86561223457354\\
13	2.90803476661852\\
14	2.9425897556917\\
15	2.97023904737386\\
16	2.98759421461489\\
17	2.99575460575744\\
18	2.9988871901456\\
19	2.99975681125119\\
20	2.999942750708\\
21	2.99990839054146\\
22	2.99994774759922\\
23	2.99992031027954\\
24	2.99991695195547\\
25	2.99990117314568\\
26	2.9998964040024\\
27	2.99989162467992\\
28	2.99989003443697\\
29	2.9999096218919\\
30	2.99988588099282\\
};
\addlegendentry{4-PAM}

\end{axis}
\end{tikzpicture} \vskip-4mm  
     \end{subfigure}
               \begin{subfigure}
         \centering
         \definecolor{mycolor1}{rgb}{0.00000,0.44700,0.74100}%
\begin{tikzpicture}[scale=1,font=\small]
\begin{axis}[%
width=2.0in,
height=0.9in,
scale only axis,
xmin=-5,
xmax=32,
xlabel={$\rho_d$ (dB)},
ymin=3,
ymax=7,
ylabel style={font=\color{white!15!black}},
ylabel={$\sqrt{\mathcal{E}} t_\star$},
      grid=major,
      scaled ticks=true,
legend style={at={(0.97,0.03)}, anchor=south east, legend cell align=left, align=left, draw=white!15!black}
]
\addplot [color=blue, dashdotted, line width=1.0pt]
  table[row sep=crcr]{%
-5	 3.46540863107506\\
-4	 3.72238861227807\\
-3	 3.96664393459317\\
-2	 4.19383634746515\\
-1	 4.40214085457313\\
0	4.59140957704138\\
1	4.76266621250509\\
2	4.91838562007667\\
3	5.06121521421907\\
4	5.19404394421599\\
5	5.3197186085117\\
6	5.44105465917373\\
7	5.55983612785968\\
8	5.67761494871064\\
9	5.79515361201378\\
10	5.91243468897544\\
11	6.02887932227113\\
12	6.14324430762274\\
13	6.2545681001569\\
14	6.36160039270286\\
15	6.4632701214185\\
16	6.55904836079443\\
17	6.64802060412883\\
18	6.72961749112961\\
19	6.80284091268854\\
20	6.86579249353254\\
21	6.91641399012802\\
22	6.95336473287054\\
23	6.9783073556648\\
24	6.99161927759436\\
25	6.99735856496192\\
26	6.99933052405093\\
27	6.99984933786622\\
28	6.99980978539383\\
29	6.99979112513207\\
30	6.9997674381332\\
31	6.99975617748796\\
32	9.99079031538668\\
};
\addlegendentry{8-PAM}

\end{axis}
\end{tikzpicture}%
    \end{subfigure}
        \caption{\scriptsize{The optimal normalized box-threshold $\sqrt{\mathcal{E}} t_\star$ as a function the data power.  We used $\delta=1.2, K=400, T_p =400, \alpha =0.5$, and $T =1000$.}}
        \label{fig:threshold}
\end{figure}
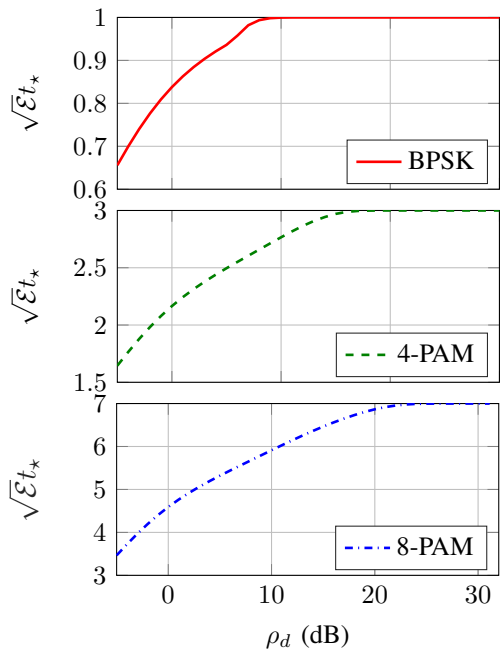
\end{remark}
\section{Optimal Data Power Allocation and Optimal Training Duration Allocation}
In this section, we leverage the asymptotic expressions of the MSE and SEP derived thus far for the RLS and Box-RLS  to determine the optimal power distribution between the  training and data symbols.  Particularly, we show that for all considered decoders, the optimal allocation schemes boils down to maximizing the effective SNR of the system $\rho_{\rm eff}$ defined in \eqref{rho_eff}. Additionally, we derive for each decoder the asymptotic expression of the goodput and derive the optimal  fraction of power allocated to the pilot transmission  as well as the training duration. (i.e.  $(\tau_{ {p}}, \alpha)$ that maximizes the goodput). In this respect, we illustrate that, while the optimal power allocation remains to be the one that maximizes the effective SNR, the optimal number of training symbols coincides with the number of transmitting antennas $K$, which is also the minimum number of training symbols that needs to be employed to satisfy orthogonality between pilot sequences. 

\subsection{Simplifying the Decoders' Expressions}
\subsubsection{LS decoder}
To begin with, we consider first the LS decoder for which $\lambda=0$ and $\delta>1$. Hence, from \eqref{RLS-MSE}, $\Upsilon(0,\delta) =0$, and as such: 
\begin{equation*}
\theta_\star =\sqrt{\frac{\rho_{ {d}} \sigma_{ {\Delta}}^2 +1}{\delta -1}},
\end{equation*}
The MSE limit in \eqref{RLS-MSE} reduces thus to
\begin{align}
\widetilde{\rm{ MSE}}_{\text{\tiny{LS}}}  :=\underset{K\to\infty}{{\rm{plim}}}  {\rm{ MSE}}_{\text{\tiny{LS}}} =& \frac{\delta (\rho_{ {d}} \sigma_{ {\Delta}}^2 +1)}{(\delta -1)\rho_{ {d}} \sigma_{ {\hat{H}}}^2} -\frac{\rho_{ {d}} \sigma_{ {\Delta}}^2 +1}{\rho_{ {d}} \sigma_{ {\hat{H}}}^2} \nonumber \\
=& \frac{\delta}{(\delta-1)\rho_\text{eff}} - \frac{1}{\rho_\text{eff}} \label{LS_MSE} \nonumber \\
=& \frac{1}{(\delta -1) \rho_\text{eff}},
\end{align}
where $\rho_\text{eff}$ is the effective SNR defined in \eqref{rho_eff}. 
The result in \eqref{LS_MSE} recovers the well-known formula of the MSE of LS with the difference being that $\rho_d$ which stands for the SNR in the perfect CSI case is replaced by $\rho_\text{eff}$.  
Similarly, for $\delta>1$, and from \eqref{SER-RLS1}, the SEP of the LS decoder can also be expressed in terms of $\rho_\text{eff}$ as follows
\begin{align}\label{LS_pp}
\underset{K\to\infty}{{\rm{plim}}} {\rm{SEP}}_{\text{\tiny{LS}}}  &= 2 \bigg( 1 - \frac{1}{M} \bigg) Q \bigg(\sqrt{\frac{(\delta -1)}{\mathcal{E} } \rho_{\text{eff}}}\bigg) \nonumber \\
&= 2 \bigg( 1 - \frac{1}{M} \bigg) Q \bigg(\sqrt{\frac{1}{\mathcal{E} \cdot \widetilde{\text{ MSE}}_{\text{\tiny{LS}}}  }}\bigg).
\end{align}
Again, the first equation in \eqref{LS_pp} parallels the well-known result for the LS and BPSK signaling but under perfect CSI \cite{hassibi2005sphere}, (in which case the BER converges in probability to  ${\text{BER}}_{LS} =Q(\sqrt{(\delta - 1)\rho_{ {d}}})$) in that it takes the same form with $\rho_{\text{eff}}$ replacing $\rho_{ {d}}$. Hence, our result generalizes  \cite{hassibi2005sphere} to  encompass $M$-PAM modulation  and imperfect CSI scenarios.
\subsubsection{RLS decoder}
We proceed now with the RLS decoder. 
The MSE expression in \eqref{RLS-MSE} can also be written in terms of $\rho_{\text{eff}}$ as
\begin{equation}
\widetilde{\rm{ MSE}}_{\text{\tiny{RLS}}}  :=\underset{K\to\infty}{{\rm{plim}}} {\rm{ MSE}}_{\text{\tiny{RLS}}}  = \frac{\delta \theta_\star^2}{\rho_{ {d}}\sigma_{ {\hat{H}}}^2}- \frac{1}{\rho_{\text{eff}}},
\end{equation}
from which it follows that $\frac{\rho_{ {d}}\sigma_{ {\hat{H}}}^2}{\theta_\star^2} = \frac{\delta}{\widetilde{\rm{ MSE}}_{\text{\tiny{RLS}}} +\frac{1}{\rho_{\text{eff}}}}$.
This yields the following interesting relationship between the MSE and SEP for the RLS decoder.  
\begin{align}\label{SER_MSE_RLS}
	\widetilde{\rm SEP}_{\text{\tiny RLS}}:=\underset{K\to\infty}{{\rm{plim}}} { \rm{SEP}}_{\text{\tiny{RLS}}}  &= 2 \bigg( 1 - \frac{1}{M} \bigg) Q \bigg(\sqrt{\frac{\delta }{\mathcal{E} \cdot (\widetilde{\rm{ MSE}}_{\text{\tiny{RLS}}} +\frac{1}{\rho_{\text{eff}}}) }}\bigg).
\end{align}
Such an expression holds for any $\lambda>0$, and not necessarily $\lambda_\star$. But when $\lambda=\lambda_\star$, with  $\lambda_\star =\frac{\sigma_{ {\hat{H}}}^2}{\rho_{\text{eff}}}$, we obtain  after some algebraic manipulations the following expression for the MSE:
\begin{align}\label{RLS_pp}
\underset{K\to\infty}{{\rm{plim}}} {\rm{MSE}}_{\text{\tiny{RLS}}} 
 & =\frac{1}{2} \bigg( -\frac{1}{\rho_{\text{eff}}} + (1-\delta) + \sqrt{\frac{1}{\rho_{\text{eff}}^2} + 
\frac{2(1+\delta ) }{\rho_{\text{eff}}} +(1-\delta )^2} \bigg)\nonumber \\
 &= \frac{1}{2} \bigg( -(\delta -1 +\frac{1}{\rho_{\text{eff}}}) + \sqrt{(\delta -1 +\frac{1}{\rho_{\text{eff}}})^2 +  \frac{4}{\rho_{\text{eff}}}} \bigg).
\end{align}
Note that in the perfect CSI case for which the optimal regularization coefficient is $\frac{1}{\rho_{ {d}}}$, the right-hand side of \eqref{RLS_pp} is exactly the minimum mean squared error estimator (MMSE) (see\cite[Theorem 8]{wu2012optimal}), where $\rho_{\text{eff}}$ is replaced by ${\rho_{ {d}}}$.
\subsubsection{Box-RLS decoder}
In a similar way, for the Box-RLS decoder, we have the same asymptotic relationships between MSE and SEP:
\begin{equation}
\widetilde{\rm{ MSE}}_{\text{\tiny{Box-RLS}}}:=\underset{K\to\infty}{{\rm{plim}}} {\rm{MSE}}_{\text{\tiny{Box-RLS}}} = \frac{\delta \theta_\star^2}{\rho_{ {d}} \sigma_{ {\hat{H}}}^2} - \frac{1}{\rho_{\text{eff}}},
\end{equation}
and, for $t \geq \frac{M-1}{\sqrt{\mathcal{E}}}$:
\begin{align}
	\widetilde{\rm SEP}_{\text{\tiny{Box-RLS}}}&:=\underset{K\to\infty}{{\rm{plim}}}  {\rm{SEP}}_{\text{\tiny{Box-RLS}}}\nonumber\\
	&= 2 \bigg( 1 - \frac{1}{M} \bigg) Q \bigg(\sqrt{\frac{\delta }{\mathcal{E} \cdot (\widetilde{\rm{MSE}}_{\text{\tiny{Box-RLS}}}+\frac{1}{\rho_{\text{eff}}}) }}\bigg),
\end{align}
which again reveals that minimizing the MSE is equivalent to minimizing the SEP. 
\subsection{Optimal Power Allocation in MSE and SEP Sense}
For the RLS decoder, we prove in Appendix \ref{Azx} that both $\widetilde{\rm{ MSE}}_{\text{\tiny{RLS}}}$, and $\widetilde{\rm{ SEP}}_{\text{\tiny{RLS}}}$ are monotonically increasing functions in $\frac{1}{\rho_{\text{eff}}}$. Hence, minimizing the MSE or SEP is equivalent to maximizing $\rho_{\text{eff}}$. This can be easily seen to be the case of the LS decoder. However, for the Box-RLS decoder, such a statement could not be checked analytically as $\theta_\star$ does not possess a closed-form expressions. However, based on extensive simulations, we conjecture that both $\widetilde{\rm{MSE}}_{\text{\tiny{Box-RLS}}}$ and $\widetilde{\rm{SEP}}_{\text{\tiny{Box-RLS}}}$ increase with $\frac{1}{\rho_{\text{eff}}}$. 
All these considerations suggest that the optimal power allocation is the one that maximizes $\rho_{\text{eff}}$ over $\alpha$, i.e.,
\begin{equation}
\alpha_\star = \argmax_{0< \alpha<1} \rho_{\text{eff}}.
\end{equation}
Recall that $\rho_{\text{eff}} = \frac{\rho_ {d} \sigma_{ {\hat{H}}}^2}{1+ \rho_ {d} \sigma_{ {\Delta}}^2}$. Substituting the expressions for $\sigma_{ {\hat{H}}}^2$ and $\sigma_{ {\Delta}}^2$ gives
\begin{align}
\rho_\text{eff}= \frac{\tau_ {p} \rho_ {p} \rho_ {d}}{(1+ \rho_ {d}) + \tau_ {p} \rho_ {p}}.
\end{align}
Further, upon using $\rho_ {p} = \frac{(1-\alpha) \rho \tau}{\tau_{ {p}}}$, and $\rho_ {d} = \frac{\alpha \rho \tau}{\tau_{ {d}}}$, the effective SNR becomes  
\begin{align*}
\rho_\text{eff}&= \frac{\rho \tau}{\tau_ {d} -1} \cdot \frac{\alpha (1- \alpha)}{-\alpha + \frac{1 + \rho \tau}{\rho \tau(1- \frac{1}{\tau_ {d}})}}.
\end{align*}
With this expression at hand, we determine in the following Theorem the optimal power allocation that maximizes the effective SNR:
\begin{theorem}(Optimal Power Allocation):\label{Power_Th} \normalfont
The optimal power allocation $\alpha_\star$ that maximizes the effective SNR in a training-based system is given by
\begin{equation}\label{optimal_power}
\alpha_\star =
\begin{cases}
\vartheta - \sqrt{\vartheta(\vartheta -1)},  & \text{if $\tau_ {d} > 1$,} \\
\frac{1}{2}, & \text{if $\tau_ {d} =1$,} \\
\vartheta + \sqrt{\vartheta(\vartheta -1)} & \text{if $\tau_ {d} < 1$,}
\end{cases}
\end{equation}
where $\vartheta = \frac{1 + \rho \tau}{\rho \tau (1 - \frac{1}{\tau_ {d}})}$.
\end{theorem}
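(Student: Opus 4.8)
The plan is to treat $\rho_{\text{eff}}$ as a function of the single free variable $\alpha\in(0,1)$ (all of $\rho,\tau,\tau_\mathsmaller{d},\tau_\mathsmaller{p}$ being fixed) and reduce the claim to a one-dimensional calculus problem. Starting from the displayed rational form $\rho_{\text{eff}}=\frac{\rho\tau}{\tau_\mathsmaller{d}-1}\cdot\frac{\alpha(1-\alpha)}{\vartheta-\alpha}$, which is valid whenever $\tau_\mathsmaller{d}\neq 1$, I would differentiate the rational factor $f(\alpha):=\frac{\alpha(1-\alpha)}{\vartheta-\alpha}$ by the quotient rule. A short computation shows that the numerator of $f'(\alpha)$ simplifies to $\alpha^2-2\vartheta\alpha+\vartheta$, so the stationarity condition is the quadratic $\alpha^2-2\vartheta\alpha+\vartheta=0$, whose roots are precisely $\alpha=\vartheta\pm\sqrt{\vartheta(\vartheta-1)}$. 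This immediately produces the two candidate expressions appearing in \eqref{optimal_power}; the remaining work is to select the admissible root and verify it is a maximizer in each regime.

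Next I would carry out the sign analysis dictated by $\tau_\mathsmaller{d}$. Since $\vartheta=\frac{1+\rho\tau}{\rho\tau(1-1/\tau_\mathsmaller{d})}$, one checks that $\tau_\mathsmaller{d}>1\Rightarrow\vartheta>1$ and $\tau_\mathsmaller{d}<1\Rightarrow\vartheta<0$; in both regimes $\vartheta(\vartheta-1)>0$, so the roots are real, and the pole $\alpha=\vartheta$ of $f$ lies outside $(0,1)$, making $f$ smooth there. Because $f$ vanishes at the endpoints $\alpha=0,1$, any interior critical value is automatically the relevant extremum. For $\tau_\mathsmaller{d}>1$ the prefactor $\frac{\rho\tau}{\tau_\mathsmaller{d}-1}$ is positive and $f>0$ on $(0,1)$; a one-line estimate (squaring $\vartheta-1<\sqrt{\vartheta(\vartheta-1)}$) shows $\vartheta-\sqrt{\vartheta(\vartheta-1)}\in(0,1)$ while the other root exceeds $\vartheta>1$, so the in-interval root maximizes $\rho_{\text{eff}}$. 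For $\tau_\mathsmaller{d}<1$ the prefactor is negative and $f<0$ on $(0,1)$, hence maximizing $\rho_{\text{eff}}$ is equivalent to minimizing $f$; here $\vartheta+\sqrt{\vartheta(\vartheta-1)}$ is the root lying in $(0,1)$ and the other is negative, yielding the third branch of \eqref{optimal_power}.

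Finally I would handle the degenerate case $\tau_\mathsmaller{d}=1$ separately, where the factorization is singular. Returning to the pre-substitution form $\rho_{\text{eff}}=\frac{\tau_\mathsmaller{p}\rho_\mathsmaller{p}\rho_\mathsmaller{d}}{(1+\rho_\mathsmaller{d})+\tau_\mathsmaller{p}\rho_\mathsmaller{p}}$ and inserting $\tau_\mathsmaller{p}\rho_\mathsmaller{p}=(1-\alpha)\rho\tau$ together with $\rho_\mathsmaller{d}=\alpha\rho\tau$, the denominator collapses to the $\alpha$-independent constant $1+\rho\tau$, so $\rho_{\text{eff}}=\frac{(\rho\tau)^2}{1+\rho\tau}\,\alpha(1-\alpha)$ and the maximizer is plainly $\alpha=\tfrac12$, matching the middle branch. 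I expect the main obstacle to be purely the bookkeeping of signs: correctly tracking the sign of the prefactor $\frac{\rho\tau}{\tau_\mathsmaller{d}-1}$ jointly with the sign and magnitude of $\vartheta$, so that root selection and the max-versus-min determination stay consistent across regimes. Once the regime is fixed, confirming that exactly one root lands in $(0,1)$ reduces to squaring a manifestly sign-definite inequality, which is routine.
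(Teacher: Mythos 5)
Your proposal is correct and follows essentially the same route as the paper's Appendix F: rewrite $\rho_{\text{eff}}$ as the rational function $\frac{\rho\tau}{\tau_\mathsmaller{d}-1}\cdot\frac{\alpha(1-\alpha)}{\vartheta-\alpha}$, split into the cases $\tau_\mathsmaller{d}=1$, $\tau_\mathsmaller{d}>1$, $\tau_\mathsmaller{d}<1$, and identify the admissible stationary point in each case. The only difference is cosmetic — you spell out the quotient-rule computation, the quadratic $\alpha^2-2\vartheta\alpha+\vartheta=0$, and the root-selection inequalities that the paper leaves implicit, and in the $\tau_\mathsmaller{d}<1$ case you keep the negative prefactor and minimize $f$ where the paper flips signs and maximizes — which is the same argument.
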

\begin{proof}
The proof of this theorem is given in Appendix \ref{power_proof}.
\end{proof}
It is worth mentioning that the use of this power allocation has already been proposed in the early work of  \cite{hassibi2003much} as the one that maximizes a lower bound on the capacity. Interestingly, we retrieve the same power allocation scheme which we prove to be optimum in the MSE/SEP sense for RLS and LS decoders and conjecture that it is also optimum for the Box-RLS decoder.  

\begin{remark}
\begin{itemize}
\item At high SNR ($\rho \gg1$), $\vartheta \approx \frac{\tau_{ {d}}}{\tau_{ {d}} -1}$, then $\alpha_\star \approx \frac{\sqrt{\tau_{ {d}}}}{1+\sqrt{\tau_{ {d}}}}$.
\item At low SNR ($\rho \ll 1$), $\vartheta \approx \frac{\tau_{ {d}}}{\rho \tau(\tau_{ {d}}-1)}$, and $\alpha_\star \approx \frac{1}{2}$. This means that, at low SNR, \textit{half} of the transmit energy should be devoted to training and the other half to data transmission.
\end{itemize}
\end{remark}
\begin{remark}[Numerical Illustration]
The asymptotic predictions of the MSE and the SEP are plotted as functions of the data power ratio $\alpha$ in Figure \ref{fig:MSE_alpha} and Figure \ref{fig:SER_alpha} when $\delta =2, K = 256, T=1000, T_{ {p}} =256, M=2$ and $\rho =15$ dB. As can be seen, the optimal power allocation, $\alpha_\star$, is the same in the MSE and SEP sense for the different decoders considered here, namely LS, RLS and Box-RLS. The same conclusion has been found for other settings confirming the conjecture that for Box-RLS, the optimal power allocation is obtained by maximizing the effective SNR.
\end{remark}
\begin{figure}[hh]
\centering
\definecolor{mycolor1}{rgb}{0.47059,0.67059,0.18824}%
\definecolor{mycolor2}{rgb}{0,0.5,0}%
\definecolor{mycolor3}{rgb}{0.14902,0.14902,0.14902}%
\begin{tikzpicture}
 \renewcommand{\axisdefaulttryminticks}{4}
    \tikzstyle{every major grid}+=[style=densely dashed]
    \tikzstyle{every axis y label}+=[yshift=-10pt]
    \tikzstyle{every axis x label}+=[yshift=5pt]
    \tikzstyle{every axis legend}+=[cells={anchor=west},fill=white,
        at={(0.01,0.01)}, anchor=south west, font=\tiny ]
\begin{axis}[
xmin=0,
xmax=1,
xlabel={$\alpha$},
ymin=-20,
ymax=10,
 grid=major,
 scaled ticks=true,
ylabel={$\rm MSE$ (dB)},
legend pos=north east
]

\addplot [color=blue, dashed, line width=1.0pt]
  table[row sep=crcr]{%
0.001	13.7522642938386\\
0.011	3.35366133479564\\
0.021	0.560976644237925\\
0.031	-1.11460177992811\\
0.041	-2.3127067969147\\
0.051	-3.24417520541753\\
0.061	-4.00509145132674\\
0.071	-4.6474030347897\\
0.081	-5.20239503952094\\
0.091	-5.69037470745019\\
0.101	-6.12527255209067\\
0.111	-6.51705987263272\\
0.121	-6.87311910507284\\
0.131	-7.19906817409022\\
0.141	-7.49928073818944\\
0.151	-7.77722774671097\\
0.161	-8.03570924367427\\
0.171	-8.27701619254655\\
0.181	-8.50304623312322\\
0.191	-8.71538826097845\\
0.201	-8.9153853904172\\
0.211	-9.10418260669515\\
0.221	-9.28276336614868\\
0.231	-9.45197808171086\\
0.241	-9.61256655870905\\
0.251	-9.76517585744646\\
0.261	-9.91037465479619\\
0.271	-10.0486648944957\\
0.281	-10.1804913152726\\
0.291	-10.3062493015306\\
0.301	-10.4262913959748\\
0.311	-10.5409327357741\\
0.321	-10.6504556157642\\
0.331	-10.7551133383672\\
0.341	-10.8551334765006\\
0.351	-10.9507206500633\\
0.361	-11.0420588966615\\
0.371	-11.1293137016537\\
0.381	-11.2126337403087\\
0.391	-11.2921523751136\\
0.401	-11.3679889434601\\
0.411	-11.4402498646407\\
0.421	-11.5090295899739\\
0.431	-11.5744114156911\\
0.441	-11.6364681747625\\
0.451	-11.6952628209609\\
0.461	-11.75084891604\\
0.471	-11.8032710288429\\
0.481	-11.8525650533768\\
0.491	-11.8987584513281\\
0.501	-11.9418704231012\\
0.511	-11.9819120101889\\
0.521	-12.0188861304951\\
0.531	-12.0527875470845\\
0.541	-12.0836027697067\\
0.551	-12.1113098872898\\
0.561	-12.1358783283981\\
0.571	-12.1572685453535\\
0.581	-12.1754316163007\\
0.591	-12.1903087578952\\
0.601	-12.2018307394669\\
0.611	-12.2099171873848\\
0.621	-12.2144757658469\\
0.631	-12.2154012173391\\
0.641	-12.2125742424202\\
0.651	-12.2058601941416\\
0.661	-12.1951075570836\\
0.671	-12.1801461744291\\
0.681	-12.1607851783368\\
0.691	-12.1368105686724\\
0.701	-12.1079823722861\\
0.711	-12.0740312986714\\
0.721	-12.0346547869009\\
0.731	-11.9895123116893\\
0.741	-11.9382197812013\\
0.751	-11.8803428129013\\
0.761	-11.8153886122186\\
0.771	-11.7427960962447\\
0.781	-11.6619237925948\\
0.791	-11.5720348895422\\
0.801	-11.4722785990163\\
0.811	-11.3616666909349\\
0.821	-11.2390436221913\\
0.831	-11.1030480479906\\
0.841	-10.9520625568325\\
0.851	-10.7841470311509\\
0.861	-10.5969487944123\\
0.871	-10.3875791218235\\
0.881	-10.1524397873566\\
0.891	-9.88697325364278\\
0.901	-9.58529226181158\\
0.911	-9.23961144372102\\
0.921	-8.8393386495214\\
0.931	-8.3695478998509\\
0.941	-7.80824830969748\\
0.951	-7.12109277552359\\
0.961	-6.24996678042144\\
0.971	-5.08432864304534\\
0.981	-3.36980817761104\\
0.991	-0.247574897446142\\
};
\addlegendentry{LS}

\addplot [color=mycolor2, dashdotted, line width=1.0pt]
  table[row sep=crcr]{%
0.001	-0.338788881844735\\
0.011	-2.37836950997842\\
0.021	-3.55367213898155\\
0.031	-4.41032459868902\\
0.041	-5.09264050715015\\
0.051	-5.66276921888778\\
0.061	-6.15377119532222\\
0.071	-6.5855433774293\\
0.081	-6.97108284165921\\
0.091	-7.31939200015261\\
0.101	-7.63698740887294\\
0.111	-7.92875122527047\\
0.121	-8.19844339202249\\
0.131	-8.4490257733226\\
0.141	-8.682875945707\\
0.151	-8.90193313961925\\
0.161	-9.10780079978566\\
0.171	-9.3018204736029\\
0.181	-9.48512620122876\\
0.191	-9.65868531310652\\
0.201	-9.82332954346028\\
0.211	-9.9797791102376\\
0.221	-10.1286615980712\\
0.231	-10.2705269416272\\
0.241	-10.4058594418223\\
0.251	-10.5350874956958\\
0.261	-10.6585915440748\\
0.271	-10.7767106152192\\
0.281	-10.8897477515219\\
0.291	-10.997974539564\\
0.301	-11.1016349142676\\
0.311	-11.2009483707091\\
0.321	-11.2961126889516\\
0.331	-11.3873062556652\\
0.341	-11.4746900496106\\
0.351	-11.5584093450477\\
0.361	-11.6385951769079\\
0.371	-11.7153656034584\\
0.381	-11.7888267957237\\
0.391	-11.8590739777231\\
0.401	-11.9261922373663\\
0.411	-11.9902572244111\\
0.421	-12.0513357490513\\
0.431	-12.1094862923554\\
0.441	-12.1647594378097\\
0.451	-12.2171982315553\\
0.461	-12.2668384774873\\
0.471	-12.3137089721479\\
0.481	-12.3578316832621\\
0.491	-12.3992218747907\\
0.501	-12.4378881804818\\
0.511	-12.4738326270732\\
0.521	-12.5070506074919\\
0.531	-12.5375308036121\\
0.541	-12.5652550573302\\
0.551	-12.5901981878827\\
0.561	-12.6123277524498\\
0.571	-12.6316037461116\\
0.581	-12.6479782361528\\
0.591	-12.6613949244849\\
0.601	-12.6717886305529\\
0.611	-12.6790846854545\\
0.621	-12.6831982260735\\
0.631	-12.6840333757325\\
0.641	-12.681482295123\\
0.651	-12.6754240839379\\
0.661	-12.6657235095825\\
0.671	-12.6522295343668\\
0.681	-12.6347736064394\\
0.691	-12.6131676720871\\
0.701	-12.5872018574555\\
0.711	-12.5566417556806\\
0.721	-12.5212252400789\\
0.731	-12.4806587043997\\
0.741	-12.4346126057608\\
0.751	-12.3827161528243\\
0.761	-12.3245509382717\\
0.771	-12.2596432568578\\
0.781	-12.1874547727279\\
0.791	-12.107371094266\\
0.801	-12.018687669714\\
0.811	-11.9205922145459\\
0.821	-11.8121425952613\\
0.831	-11.6922386822941\\
0.841	-11.5595860811998\\
0.851	-11.4126487493587\\
0.861	-11.2495861273835\\
0.871	-11.0681682559662\\
0.881	-10.8656588720007\\
0.891	-10.6386506954578\\
0.901	-10.382827142519\\
0.911	-10.092606738967\\
0.921	-9.76059249940685\\
0.931	-9.37668014207913\\
0.941	-8.92653091712422\\
0.951	-8.38876301168096\\
0.961	-7.72927243264443\\
0.971	-6.88811140667877\\
0.981	-5.7422851257472\\
0.991	-3.95279171208567\\
};
\addlegendentry{RLS}

\addplot [color=red, line width=1.0pt]
  table[row sep=crcr]{%
0.011	-2.33219421167247\\
0.021	-4.42879236186204\\
0.031	-5.94205631231233\\
0.041	-7.0988503123387\\
0.051	-8.0194466809389\\
0.061	-8.77743054362166\\
0.071	-9.41893414937155\\
0.081	-9.97369922915514\\
0.091	-10.4616140430618\\
0.101	-10.8964930500635\\
0.111	-11.2882748145159\\
0.121	-11.6443323839267\\
0.131	-11.9702809480058\\
0.141	-12.2704933566278\\
0.151	-12.5484403166076\\
0.161	-12.8069217982053\\
0.171	-13.0482287421464\\
0.181	-13.2742587811184\\
0.191	-13.4866008084443\\
0.201	-13.6865979377059\\
0.211	-13.8753951539237\\
0.221	-14.0539759133566\\
0.231	-14.2231906289115\\
0.241	-14.3837791059071\\
0.251	-14.5363884046436\\
0.261	-14.681587201993\\
0.271	-14.8198774416924\\
0.281	-14.9517038624693\\
0.291	-15.0774618487272\\
0.301	-15.1975039431714\\
0.311	-15.3121452829707\\
0.321	-15.4216681629608\\
0.331	-15.5263258855638\\
0.341	-15.6263460236972\\
0.351	-15.7219331972599\\
0.361	-15.8132714438581\\
0.371	-15.9005262488503\\
0.381	-15.9838462875053\\
0.391	-16.0633649223102\\
0.401	-16.1392014906567\\
0.411	-16.2114624118373\\
0.421	-16.2802421371705\\
0.431	-16.3456239628877\\
0.441	-16.4076807219591\\
0.451	-16.4664753681576\\
0.461	-16.5220614632366\\
0.471	-16.5744835760396\\
0.481	-16.6237776005734\\
0.491	-16.6699709985247\\
0.501	-16.7130829702978\\
0.511	-16.7531245573855\\
0.521	-16.7900986776917\\
0.531	-16.8240000942811\\
0.541	-16.8548153169033\\
0.551	-16.8825224344864\\
0.561	-16.9070908755947\\
0.571	-16.9284810925501\\
0.581	-16.9466441634973\\
0.591	-16.9615213050918\\
0.601	-16.9730432866636\\
0.611	-16.9811297345815\\
0.621	-16.9856883130436\\
0.631	-16.9866137645357\\
0.641	-16.9837867896168\\
0.651	-16.9770727413382\\
0.661	-16.9663201042802\\
0.671	-16.9513587216257\\
0.681	-16.9319977255334\\
0.691	-16.9080231158691\\
0.701	-16.8791949194828\\
0.711	-16.845243845868\\
0.721	-16.8058673340976\\
0.731	-16.760724858886\\
0.741	-16.709432328398\\
0.751	-16.6515553600979\\
0.761	-16.5866011594152\\
0.771	-16.5140086434413\\
0.781	-16.4331363397914\\
0.791	-16.3432474367388\\
0.801	-16.2434911462129\\
0.811	-16.1328792381315\\
0.821	-16.0102561693879\\
0.831	-15.8742605951872\\
0.841	-15.7232751040291\\
0.851	-15.5553595783476\\
0.861	-15.3681613416089\\
0.871	-15.1587916690201\\
0.881	-14.9236523345532\\
0.891	-14.6581858008396\\
0.901	-14.3565048090099\\
0.911	-14.0108239909322\\
0.921	-13.6105511968572\\
0.931	-13.1407604485887\\
0.941	-12.5794608767843\\
0.951	-11.8923056251456\\
0.961	-11.0211848162493\\
0.971	-9.85566220989074\\
0.981	-8.14435364877179\\
0.991	-5.13950311304858\\
};
\addlegendentry{Box-RLS}

\addplot [color=mycolor3, line width=1.0pt, forget plot]
  table[row sep=crcr]{%
0.631	-20\\
0.631	-19\\
0.631	-18\\
0.631	-17\\
0.631	-16\\
0.631	-15\\
0.631	-14\\
0.631	-13\\
0.631	-12\\
0.631	-11\\
0.631	-10\\
0.631	-9\\
0.631	-8\\
0.631	-7\\
0.631	-6\\
0.631	-5\\
0.631	-4\\
0.631	-3\\
0.631	-2\\
0.631	-1\\
0.631	0\\
0.631	1\\
0.631	2\\
0.631	3\\
0.631	4\\
0.631	5\\
0.631	6\\
0.631	7\\
0.631	8\\
0.631	9\\
0.631	10\\
0.631	11\\
0.631	12\\
0.631	13\\
0.631	14\\
0.631	15\\
};
\node at (axis cs: 0.77,-5) (nodeA) {\scriptsize {$\alpha_{\star} = 0.629$}};
  \draw[black,thin,->] (nodeA) - - (axis cs: 0.631,-5)   ;
\end{axis}
\end{tikzpicture}%
\caption{\scriptsize{MSE as a function of the data power ratio $\alpha$.}}
\label{fig:MSE_alpha}
\end{figure}
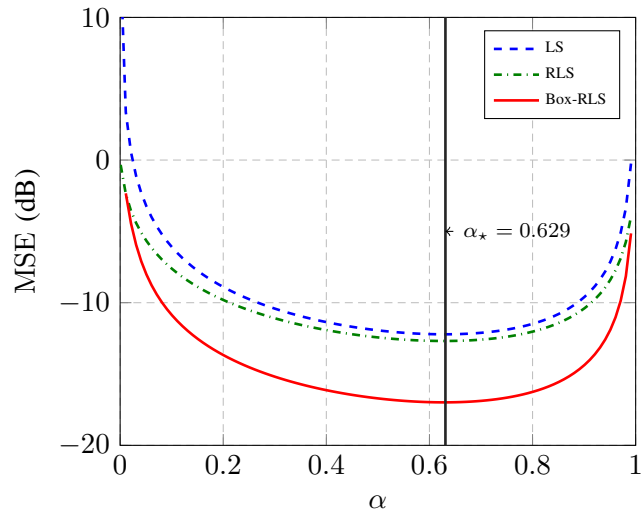
\begin{figure}[hh]
\centering
\captionsetup{justification=centering,margin=2cm}
\input{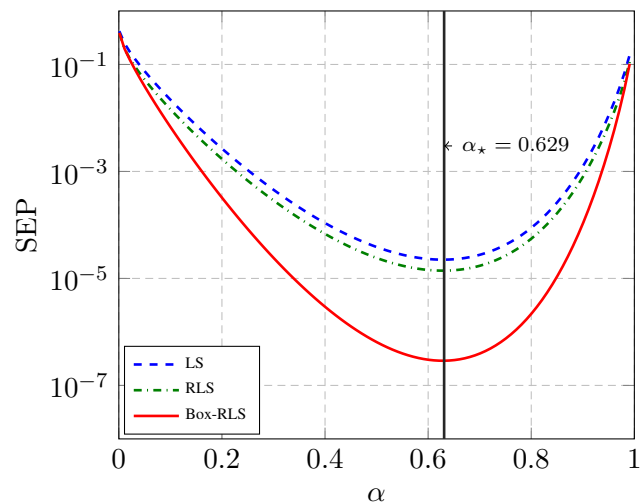}
\caption{\scriptsize{SEP as a function of the data power ratio $\alpha$.}}
\label{fig:SER_alpha}
\end{figure}
\subsection{Joint Optimization of Power Allocation and Training Duration in the Goodput Sense}
We now consider the goodput metric for the joint optimization of the power allocation and the training duration. From its  definition in \eqref{good_def}, its asymptotic value can be written as:   
\begin{equation}\label{Goodput_limit}
\underset{K\to\infty}{{\rm{plim}}} G = \bigg(1 -\frac{\tau_ {p}}{\tau}\bigg) \big(1- \underset{K\to\infty}{{\rm{plim}}} {\rm{SEP}} \big),
\end{equation}
where the limit in the right hand side is given by (\ref{SER-RLS1}) for RLS, and by \eqref{eq:SEP_BoXc} for Box-RLS. The above expression can be used to find the optimal pair $(\tau_{ {p}}^{\star},\alpha_\star)$ that \textit{maximizes} the goodput limit in (\ref{Goodput_limit}).    
The result is summarized below.
\begin{proposition}({{Joint optimization in goodput sense}}): \normalfont
The optimal pair $(\tau_{ {p}}^{\star},\alpha_\star)$ that maximizes the goodput limit in (\ref{Goodput_limit}) is given by:
$\tau_{ {p}}^{\star} =1$ (or $T_{ {p}}^{\star} = K$), and $\alpha_\star$ is the same as in \eqref{optimal_power} for all $\rho$ and $\tau$ (or $T$).
\end{proposition}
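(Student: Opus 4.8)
The plan is to split the joint maximization over $(\tau_p,\alpha)$ into an outer step over $\tau_p$ and an inner step over $\alpha$, and to show that the time-efficiency factor $1-\tau_p/\tau$ always dominates the reliability factor $1-\widetilde{\rm SEP}$, forcing the optimum to the boundary $\tau_p=1$. First I would \emph{fix} $\alpha$ and prove that the goodput limit in \eqref{Goodput_limit} is strictly decreasing in $\tau_p$ on $[1,\tau)$. Fixing $\alpha$ is the key simplification: then $\sigma_{\Delta}^2=1/(1+(1-\alpha)\rho\tau)$ and $\sigma_{\hat{H}}^2=1-\sigma_{\Delta}^2$ depend only on $\alpha$ (they are functions of the pilot \emph{energy} $(1-\alpha)\rho\tau$, not of how it is split between $\rho_p$ and $\tau_p$), so the entire $\tau_p$-dependence enters only through $\rho_d=\alpha\rho\tau/\tau_d$ with $\tau_d=\tau-\tau_p$. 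Regarding $G=\tfrac{\tau_d}{\tau}(1-\widetilde{\rm SEP})$ as a function of $\rho_d$ via $\tau_d=\alpha\rho\tau/\rho_d$, a direct differentiation shows that $G$ increases in $\tau_d$ (equivalently decreases in $\tau_p$) if and only if
\[
\widetilde{\rm SEP}(\rho_d)-\rho_d\,\widetilde{\rm SEP}'(\rho_d)<1 .
\]

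Once this reduction is in place, I would establish the displayed inequality from two ingredients. Writing $\widetilde{\rm SEP}=2(1-\tfrac1M)Q(\sqrt{\psi})$ with $\psi(\rho_d)=\rho_d\sigma_{\hat{H}}^2/(\mathcal{E}\theta_\star^2)$ and $x=\sqrt{\psi}$, the chain rule gives $\widetilde{\rm SEP}-\rho_d\,\widetilde{\rm SEP}'=(1-\tfrac1M)\big(2Q(x)+p(x)\,\tfrac{\rho_d\psi'}{x}\big)$. The first ingredient is the universal Gaussian bound $h(x):=2Q(x)+x\,p(x)\le 1$ for all $x\ge 0$, which follows from $h(0)=1$ together with $h'(x)=-(1+x^2)p(x)<0$. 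The second is the elasticity bound $\rho_d\psi'(\rho_d)\le\psi(\rho_d)$, which is equivalent to $\theta_\star^2$ being nondecreasing in $\rho_d$. For LS and RLS this holds because $\Upsilon(\lambda,\delta)$ is independent of $\rho_d$, so by Theorem \ref{MSE-RLS} the quantity $\theta_\star^2$ is an affine, strictly increasing function of $\rho_d$; and the property persists at $\lambda=\lambda_\star$, since a pointwise infimum over $\lambda$ of nondecreasing functions is nondecreasing. Combining the two, $\rho_d\psi'/x\le x$, whence $\widetilde{\rm SEP}-\rho_d\,\widetilde{\rm SEP}'\le(1-\tfrac1M)\,h(x)\le 1-\tfrac1M<1$, strictly. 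This yields $\partial G/\partial\tau_p<0$ at every fixed $\alpha$, so $\tau_p^\star=1$ for all admissible $\alpha$, hence jointly.

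It remains to fix $\tau_p=1$ (so $\tau_d=\tau-1$) and optimize $\alpha$. Since the factor $1-\tau_p/\tau$ is then a constant in $\alpha$, maximizing the goodput is equivalent to minimizing $\widetilde{\rm SEP}$, which—by the monotonicity of SEP in $\rho_{\text{eff}}$ established in the preceding section—is the same as maximizing $\rho_{\text{eff}}$ over $\alpha$. That is precisely the problem solved in Theorem \ref{Power_Th}, so $\alpha_\star$ is given by \eqref{optimal_power} evaluated at $\tau_d=\tau-1$, and this holds for all $\rho$ and $\tau$, completing the proposition.

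The step I expect to be the main obstacle is the elasticity bound ``$\theta_\star^2$ nondecreasing in $\rho_d$'' for the Box-RLS decoder. There, $\theta_\star$ is defined only implicitly through the non-convex max--min \eqref{box-minimax} and admits no closed form, so the affine-in-$\rho_d$ argument used for RLS is unavailable. My intended remedy is to differentiate the first-order stationarity conditions of \eqref{box-minimax} with respect to $\rho_d$ (an envelope/implicit-function argument), exploiting that the objective $D(\theta,\beta)$ depends on $\rho_d$ monotonically through $\sqrt{\rho_d\sigma_{\hat{H}}^2}$ and $\rho_d\sigma_{\hat{H}}^2$; should this resist a fully rigorous treatment, I would fall back on the same monotonicity of $\theta_\star$ in $\rho_d$ that is conjectured (and numerically corroborated) for the power-allocation result, thereby covering LS and RLS rigorously and Box-RLS under that conjecture.
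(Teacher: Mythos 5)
Your proof is correct, and it reaches the proposition by the same two-stage structure as the paper (strict decrease of the goodput in the training duration for every fixed $\alpha$, then maximization of $\rho_{\text{eff}}$ over $\alpha$ at $\tau_{\mathsmaller{p}}=1$), but the key monotonicity step is argued by a genuinely different route. The paper proves the decrease explicitly only for the LS decoder: with $\gamma=T_{\mathsmaller{p}}/T$ it writes $\widetilde{\rm SEP}_{\rm LS}=\widetilde{M}\,Q\bigl(\sqrt{b(\alpha)/(a(\alpha)-\gamma)}\bigr)$, substitutes $x=\sqrt{b/(a-\gamma)}$, and shows the derivative is negative by expanding $e^{x^2/2}\bigl(1/\widetilde{M}-Q(x)\bigr)$ in a Taylor series all of whose terms are positive, so that every term of the derivative is negative (using $a(\alpha)>1$); the extension to RLS and Box-RLS is then asserted without detail. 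You instead observe that, at fixed $\alpha$, the variances $\sigma_{\mathsmaller{\Delta}}^2$ and $\sigma_{\mathsmaller{\hat H}}^2$ depend only on the pilot energy $(1-\alpha)\rho\tau$, so $\tau_{\mathsmaller{p}}$ enters only through $\rho_{\mathsmaller{d}}$, and you reduce the monotonicity to the elasticity criterion $\widetilde{\rm SEP}-\rho_{\mathsmaller{d}}\widetilde{\rm SEP}'<1$, which you verify from two clean ingredients: the universal Gaussian bound $2Q(x)+x\,p(x)\le 1$ (elementary, since its derivative is $-(1+x^2)p(x)$) and the fact that $\theta_\star^2$ is nondecreasing in $\rho_{\mathsmaller{d}}$, immediate from the closed forms of Theorem~\ref{MSE-RLS} for LS and RLS at fixed $\lambda$ and preserved under the infimum over $\lambda$. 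This buys you a decoder-agnostic argument that covers RLS rigorously---something the paper only claims---and it isolates exactly the single missing piece for Box-RLS (monotonicity of $\theta_\star$ in $\rho_{\mathsmaller{d}}$), which you, like the paper itself, leave to the conjecture supported numerically; by contrast, the paper's route is tied to the LS closed form and implicitly relies on positivity of all Taylor coefficients of $e^{x^2/2}(1/\widetilde{M}-Q(x))$, a fact it does not prove and which your bound sidesteps. Your final step (at $\tau_{\mathsmaller{p}}^{\star}=1$ the time factor is constant in $\alpha$, so maximizing the goodput reduces to maximizing $\rho_{\text{eff}}$, i.e., Theorem~\ref{Power_Th} with $\tau_{\mathsmaller{d}}=\tau-1$) coincides with the paper's.
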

\begin{proof}
The proof of this proposition is given in Appendix \ref{goodput_proof}. 
\end{proof}
\begin{remark}
A major outcome of the above result is that the optimal number of training symbols that maximizes the goodput is given by the minimum number of the required training symbols that is the number of transmit antennas, $K$. This result differs  from the finding of \cite{hassibi2003much} in which it has been proven that in case of equal distribution of power between training and data, the optimal number of training symbols may be larger than the number of transmit antennas. 
\end{remark}
\section{Conclusions}
\label{sec:conc}
Based on the CGMT framework, this work carries out a large-system performance analysis of the regularized least squares (RLS) and box-regularized least squares (Box-RLS) decoders used to recover signals from $M$-ary constellations when the channel matrix is estimated using the LMMSE and  is modeled by i.i.d. real Gaussian entries.  Although our analysis relies on asymptotic growth assumptions, numerical results demonstrated the accuracy of the theoretical predictions even for limited system dimensions. Compared to previous related works, the main feature of the present work is our consideration of imperfect CSI, which allowed us to derive the optimal power allocation between training and data. However, considering imperfect CSI, posed several technical challenges and brought us to develop novel technical tools to establish convergence results. We believe that these results can be leveraged in the future to further facilitate carrying out rigorous analysis based on the use of the CGMT framework.   
\section*{Acknowledgment}
We would like to thank Prof. Ahmed-Sultan Salem for the very helpful comments, discussions and suggestions. We also would like to thank Houssem Sifaou for fruitful discussions.
\appendices
\section{Gaussian Min-max Theorem}
	The key ingredient of the analysis is the \textbf{C}onvex \textbf{G}aussian \textbf{M}in-max \textbf{T}heorem (\textbf{CGMT}), a concrete formulation for it can be found in \cite{thrampoulidis2018precise}. 
	The CGMT is a tool that allows analyzing the behavior of solutions of stochastic optimization problems that can be cast into the following form:
	\begin{equation} \label{primary}
		\Phi(\Gm) := \underset{\wv \in \mathcal{S}_{w}}{\operatorname{\min}}  \ \underset{\uv \in \mathcal{S}_{u}}{\operatorname{\max}} \ \uv^{T} \Gm \wv + \psi( \wv, \uv), \\
	\end{equation}
	where ${\bf G}\in\mathbb{R}^{N\times K}$ with i.i.d. standard normal entries, $\mathcal{S}_w$ and $\mathcal{S}_{u}$ are  sets of $\mathbb{R}^K$ and $\mathbb{R}^{N}$ and  $\psi:\mathbb{R}^K\times \mathbb{R}^{N}\to \mathbb{R}$ is continuous convex-concave function on $\mathcal{S}_w\times \mathcal{S}_u$. Problem \eqref{primary} is referred to as Primary Problem (PO) and its analysis is in general not tractable. The CGMT associates with it an Auxiliary Optimization (AO) problem given  by     
\begin{align}
&\phi(\gv, \sv) := \underset{\wv \in \mathcal{S}_{w}}{\operatorname{\min}}  \ \underset{\uv \in \mathcal{S}_{u}}{\operatorname{\max}} \ \| \wv \| \gv^{T} \uv - \| \uv \| \sv^{T} \wv + \psi( \wv, \uv), \label{AA2}
\end{align}
where $\gv \in   \mathbb{R}^{N}$, and  $\sv \in \mathbb{R}^K$ have i.i.d. standard Gaussian entries. 
	The initial formulation of the CGMT establishes that the (AO) has  the same asymptotic behavior as the (PO) in the regime in which $N$ and $K$ grow simultaneously with the same pace under the condition that the sets $\mathcal{S}_{{w}}$ and $\mathcal{S}_{{u}}$ are convex and compacts. Particularly, if for some $\nu\in\mathbb{R}$, the optimal cost of the (AO) concentrates around $\nu$ in the sense that
	$$
	\mathbb{P}\left[\left|\phi({\bf g},{\bf s})-\nu\right|\right]\to 0,
	$$
	then the optimal cost of the PO concentrates also around $\nu$, satisfying similarly:
	$$
	\mathbb{P}\left[\left|\Phi({\bf G})-\nu\right|\right]\to 0.
	$$
	Recently in \cite{kammoun_christos}, the compactness of $\mathcal{S}_{{ u}}$ is shown to be possibly relaxed provided that the order of the min-max in \eqref{AA2} can be inverted, that is $\phi({\bf g},{\bf s})$ is also given by: 
	\begin{equation}
		\phi({\bf g},{\bf s}):=\max_{{\bf u}\in\mathcal{S}_{ u}} \min_{{\bf w}\in\mathcal{S}_w} \|{\bf w}\|{\bf g}^{T}{\bf u} - \|{\bf u}\| {\bf s}^{T}{\bf w} +\psi({\bf w},{\bf u}). \label{eq:inversion}
	\end{equation}
	More formally, we have the following result: 
	\begin{theorem}[CGMT \cite{thrampoulidis2018precise}]
		Let $\mathcal{S}$ be any arbitrary open subset of $\mathcal{S}_w $, and $\mathcal{S}^c = \mathcal{S}_w \setminus\mathcal{S}$. Denote $\phi_{\mathcal{S}^c}(\gv,\sv)$ the optimal cost of the optimization in (\ref{AA2}), when the minimization over $\wv$ is constrained over $\wv \in \mathcal{S}^c$. Assume that $\mathcal{S}_u$ is convex while $\mathcal{S}_w$ is convex and compact. Assume also that \eqref{eq:inversion} holds true.  Consider the regime $K, N\to\infty$ such that $\frac{N}{K}\to \delta$, which will be denoted by $K\to\infty$.  Suppose that there exist constants $\bar{\phi}$ and $\eta >0$ such that in the limit as $K \rightarrow + \infty$ it holds with probability approaching one: (i) $\phi(\gv,\sv) \leq \bar{\phi} +\eta$, and, (ii) $\phi_{\mathcal{S}^c}(\gv,\sv) \geq \bar{\phi} + 2\eta$. Let ${\bf w}_\Phi$ and ${\bf w}_\phi$ denote respectively the solutions in ${\bf w}$ to the (PO) and the (AO). 
Then, $\lim_{K \rightarrow \infty} \mathbb{P}[\wv_{\phi} \in \mathcal{S}] = 1$, and $\lim_{K \rightarrow \infty} \mathbb{P}[\wv_{\Phi} \in \mathcal{S}] = 1.$ 
		\label{eq:th_CGMT}
\end{theorem}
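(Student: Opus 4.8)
The plan is to derive the result from Gordon's Gaussian comparison inequality (the Gaussian min--max theorem), which is the engine behind the cost-level concentration quoted before the theorem, and then to convert that cost comparison into a statement about the \emph{location} of the minimizers by a restriction (localization) argument. First I would identify the two centered Gaussian processes attached to the (PO) and the (AO): the bilinear form $X_{\wv,\uv}=\uv^{T}\Gm\wv$ indexed by $(\wv,\uv)\in\mathcal{S}_w\times\mathcal{S}_u$, and the decoupled form $Y_{\wv,\uv}=\|\wv\|\,\gv^{T}\uv-\|\uv\|\,\sv^{T}\wv$ appearing in \eqref{AA2}. A direct covariance computation shows that these processes satisfy the Slepian--Gordon hypotheses (equal variances on the diagonal, together with the required inequalities between cross-covariances for equal versus distinct $\wv$-indices). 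Feeding this into Gordon's inequality yields the ``free'' one-sided bound
\begin{equation*}
\mathbb{P}\!\left[\Phi(\Gm)<c\right]\;\le\;2\,\mathbb{P}\!\left[\phi(\gv,\sv)\le c\right]\qquad\text{for all }c,
\end{equation*}
valid for any closed constraint sets and with no convexity needed. Adding the convex--concave structure of $\psi$ lets me interchange the inner $\min$ and $\max$ (Sion's minimax theorem), which upgrades the comparison to the reverse direction $\mathbb{P}[\Phi(\Gm)>c]\le 2\,\mathbb{P}[\phi(\gv,\sv)\ge c]$; this is the step that uses convexity of $\mathcal{S}_w$ and $\mathcal{S}_u$, and where the hypothesis \eqref{eq:inversion} (following \cite{kammoun_christos}) is invoked to dispense with compactness of $\mathcal{S}_u$.

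With both inequalities in hand, the localization runs as follows. I would apply the reverse inequality to the \emph{full} problem together with hypothesis (i): since $\phi(\gv,\sv)\le\bar\phi+\eta$ with probability approaching one, it follows that $\Phi(\Gm)\le\bar\phi+\eta+\epsilon$ with probability approaching one, for any fixed $\epsilon>0$. Simultaneously I would apply the free inequality to the problem \emph{restricted} to $\wv\in\mathcal{S}^c$ --- legitimate because $\mathcal{S}^c$, being closed inside the compact $\mathcal{S}_w$, is itself compact, and crucially this direction needs no convexity of the (in general non-convex) set $\mathcal{S}^c$. Combined with hypothesis (ii), $\phi_{\mathcal{S}^c}(\gv,\sv)\ge\bar\phi+2\eta$ with high probability gives $\Phi_{\mathcal{S}^c}(\Gm)\ge\bar\phi+2\eta-\epsilon$ with high probability, where $\Phi_{\mathcal{S}^c}$ denotes the (PO) cost with the minimization confined to $\mathcal{S}^c$.

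To conclude for the (PO) minimizer, observe that on the event $\{\wv_\Phi\in\mathcal{S}^c\}$ the constrained and unconstrained minima coincide, $\Phi(\Gm)=\Phi_{\mathcal{S}^c}(\Gm)$. Choosing $\epsilon<\eta/2$, the two estimates above force $\bar\phi+2\eta-\epsilon\le\Phi_{\mathcal{S}^c}(\Gm)=\Phi(\Gm)\le\bar\phi+\eta+\epsilon$, i.e. $2\eta-\epsilon\le\eta+\epsilon$, a contradiction; hence $\mathbb{P}[\wv_\Phi\in\mathcal{S}^c]\to0$ and $\mathbb{P}[\wv_\Phi\in\mathcal{S}]\to1$. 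The companion statement for the (AO) minimizer $\wv_\phi$ is immediate and essentially deterministic on the good event: hypotheses (i)--(ii) give $\phi(\gv,\sv)\le\bar\phi+\eta<\bar\phi+2\eta\le\phi_{\mathcal{S}^c}(\gv,\sv)$, so the global minimum of the (AO) is strictly smaller than its minimum over $\mathcal{S}^c$, whence $\wv_\phi\notin\mathcal{S}^c$; since the good event has probability approaching one, $\mathbb{P}[\wv_\phi\in\mathcal{S}]\to1$. The main obstacle is the reverse comparison inequality: it is the only place convexity is essential, and justifying the $\min$--$\max$ interchange when $\mathcal{S}_u$ is unbounded is delicate, which is exactly why the invertibility hypothesis \eqref{eq:inversion} is imposed; by contrast, the forward inequality on the non-convex set $\mathcal{S}^c$ and the argument for $\wv_\phi$ are routine.
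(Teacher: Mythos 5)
Your proposal is correct: it reconstructs the standard proof of the CGMT --- Gordon's comparison inequality giving the forward bound $\mathbb{P}[\Phi<c]\le 2\,\mathbb{P}[\phi\le c]$ on arbitrary compact (possibly non-convex) sets, the convex--concave/minimax-interchange argument (with hypothesis \eqref{eq:inversion} standing in for compactness of $\mathcal{S}_u$, per \cite{kammoun_christos}) giving the reverse bound, and then the localization-by-contradiction step that plays hypothesis (i) on the full problem against hypothesis (ii) on the restriction to $\mathcal{S}^c$ --- which is precisely the argument in the cited reference \cite{thrampoulidis2018precise}. The paper itself states this theorem without proof, deferring to that reference, so there is no in-paper derivation to diverge from; your blind reconstruction matches the canonical one, including the correct observation that convexity is needed only for the reverse inequality and that the statement for $\wv_\phi$ follows deterministically on the good event.
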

	\begin{remark}
		It is worth mentioning that the result in Theorem \ref{eq:th_CGMT} goes beyond  the asymptotic equivalence between the costs of the (AO) and the (PO) to  the localization of the (PO) and (AO) solutions. More specifically,	one can easily see that conditions $(i)$ and $(ii)$ in Theorem~\ref{eq:th_CGMT} imply that the solution of the (AO) lies in the set $\mathcal{S}$ with probability approaching $1$.  Theorem \ref{eq:th_CGMT} allows us to carry over this property to  the solution of the (PO), that is $\wv_{\Phi}$ is in $\mathcal{S}$ with probability approaching $1$. 
	\end{remark}
	\begin{remark}
		\label{remark_imp}
		To satisfy $(i)$ and $(ii)$ in Theorem \ref{eq:th_CGMT}, one can prove that $\phi({\bf g},{\bf s})$ converges  to $\overline{\phi}$   while $\phi_{\mathcal{S}^{c}}({\bf g},{\bf s})$ is lower-bounded by a quantity that converges to $\overline{\phi}_{\mathcal{S}^c}$    with  \begin{equation}\overline{\phi}_{\mathcal{S}^c}>\overline{\phi}. \label{eq:cond_eq}\end{equation} In practice, it is usually the case that $\overline{\phi}$ and $\overline{\phi}_{\mathcal{S}^{c}}$ represent optimal costs of the same  optimization problem but with the solution of the latter being constrained to be  away from the optimal solution of the former. Under this setting, showing that 
			the optimization problem  whose optimal cost is $\overline{\phi}$ admits a unique solution directly implies \eqref{eq:cond_eq}.  
	\end{remark}
\section{Proofs of Box-RLS}
\label{sec:box}
In this appendix we prove Theorem~\ref{Box-RLS MSE} and Theorem \ref{Box_RLS SER}. For simplicity, we will divide the steps of the proof into subsections.
	\subsection{Identifying the (PO) and the (AO)}
For convenience, we consider the error vector $\wv := \xv- \xv_0 $, and also the box set:
\begin{equation}
\mathcal{B}= \bigg\{ \wv \in \mathbb{R}^K | -t - x_ {{0,j}}  \leq w_j \leq t- x_ {{0,j}}, j \in \{1,2, \cdots, K\} \bigg\},
\end{equation}
With this notation, the problem in (\ref{eq:Box-RLS matrix}) can be reformulated as
\begin{equation}\label{Lasso_w}
\widehat{\wv} = \text{arg} \ \underset{\wv \in \mathcal{B}}{\operatorname{\min}} \ \biggl\|  \sqrt{\frac{\rho_{ {d}}}{K}}\widehat{\Hm} \wv -\sqrt{\frac{\rho_{ {d}}}{K}} \Deltam \xv_0 - \zv \biggr\|^2+ \rho_{ {d}}\lambda \| \xv_0 +\wv  \|^2.
\end{equation}
\noindent
To bring the problem in (\ref{Lasso_w}) to the form of \eqref{primary} required by the CGMT, we express the loss function of (\ref{Lasso_w}) in its dual form through the Fenchel's conjugate
$$\ \left\|  \sqrt{\frac{\rho_{ {d}}}{K}}\widehat{\Hm} \wv -\sqrt{\frac{\rho_{ {d}}}{K}} \Deltam \xv_0 - \zv \right\|^2
= \max_{\uv}  \uv^T \left(\sqrt{\frac{\rho_{ {d}}}{K}}\widehat{\Hm} \wv -\sqrt{\frac{\rho_{ {d}}}{K}} \Deltam \xv_0 - \zv \right) - \frac{\| \uv \|^2}{4}.$$
Hence, the problem in (\ref{Lasso_w}) is equivalent to the following:
\begin{align}\label{minmax_1}
	\underset{\wv \in \mathcal{B}}{\operatorname{\min}} \ \underset{\uv}{\operatorname{\max}} &\  \uv^T  \sqrt{\frac{\rho_{ {d}}}{K}}\widehat{\Hm} \wv -\uv^T \sqrt{\frac{\rho_{ {d}}}{K}} \Deltam \xv_0 - \uv^T \zv - \frac{\| \uv \|^2}{4}  + \rho_{ {d}}\lambda \| \xv_0 +\wv \|^2, 
\end{align}
To reach the desired PO form, we introduce the variables
$\vv= \begin{bmatrix}
       \sigma_{ {\hat{H}}} \sqrt{\rho_{ {d}}}\wv            \\[0.3em]
        -\sigma_{ {\Delta}} \sqrt{\rho_{ {d}}}{\xv}_0 \\[0.3em]
     \end{bmatrix}$ $\in \mathbb{R}^{2K}$, 
     $\Gm= \begin{bmatrix}
       \widetilde{\Hm} &    \widetilde{\Deltam}      \\[0.3em]
     \end{bmatrix}$ $\in \mathbb{R}^{N \times 2K}$ and
     $\Cm= \begin{bmatrix}
       \frac{1}{\sigma_{ {\hat{H}}}}\Id_{K} &   \frac{-1}{ \sigma_{ {\Delta}}}  \Id_{K}    \\[0.3em]
     \end{bmatrix}$ $\in \mathbb{R}^{K \times 2K}$, where $\widetilde{\Hm}$ and $\widetilde{\Deltam}$ are $N \times K$ independent matrices with i.i.d. standard normal entries. Now,   using these variables, and after normalization by $1/K$, the above problem can be written as: 
\begin{align}
	{\Phi}:=&\frac{1}{\sqrt{K}}\underset{\vv \in \mathcal{D}}{\operatorname{\min}} \ \underset{\uv}{\operatorname{\max}} \ \frac{1}{K} \uv^T  \Gm \vv - \frac{1}{\sqrt{K}}\uv^T \zv - \frac{\| \uv \|^2}{4\sqrt{K}}  + \frac{\lambda}{\sqrt{K}} \| \Cm \vv \|^2, \label{POprob}
\end{align}
where $\mathcal{D} = \{\vv^T \in \mathbb{R}^{1 \times 2K}, \ \ {\bf v}= [ \sigma_{ {\hat{H}}} \sqrt{\rho_{ {d}}}{\wv}^T \quad -\sigma_{ {\Delta}} \sqrt{\rho_{ {d}}}{{\xv}_0}^T]^{T} |
 {\wv} \in \mathcal{B}\}$.
We note that \eqref{POprob} is in the form of the (PO) and associate with it the following (AO) 
\begin{align}\label{eq:AO1}
	\phi:=\frac{1}{\sqrt{K}}&\underset{\vv \in \mathcal{D} }{\operatorname{\min}} \ \underset{\uv}{\operatorname{\max}} \ \| \vv \| \frac{1}{K}\gv^T \uv - \frac{1}{K}\| \uv \| \qv^T \vv  -  \frac{1}{\sqrt{K}} \uv^T \zv  - \frac{1}{4\sqrt{K}} \| \uv\|^2 + \frac{1}{\sqrt{K}} \lambda \| \Cm \vv \|^2,
\end{align}
where $\qv \in \mathbb{R}^{2K}$ and $\gv \in \mathbb{R}^N$ are independent standard normal vectors. Note that for the moment, we relate the (PO) to the unbounded (AO) as $\mathcal{S}_u=\mathbb{R}^{N}$ is not compact. In the sequel, we check that \eqref{eq:inversion} holds true, which gives support to considering the unbounded (AO) according to Theorem \ref{eq:th_CGMT}. 
\subsection{Scalarizing the (AO)} 
The next step is to simplify the (AO) as it appears in (\ref{eq:AO1}) into an optimization problem involving only scalar variables. Since the vectors $\gv$ and $\zv$ are independent and have i.i.d. Gaussian entries, then, 
$  \| \vv \| \gv- \sqrt{K} \zv$ has i.i.d. entries $\mathcal{N}(0,\| \vv \|^2 + K)$. Hence, for our purposes and using some abuse of notation so that $\gv$ continues to denote a vector with i.i.d. standard normal entries, the corresponding terms in (\ref{eq:AO1}) can be combined as $\sqrt{\| \vv \|^2 + K}\gv^T \uv$, instead. {{Therefore, (\ref{eq:AO1}) is equivalent to}}
\begin{align}\label{simpleAO1}
	\phi=& \ \underset{\vv \in \mathcal{D}}{\operatorname{\min}} \ \underset{\uv}{\operatorname{\max}} \ \frac{\gv^T \uv}{K}\sqrt{\frac{1}{K}\| \vv \|^2 + 1} -\frac{1}{K\sqrt{K}} \| \uv \| \qv^T \vv  - \frac{1}{4 {K}} \| \uv \|^2 + \frac{1}{{K}} \lambda\| \Cm \vv \|^2 .
\end{align}
Expressing the above problem in terms of the original $\wv$ variable:
{{
\begin{align}\label{PO20}
	\phi\!=\! \underset{{\wv}\in \mathcal{B}}{\operatorname{\min}} \ \underset{\uv}{\operatorname{\max}} & \ \frac{\gv^T \uv}{K}\sqrt{\sigma_{ {\hat{H}}}^2 \rho_{ {d}} \frac{1}{K}\| {\wv} \|^2 + \sigma_{ {\Delta}}^2 \rho_{ {d}} \frac{1}{K}\| {\xv}_0\|^2+ 1} - \frac{1}{4{K}} \| \uv \|^2\ \nonumber\\
	&-\!\frac{\| \uv \|}{K\sqrt{K}}\sqrt{\rho_{ {d}}} (\sigma_{ {\hat{H}}}  {\wv}^T{\qv^{\boldsymbol{1}}} -\sigma_{ {\Delta}} {\xv}_0^T{\qv^{\boldsymbol{2}}} ) + \frac{1}{{K}} \lambda \rho_{ {d}} \| {\xv}_0 +{\wv} \|^2,
\end{align}}}
\noindent
where $\qv^{\boldsymbol{1}},\qv^{\boldsymbol{2}} \in \mathbb{R}^K$ are independent standard normal vectors.

Fixing the norm of $\frac{\uv}{\sqrt{K}}$ to ${\beta}: =\frac{\| \uv \|}{\sqrt{K}}$, it is easy to see that its optimal direction should be aligned   with $\gv$. Working with ${\bf x}$ instead of ${\bf w}$   results  into the following optimization problem:
\begin{align}
\phi=\min_{\substack{-t\leq x_j\leq t\\ j=1,\dots,K}} \max_{\beta>0} & \ \beta \frac{\| \gv \|}{\sqrt{K}} \sqrt{\frac{\rho_{ {d}}}{K}( \sigma_{ {\hat{H}}}^2 \|  {\xv} \|^2 -2\sigma_{\hat{H}}^2{\bf x}_0^{T}{\bf x}+\|{\bf x}_0\|^2) +1} -\frac{\beta^2}{4} \nonumber\\
&-\beta\sqrt{\rho_d}\sigma_{\hat{H}}\frac{1}{K}{\bf x}^{T}{\bf q}^{\bf 1}+ \frac{1}{K}\beta \sqrt{\rho_d}{\bf x}_0^{T}\tilde{\bf h}+\lambda \rho_d\frac{1}{K}\|{\bf x}\|^2,\label{eq:pr}
\end{align}
where $\tilde{\bf h}=\sigma_{\hat{H}} {\bf q}^{\bf 1}+\sigma_{\Delta}{\bf q}^{\bf 2}$. 
At this point, it is worth mentioning that \eqref{eq:pr} is convex in ${\bf x}$ and concave in $\beta$. Based on this, we can prove that \eqref{eq:inversion} holds true. Indeed using \cite[Cor. 37.3.2]{rockafellar2015convex},  we can flip the order of $\min_{\bf x}\max_{\beta}$. Now, consider the following problem:
\begin{align}
	\max_{\beta} \max_{\substack{\tilde{\bf u}\\ \|\tilde{\bf u}\|=1} }\min_{\substack{-t\leq x_j\leq t\\ j=1,\dots,K}}& \  \beta {\gv^{T}\tilde{\bf u}} \sqrt{\frac{\rho_{ {d}}}{K}( \sigma_{ {\hat{H}}}^2 \|  {\xv} \|^2 -2\sigma_{\hat{H}}^2{\bf x}_0^{T}{\bf x}+\|{\bf x}_0\|^2) +1} -\frac{\beta^2}{4}\nonumber\\
&-\beta\sqrt{\rho_d}\sigma_{\hat{H}}\frac{1}{K}{\bf x}^{T}{\bf q}^{\bf 1}+ \frac{1}{K}\beta \sqrt{\rho_d}{\bf x}_0^{T}\tilde{\bf h}+\lambda \rho_d\frac{1}{K}\|{\bf x}\|^2.\label{eq:pr1}
\end{align}
Then, based on \cite[Lemma 8]{svm_abla} and setting ${\bf u}=\sqrt{K}\beta\tilde{\bf u}$ we can prove that \eqref{eq:pr1} is the same as \eqref{eq:pr} in which the order of the min-max is inverted. This completes the proof of \eqref{eq:inversion}, which as aforementioned, allows us to extend the scope of the CGMT for optimization problems in which the variable ${\bf u}$ is constrained to lie in a non-compact set.  

Now, getting back to the optimization problem in \eqref{eq:pr} and  flipping the order of $\min_{\bf x}\max_{\beta}$ results into the following optimization problem:
\begin{align}
\phi=\max_{\beta>0}  \min_{\substack{-t\leq x_j\leq t\\ j=1,\dots,K}} & \nonumber\\
\hat{\mathcal{H}}(\beta,{\bf x}):=&\beta \frac{\| \gv \|}{\sqrt{K}} \sqrt{\frac{\rho_{ {d}}}{K}( \sigma_{ {\hat{H}}}^2 \|  {\xv} \|^2 -2\sigma_{\hat{H}}^2{\bf x}_0^{T}{\bf x}+\|{\bf x}_0\|^2) +1} -\frac{\beta^2}{4}\nonumber\\
&-\beta\sqrt{\rho_d}\sigma_{\hat{H}}\frac{1}{K}{\bf x}^{T}{\bf q}^{\bf 1}+ \beta \frac{1}{K}\sqrt{\rho_d}{\bf x}_0^{T}\tilde{\bf h}+\lambda \rho_d\frac{1}{K}\|{\bf x}\|^2.\label{eq:pr2}
\end{align}
Prior to proceeding further, we shall first check that the optimization over $\beta$ of the above problem is not achieved in the limit $\beta\to 0$ and more specifically, there exists $\tilde{\delta}>0$ such that taking the supremum over $\beta>\tilde{\delta}$ instead of $\beta>0$ would almost surely not change the optimal cost of \eqref{eq:pr2}. Towards this goal, first note that 
$$
 \left|\min_{\substack{-t\leq x_j\leq t\\ j=1,\dots,K}}  \hat{\mathcal{H}}(\beta,{\bf x})\right|\leq \left|\beta \frac{\|{\bf g}\|}{\sqrt{K}}-\frac{\beta^2}{4}+\beta\left|\frac{1}{K}\sqrt{\rho_{ {d}}}{\bf x}_0^{T}\tilde{\bf h}\right|\right|,  $$
 and the function $\beta\mapsto \beta \left(\frac{\|{\bf g}\|}{\sqrt{K}}+\frac{1}{K}\sqrt{\rho_{ {d}}}\left|{\bf x}_0^{T}\tilde{\bf h}\right|\right)-\frac{\beta^2}{4}$ is increasing and positive for all $\beta\in \left[0,2 \left(\frac{\|{\bf g}\|}{\sqrt{K}}+\frac{1}{K}\sqrt{\rho_{ {d}}}\left|{\bf x}_0^{T}\tilde{\bf h}\right|\right)\right]$. As  $2 \left(\frac{\|{\bf g}\|}{\sqrt{K}}+\frac{1}{K}\sqrt{\rho_{ {d}}}\left|{\bf x}_0^{T}\tilde{\bf h}\right|\right)$ converges in probability to $2\sqrt{\delta}$, with probability approaching one, for all  
 $\tilde{\delta}\in(0,\sqrt{\delta})$, and all $\beta\in(0,\tilde{\delta})$,
 \begin{equation}
 \left|\min_{\substack{-t\leq x_j\leq t\\ j=1,\dots,K}}  \hat{\mathcal{H}}(\beta,{\bf x})\right|\leq 4\tilde{\delta}\sqrt{\delta}.
	 \label{eq:min_proof}
 \end{equation}
To conclude it suffices  to prove that there exists a $\beta_0$ such that with probability approaching one,
\begin{equation}
 \min_{\substack{-t\leq x_j\leq t\\ j=1,\dots,K}}  \hat{\mathcal{H}}(\beta_0,{\bf x}) > \Theta,
\label{eq:HH}
\end{equation}
where $\Theta$ is a some positive constant. Indeed, if \eqref{eq:HH} is satisfied then almost surely, 
$$
\sup_{\beta>0}\min_{\substack{-t\leq x_j\leq t\\ j=1,\dots,K}}\hat{\mathcal{H}}(\beta_0,{\bf x}) >\Theta.
$$
Setting $\tilde{\delta}=\min(\frac{\Theta}{4\sqrt{\delta}},\sqrt{\delta})$ in \eqref{eq:min_proof}, we conclude thus that the supremum over $\beta$ could not be attained in the interval $[0,\tilde{\delta}]$. 
To keep the flow of the proof, \eqref{eq:HH} is proved in Appendix \ref{app:pp}. With this result at hand, we are now ready to proceed to the optimization of the (AO).  
Let  $\chi=\frac{\rho_d}{K}\left(\sigma_{\hat{H}}^2\|{\bf x}\|^2-2\sigma_{\hat{H}}^2{\bf x}_0^T{\bf x}+\|{\bf x}\|_0^2\right)+1$.  
To make the above optimization problem separable, we express the term in the square root in a variational form using the identity:  $\sqrt{\chi} = \underset{r>0}{\operatorname{\min}} \ \frac{1}{2r} + \frac{r\chi}{2}$. Note that at optimum, $r_\star=\frac{1}{\sqrt{\chi}}$.  
Using the fact that $\chi\geq 1$ and as such bigger than any small positive constant, we also have 
$$
\sqrt{\chi}=\min_{0<r\leq C'} \frac{1}{2r}+\frac{r\chi}{2}.
$$
where $C'$ is any constant greater than $1$. Similarly, as $\chi$ is almost surely bounded by some constant, we can also argue that:
$$
\sqrt{\chi}=\min_{\epsilon' <r\leq C'} \frac{1}{2r}+\frac{r\chi}{2}.
$$
where $\epsilon'$ is a sufficiently small positive constant. 
Using this relation, the optimization problem \eqref{eq:pr1} becomes 
{{
\begin{align}\label{AA23}
	\phi= \underset{{\beta} \geq 0}{\operatorname{\max}}  \ \underset{\epsilon' <r\leq C'}{\operatorname{\min}} & \  {\beta} \biggl( \frac{  \| \gv \|}{2r\sqrt{K}} + \frac{ r\| \gv \| }{2 \sqrt{K}}  + \frac{ \rho_{ {d}} r\| \gv \| \| {\xv}_0 \|^2}{2 \sqrt{K} K }+ \sqrt{\rho_{ {d}}}  \frac{1}{K}\tilde{\hv}^{T} {\xv}_0 \biggr) \nonumber\\
	&+ \frac{1}{K}\sum_{j = 1}^{K} \biggl[  \underset{-t\leq x_j \leq t }{\operatorname{\min}} \ \biggl(\frac{{\beta} \sigma_{ {\hat{H}}}^2 \rho_{ {d}} r\| \gv \|}{2 \sqrt{K} } + {\lambda} \rho_{ {d}}\biggr) x_j^2  - {\beta} \biggl( \frac{\rho_{ {d}} \sigma_{ {\hat{H}}}^2 r\| \gv \| }{\sqrt{K} } {x}_{0,j}  +  \sqrt{\rho_{ {d}}}\sigma_{ {\hat{H}}} q^1_{j}  \biggr) {x}_j \biggr]-\frac{{\beta}^2}{4}. 
\end{align} 
}}
For ease of notation, define $\tilde{\theta} :=\frac{r\| \gv \|}{\sqrt{K}}$. As $\frac{1}{\sqrt{K}}\|{\bf g}\|$ is almost surely bounded above and below, the variable $\tilde{\theta}$ is almost surely bounded above by a constant $C$ which we shall assume as large as needed  and also bounded below by some positive constant $\epsilon$ . Introducing this notation leads to: 
{{
\begin{align}\label{AA23}
\phi=	 \ \underset{\beta > 0}{\operatorname{\max}} \ \underset{\epsilon\leq \tilde{\theta}  \leq C}{\operatorname{\min}} & \ \frac{  \beta \| \gv \|^2}{2\tilde{\theta}K} + \frac{ \beta \tilde{\theta}}{2 } - \frac{\beta^2}{4} + \frac{\beta \tilde{\theta}\rho_{ {d}} \| {\xv}_0 \|^2}{2 K}  + \sqrt{\rho_{ {d}}} \beta \frac{1}{K}{\xv}_0^{T} \tilde{\hv} \nonumber \\
	& + \frac{1}{K}\sum_{j = 1}^{K} \biggl[ \underset{ -t\leq {x}_j \leq t }{\operatorname{\min}} \ \biggl(\frac{\beta \tilde{\theta}\rho_{ {d}}\sigma_{ {\hat{H}}}^2}{2} + \lambda \rho_{ {d}} \biggr) x_j^2  -  \beta \biggl( \rho_{ {d}} \sigma_{ {\hat{H}}}^2 {x}_{0,j}\tilde{\theta}+\sqrt{\rho_{ {d}}} \sigma_{ {\hat{H}}} q^{1}_{j}   \biggr) {x}_j \biggr]. 
\end{align} 
For $\beta > 0$, the optimal solution in the variables ${x}_j$, $j=1,\dots,K$ of (\ref{AA23}) is given by:
\begin{equation}\label{AO_Sol_box}
\tilde{x}_j =
\begin{cases} 
	-t  ,\ \text{if}  \  q^1_{j} < x_{0}^{-}(\tilde{\theta},\beta,x_{0,j}), \\ 
           
	t  ,\ \text{if}  \  q^1_{j} > x_{0}^{+}(\tilde{\theta},\beta,x_{0,j}), \\      
   
		 \frac{\beta \biggl( {\rho_{ {d}} \sigma_{ {\hat{H}}}^2}x_{0,j}\tilde{\theta} + \sqrt{\rho_{ {d}}} \sigma_{ {\hat{H}}} q^1_{j} \biggr)}{\rho_{ {d}} \sigma_{ {\hat{H}}}^2 \beta \tilde{\theta} + 2 \lambda \rho_{ {d}}}  ,\ \text{otherwise},
\end{cases}
\end{equation}
where 
\begin{align}
	x_{0}^{-}(\tilde{\theta},\beta,x_{0,j})&=-t \biggl(\tilde{\theta}\sqrt{\rho_{ {d}}} \sigma_{ {\hat{H}}} + \frac{2 \lambda \rho_{ {d}}}{\beta\sqrt{\rho_{ {d}}}\sigma_{ {\hat{H}}} } \biggr) - \sqrt{\rho_{ {d}}} \sigma_{ {\hat{H}}} x_{0,j}\tilde{\theta},\label{eq:x0j-}\\
	x_{0}^{+}(\tilde{\theta},\beta,x_{0,j})& = t \biggl(\tilde{\theta}\sqrt{\rho_{ {d}}} \sigma_{ {\hat{H}}} + \frac{2 \lambda \rho_{ {d}}}{\sqrt{\rho_{ {d}}}\sigma_{ {\hat{H}}} \beta} \biggr) - \sqrt{\rho_{ {d}}} \sigma_{ {\hat{H}}} x_{0,j} \tilde{\theta}.\label{eq:x0j+}
\end{align}
{{To simplify notation, define $\xi = \sqrt{\rho_{ {d}}} \sigma_{ {\hat{H}}}$, then $x_{0}^{-}(\tilde{\theta},\beta,x_{0,j}) =-t\bigl({\xi}\tilde{\theta} + \frac{2 \lambda\rho_{ {d}}}{\xi \beta} \bigr) - {\xi} x_{0,j}\tilde{\theta}$, and $x_{0}^{+}(\tilde{\theta},\beta,x_{0,j}) =t\bigl({\xi}\tilde{\theta} + \frac{2 \lambda\rho_{ {d}}}{\xi \beta} \bigr) - {\xi} x_{0,j}\tilde{\theta}$}}.
With these notations at hand, the above optimization problem reduces to the following SO
\begin{align}\label{SO4}
	& \ \underset{\beta > 0}{\operatorname{\max}}  \ \underset{\epsilon <\tilde{\theta}<C }{\operatorname{\min}} \widetilde{D}(\tilde{\theta},\beta,\gv,\qv^{\boldsymbol{1}}) :=  \frac{  \beta \| \gv \|^2}{2\tilde{\theta}K} + \frac{ \beta \tilde{\theta} }{2 } - \frac{ \beta^2}{4}  + \frac{\beta\tilde{\theta} \rho_{ {d}} \| {\xv}_0 \|^2}{2 K} + \sqrt{\rho_{ {d}}} \beta   \frac{1}{K}\tilde{\hv}^{T} {\xv}_0+ \frac{1}{K}\sum_{j = 1}^{K} v (\tilde{\theta},\beta ; q^1_{j}),
\end{align} 
where
\begin{equation}\label{soft_TH}
	v( \tilde{\theta},\beta ; q^1_{j}) =
\begin{cases} 
	t \bigl(\tilde{c}_j +\beta \xi q^1_j \bigr), \ \text{if}  \   q^1_{j} < x_{0}^{-}(\tilde{\theta},\beta,x_{0,j}), \\ 
           
	       t \bigl( \tilde{d}_j - \beta  \xi q^1_j  \bigr), \ \text{if}   \  q^1_{j} > x_{0}^{+}(\tilde{\theta},\beta,x_{0,j}), \\      
   
		 -\frac{\beta^2 \bigl( \xi^2 x_{0,j} \tilde{\theta}+\xi  q^1_{j} \bigr)^2}{2\xi^2 \beta\tilde{\theta} + 4 \lambda \rho_{ {d}}}  ,\ \text{otherwise},
\end{cases}
\end{equation}
where $\tilde{c}_j = \frac{-\beta \xi}{2} x_{0}^{-}(\tilde{\theta},\beta,x_{0,j}) + \frac{\beta \tilde{\theta}\xi^2}{2 }x_{0,j}$, and $\tilde{d}_j = \frac{\beta \xi}{2} x_{0}^{+}(\tilde{\theta},\beta,x_{0,j}) - \frac{\beta \tilde{\theta}\xi^2}{2 }x_{0,j}$.
\subsection{ Asymptotic analysis of the SO problem}
After simplifying the (AO) as in (\ref{SO4}), we are now in a position to analyze its limiting behavior.
Using the Weak Law of Large Numbers (WLLN) \footnote{We write $\overset{P}{\longrightarrow}$ to denote convergence in probability as $K \to \infty$.},  $\frac{1}{K} \| \gv \|^2 \overset{P}{\longrightarrow} \frac{N}{K} := \delta$, $\frac{1}{K} {\xv}_0^T \tilde{\hv}  \overset{P}{\longrightarrow} 0$, and $\frac{1}{K} \|{\xv}_{0} \|^2 \overset{P}{\longrightarrow} 1$.

{{To analyze the behavior of the summand,  recall that each $x_{0,j}$ takes values $\pm1/\sqrt{\mathcal{E}}, \pm3/\sqrt{\mathcal{E}},..., \pm(M-1)/\sqrt{\mathcal{E}}$ with equal probability $1/M$.}} Let $i = \pm1, \pm3,..., \pm(M-1)$ and denote by
$\ell_i =-t(\xi\tilde{\theta}+\frac{2\lambda \rho_d}{\xi \beta})-\frac{\xi i\tilde{\theta}}{ \sqrt{\mathcal{E}}}$, $\mu_i =t(\xi\tilde{\theta}+\frac{2\lambda\rho_d}{\xi \beta})-\frac{\xi i \tilde{\theta}}{ \sqrt{\mathcal{E}}}$, $c_i = \frac{-\beta  \xi}{2} \ell_i + \frac{\beta \xi^2 i\tilde{\theta}}{2 \sqrt{\mathcal{E}}}$, and $d_i = \frac{\beta \xi}{2} \mu_i - \frac{\beta \xi^2 \tilde{\theta}i}{2 \sqrt{\mathcal{E}} }$.\\
Hence,  it can be shown that for all $\tilde{\theta}>0$ and $\beta > 0$,
\begin{equation*}
	\frac{1}{K} \sum_{j = 1}^{K} v ( \tilde{\theta},\beta ; q^1_{j}) \overset{P}{\longrightarrow} Y(\tilde{\theta},\beta),
\end{equation*} 
where 
\begin{equation}
	Y(\tilde{\theta},\beta):= \frac{1}{M} \sum_{i= \pm1,\pm3,\cdots,\pm(M-1)} \tilde{Y}(\tilde{\theta},\beta,i),
\end{equation}
with 
\begin{align}
	\tilde{Y}(\theta,\beta,i)&:=\mathbb{E}_{h \thicksim\mathcal{N}(0,1)} [v (\tilde{\theta},\beta;h,\ell_i,\mu_i) ]\nonumber\\
	&=   -\frac{\beta^2}{{2 \xi^2 \beta\tilde{\theta}} + 4 \lambda\rho_{ {d}}} \int_{\ell_{i}}^{\mu_{i}}  \bigg( \frac{\xi^2 i \tilde{\theta}}{\sqrt{\mathcal{E}} } + \xi  h \bigg)^2 p(h) {\rm{d}}h 
+\int_{- \infty}^{\ell_{i}} t(c_i +\beta  \xi h ) p(h) {\rm{d}}h+ \int_{\mu_{i}}^{\infty} t (d_i -\beta \xi  h ) p(h) {\rm{d}}h  \nonumber \\
	=& \ t \big(c_i Q(-\ell_i) + d_i Q(\mu_i) \big) - \beta \xi  t \big( p(\ell_i) + p(\mu_i) \big) 
	-\frac{\beta^2}{{2 \xi^2 \beta\tilde{\theta}} + 4 \lambda\rho_{ {d}}} \int_{\ell_{i}}^{\mu_{i}}  \bigg( \frac{\xi^2 i\tilde{\theta}}{\sqrt{\mathcal{E}} } + \xi  h \bigg)^2 p(h) \text{d}h \nonumber \\
 = \ & t \big(c_i Q(-\ell_i) + d_i Q(\mu_i) \big) -  \beta \xi t \big( p(\ell_i) + p(\mu_i) \big) 
	-\frac{\beta^2}{{2 \xi^2 \beta}\tilde{\theta} + 4 \lambda \rho_{ {d}}} \bigg( \bigg(\xi^2  + \frac{\xi^4 i^2\tilde{\theta}^2}{\mathcal{E} }\bigg)\big(Q(\ell_i) - Q(\mu_i) \big) \nonumber \\
	+& \xi \big(\xi \ell_i  + \frac{2 \xi^2 i \tilde{\theta}}{\sqrt{\mathcal{E}} } \big) p(\ell_i)- \xi \big(\xi\mu_i + \frac{2 \xi^2 i\tilde{\theta}}{\sqrt{\mathcal{E}} } \big) p(\mu_i)  \bigg).
\end{align}
For a given $\beta$, consider the sequence of functions $$\varphi_K:\tilde{\theta}\mapsto \frac{1}{K}\sum_{j=1}^{K}v( \tilde{\theta},\beta ; q^1_{j}). $$ 
We can easily see that this sequence of functions is concave in $\tilde{\theta}$ since it has been derived by taking the infimum of linear functions in $\tilde{\theta}$ (cf. \eqref{AA23}). Hence,  $\tilde{\theta}\mapsto Y(\theta,\beta)$ is concave in $\tilde{\theta}$. Since the convergence of concave functions is uniform over compact sets \cite[Theorem.II.1]{cox}, $\tilde{\theta}\mapsto \varphi_K(\tilde{\theta})$ converges uniformly to $\tilde{\theta}\mapsto Y(\tilde{\theta},\beta)$.
Moreover, it is easy to prove that $\tilde{\theta} \mapsto \frac{\beta \|{\bf g}\|^2}{2\tilde{\theta}K}+\frac{\beta\tilde{\theta}}{2}+\frac{\beta \tilde{\theta}\rho_{ {d}}\|{\bf x}_0\|^2}{2K}$ converges uniformly to $\tilde{\theta}\mapsto \frac{\beta\delta}{2\tilde{\theta}} +\frac{\beta\tilde{\theta}}{2}+\frac{\beta\tilde{\theta}\rho_{ {d}}}{2}$ on the compact set $[\epsilon,C]$. Combining both results yields that $\tilde{\theta}\mapsto \widetilde{D}(\tilde{\theta},\beta, {\bf g},{\bf q}^{\bf 1})$ converges uniformly to $\tilde{\theta}\mapsto \overline{D}(\theta,\beta)$, where
$$
\overline{D}(\tilde{\theta},\beta):=\frac{\beta\delta}{2\tilde{\theta}}-\frac{\beta^2}{4}+\frac{\beta\tilde{\theta}}{2}+\frac{\beta\tilde{\theta}\rho_{ {d}}}{2}+Y(\tilde{\theta},\beta).
$$
As a consequence,
$$
 \underset{\epsilon <\tilde{\theta}<C }{\operatorname{\min}} \widetilde{D}(\tilde{\theta},\beta,{\bf g},{\bf q}^{\bf 1})\to  \underset{\epsilon <\tilde{\theta}<C }{\operatorname{\min}} \overline{D}(\tilde{\theta},\beta).
$$
We need now to prove that the supremum over $\beta$ converges to the supremum of the right-hand side of the above equation. To this end, note that the function $\beta \mapsto \underset{\epsilon <\tilde{\theta}<C }{\operatorname{\min}} \widetilde{D}(\tilde{\theta},\beta,{\bf g},{\bf q}^{\bf 1})$ is concave and converges pointwise to $\beta\mapsto \underset{\epsilon <\tilde{\theta}<C }{\operatorname{\min}} \overline{D}(\tilde{\theta},\beta)$. Moreover, it is easy to check that $\lim_{\beta\to\infty} \underset{\epsilon <\tilde{\theta}<C }{\operatorname{\min}} \overline{D}(\tilde{\theta},\beta)=-\infty$. Using Lemma 10 in \cite{thrampoulidis2018precise}, we conclude that:
\begin{equation}
	\phi=	\sup_{\beta\geq 0} \underset{\epsilon <\tilde{\theta}<C }{\operatorname{\min}} \widetilde{D}(\tilde{\theta},\beta,{\bf g},{\bf q}^{\bf 1}) \overset{P}{\longrightarrow} \overline{\phi}:=\sup_{\beta\geq 0}\underset{\epsilon <\tilde{\theta}<C }{\operatorname{\min}} \overline{D}(\tilde{\theta},\beta). \label{eq:supre}
\end{equation}
\subsection{Proof of the uniqueness of $(\tilde{\theta}_\star,\beta_\star)$ solving $\sup_{\beta\geq 0} \min_{\epsilon<\tilde{\theta}<C}\overline{D}(\tilde{\theta},\beta)$}
Based on the above convergence, it follows from the CGMT that the optimal cost of the (PO) converges to the asymptotic limit of the (AO) which is given by $\sup_{\beta >0}\min_{\epsilon\leq \tilde{\theta} <C} \overline{D}(\tilde{\theta},\beta)$. However, our interest does not directly concern the characterization of the asymptotic limit of the (PO) but that of functionals of the  vector ${\bf w}={\bf x}-{\bf x}_0$ that can be  linked to some important metrics like MSE or SEP. As explained in Remark \ref{remark_imp}, proving that the max-min problem in \eqref{eq:supre} admits a unique solution $(\beta_\star,\tilde{\theta}_\star)$ would allow us to transfer any property of the solution of the (AO) to that of the (PO).  
Unfortunately, the objective in the max-min problem \eqref{eq:supre} is not convex in $\tilde{\theta}$, and hence the same approach pursued in \cite{thrampoulidis2018precise} could not be used here. A new approach to handle this problem is thus proposed. 
To begin with,  we notice that since $\beta\mapsto -\frac{\beta^2}{4}$ is strictly concave, $\beta\mapsto \underset{\epsilon <\tilde{\theta}<C }{\operatorname{\min}} \overline{D}(\tilde{\theta},\beta)$ is strictly concave in $\beta$. It thus has a unique maximum  as it satisfies $\lim_{\beta\to\infty} \underset{\epsilon <\tilde{\theta}<C }{\operatorname{\min}} \overline{D}(\tilde{\theta},\beta)=-\infty$.    
Denote by $\beta_\star$ such a maximum. Let us prove that there exists a unique $\theta_\star$ that minimizes function $h$ defined as $h:\tilde{\theta}\mapsto \overline{D}(\tilde{\theta},\beta_\star)$. 
The proof of this result will be carried out into the following steps:
\begin{enumerate}
	\item First, we prove that the minimum should be in the interior domain of $(\epsilon,C)$ for $C$ sufficiently large and $\epsilon$ sufficiently small. 
	\item Next, we establish that $Y(\tilde{\theta},\beta_\star)$ satisfies:
		\begin{equation}
		\left|\tilde{\theta}\frac{\partial^3 Y(\tilde{\theta},\beta_\star)}{\partial \tilde{\theta}^3}\right|< 3\left|\frac{\partial^2 Y(\tilde{\theta},\beta_\star)}{\partial \tilde{\theta}^2}\right|.
			\label{eq:req}
		\end{equation}
	\item Starting from the observation that $\tilde{\theta}_\star$ is in the interior domain of the optimization set and based on the previously established results, we prove that $h$ admits a unique minimum.    
\end{enumerate}  
We start by establishing the first statement. It is obvious that the optimum could not be reached when $\tilde{\theta}$ is in the vicinity of zero since  $\lim_{\tilde{\theta}\to 0^+} \overline{D}(\tilde{\theta},\beta_\star)=\infty$. Similarly, to prove that the minimum is not reached when $\tilde{\theta}$ grows to infinity, it suffices to check that $\lim_{\tilde{\theta}\to\infty} \overline{D}(\tilde{\theta},\beta_\star)=\infty$. Simple calculations lead to:
\begin{align*}
	Y(\tilde{\theta},\beta_\star)\underset{\tilde{\theta}\to\infty}{\sim} -\frac{\beta \xi^2(M-1)^2}{M\mathcal{E}}\tilde{\theta}&=-\frac{3\beta_\star \xi^2(M-1)^2}{M(M^2-1)}\tilde{\theta}\\
	&=-\frac{3\beta_\star \rho_{ {d}}\sigma_{ {\hat{H}}^2}(M-1)}{M(M+1)}\tilde{\theta}.
\end{align*}
Using this approximation, we thus have:
$$
\overline{D}(\tilde{\theta},\beta_\star)\underset{\tilde{\theta}\to\infty}{\sim}\tilde{\theta}\beta_\star\left( -\frac{3 \rho_{ {d}}\sigma_{ {\hat{H}}}^2(M-1)}{M(M+1)} + \frac{1}{2}+\frac{\rho_{ {d}}}{2} \right).
$$
It is easy to check that for $M\geq 2$, $\frac{3(M-1)}{M(M+1)}\leq \frac{1}{2}$. As $\sigma_{ {\hat{H}}}^2<1$,  we thus have $\lim_{\theta\to\infty} \overline{D}(\tilde{\theta},\beta_\star)=\infty$. 

To prove \eqref{eq:req}, we need to compute the first three derivatives of the function $\tilde{\theta}\mapsto Y(\tilde{\theta},\beta_\star)$. After simple calculations, we can establish that:
\begin{align}
	\frac{\partial Y(\tilde{\theta},\beta_\star)}{\partial \tilde{\theta}}=\frac{1}{M}& \sum_{\substack{i=\pm 1, \cdots,\\ \pm(M-1)}}\Big[\int_{\ell_i}^{\mu_i} 2\beta_\star^3\xi^2\left(\frac{\frac{\xi^2i\tilde{\theta}}{\sqrt{\mathcal{E}}}+\xi h}{2\xi^2\beta_\star\tilde{\theta}+4\lambda\rho_{ {d}}}\right)^2 p(h)dh\nonumber\\
	&-\frac{2\beta_\star^2\xi^2i}{\sqrt{\mathcal{E}}}\int_{\ell_i}^{\mu_i}\left(\frac{\frac{\xi^2i\tilde{\theta}}{\sqrt{\mathcal{E}}}+\xi h}{2\xi^2\beta_\star\tilde{\theta}+4\lambda\rho_{ {d}}}\right)p(h)dh +\int_{-\infty}^{\ell_i} t^2\beta_\star \frac{\xi^2}{2}p(h)dh+\int_{\mu_i}^{\infty} t^2\beta_\star \frac{\xi^2}{2}p(h)dh\Big].
\end{align}
Based on this expression, we compute the second and third derivatives of $Y(\tilde{\theta},\beta_\star)$ as:
\begin{align}
	&\frac{\partial^2 Y(\tilde{\theta},\beta_\star)}{\partial \tilde{\theta}^2}=-2\beta_\star^2\frac{1}{M}\!\!\!\sum_{\substack{i=\pm 1, \cdots,\\ ,\pm(M-1)}}\!\!\!\int_{\ell_i}^{\mu_i}\frac{(4\lambda\rho_{ {d}}\frac{\xi^2i}{\sqrt{\mathcal{E}}}-2\xi^3\beta_\star h)^2 }{(2\xi^2\beta_\star\tilde{\theta}+4\lambda\rho_{ {d}})^3}p(h)dh,\label{sec}
\end{align}
\begin{align}
	&\frac{\partial^3 Y(\tilde{\theta},\beta_\star)}{\partial \tilde{\theta}^3}=6\beta_\star^2\frac{1}{M}\!\!\!\sum_{\substack{i=\pm 1,, \cdots,\\ \pm(M-1)}}\!\!\!\int_{\ell_i}^{\mu_i}\frac{2\xi^2\beta_\star(4\lambda\rho_{ {d}}\frac{\xi^2i}{\sqrt{\mathcal{E}}}-2\xi^3\beta_\star h)^2 }{(2\xi^2\beta_\star\tilde{\theta}+4\lambda\rho_{ {d}})^4}p(h)dh.\label{third}
\end{align}
Leveraging \eqref{sec} and \eqref{third}, it is easy to check that:
\begin{align*}
	&\tilde{\theta}\frac{\partial^3 Y(\tilde{\theta},\beta_\star)}{\partial \tilde{\theta}^3}=-3\frac{\partial^2 Y(\tilde{\theta},\beta_\star)}{\partial \tilde{\theta}^2} -24\beta_\star^2\frac{1}{M}\!\!\!\sum_{\substack{i=\pm 1,\\ \cdots,\pm(M-1)}}\!\!\!\int_{\ell_i}^{\mu_i}\frac{4\lambda\rho_{ {d}}(4\lambda\rho_{ {d}}\frac{\xi^2i}{\sqrt{\mathcal{E}}}-2\xi^3\beta_\star h)^2 }{(2\xi^2\beta_\star\tilde{\theta}+4\lambda\rho_{ {d}})^4}p(h)dh,
\end{align*}
from which we deduce that 
$$
\left|\tilde{\theta} \frac{\partial^3 Y(\tilde{\theta},\beta_\star)}{\partial \tilde{\theta}^3}\right| < 3\left|\frac{\partial^2 Y(\tilde{\theta},\beta_\star)}{\partial \tilde{\theta}^2}\right|.
$$
With this result at hand, we are now ready to prove that function $h$ admits a unique minimum. We already proved that any minimum should lie in the interior domain of $(\epsilon,C)$. Assume that there exists two minimizers of $h$ which we denote by $\tilde{\theta}_{\star,1}$ and $\tilde{\theta}_{\star,2}$ such that $\tilde{\theta}_{\star,1}<\tilde{\theta}_{\star,2}$. The first order and second order conditions imply that:
\begin{equation}
	\frac{{\rm{d}} h}{{\rm{d}} \tilde{\theta}}\Big|_{\tilde{\theta}=\tilde{\theta}_{\star,i}}=0 \ \ \textnormal{and} \ \ \  \frac{{\rm{d}}^2 h}{{\rm{d}}\tilde{\theta}^2}\Big|_{\tilde{\theta}=\tilde{\theta}_{\star,i}}\geq 0, \ \ i=1,2.  \label{eq:cond1}
\end{equation}
Hence, there exists $\tilde{\theta}_3\in(\tilde{\theta}_{\star,1},\tilde{\theta}_{\star,2})$ such that \begin{equation}\frac{{\rm{d}}^2 h}{{\rm{d}}\tilde{\theta}^2}\Big|_{\tilde{\theta}=\tilde{\theta}_3}=0.\label{eq:cond2}\end{equation} We will prove that this will lead to contradiction unless $\tilde{\theta}_{\star,1}=\tilde{\theta}_{\star,2}$. To this end, first we notice that:
$$
\frac{{\rm{d}}^2 h}{{\rm{d}}\tilde{\theta}^2}=-\beta_\star\delta \tilde{\theta}^{-3}+\frac{\partial^2Y(\tilde{\theta},\beta_\star)}{\partial \tilde{\theta}^2}.
$$
Hence, from \eqref{eq:cond1} and \eqref{eq:cond2} we obtain  the following relations 
\begin{align}
	&\beta_\star\tilde{\delta}+\tilde{\theta}_{\star,1}^3\frac{\partial^2Y(\tilde{\theta},\beta_\star)}{\partial \tilde{\theta}^2}\Big|_{\tilde{\theta}=\tilde{\theta}_{\star,1}}\geq 0,\label{eq:1}\\
	&\beta_\star\tilde{\delta}+\tilde{\theta}_{\star,2}^3\frac{\partial^2Y(\tilde{\theta},\beta_\star)}{\partial \tilde{\theta}^2}\Big|_{\tilde{\theta}=\tilde{\theta}_{\star,2}}\geq 0,\label{eq:2}\\
	& \beta_\star\tilde{\delta}+\tilde{\theta}_{\star,3}^3\frac{\partial^2Y(\tilde{\theta},\beta_\star)}{\partial \tilde{\theta}^2}\Big|_{\tilde{\theta}=\tilde{\theta}_{\star,3}}=0.  \label{eq:3}
\end{align}
Consider function $k:\tilde{\theta}\mapsto \beta\tilde{\delta}+ \tilde{\theta}^3 \frac{\partial^2Y(\tilde{\theta},\beta_\star)}{\partial \tilde{\theta}^2}$. The derivative of $k$ with respect to $\tilde{\theta}$ is given by:
$$
k'(\tilde{\theta})= \tilde{\theta}^{2}\left(3\frac{\partial^2 Y(\tilde{\theta},\beta_\star)}{\partial \tilde{\theta}^2}+\tilde{\theta}\frac{\partial^3 Y(\tilde{\theta},\beta_\star)}{\partial \tilde{\theta}^3}\right).
$$
From \eqref{eq:req}, $k'(\tilde{\theta})< 0$ and as such $k$ is  decreasing. Hence, the relations \eqref{eq:1}, \eqref{eq:2} and \eqref{eq:3} could not simultaneously hold. Hence, $\tilde{\theta}_{\star,1}=\tilde{\theta}_{\star,2}$, and as a consequence $h$ admits a unique minimizer which we denote by $\tilde{\theta}_\star$. 
Now, since for any $\beta>0$, $\tilde{\theta}\mapsto\overline{D}(\tilde{\theta},\beta)$ goes to infinity when $\tilde{\theta}$ approaches zero or grows to infinity, we thus have:
\begin{equation}
\min_{\epsilon\leq\tilde{\theta}\leq C}\overline{D}(\tilde{\theta},\beta) = \inf_{\tilde{\theta}\geq 0}\overline{D}(\tilde{\theta},\beta), \ \ \forall \ \  \beta>0.
	\label{eq:rel}
\end{equation}
The above relation holds for all $\beta>0$. We already proved in Section B that the optimal solution in $\beta$ is almost surely away from zero. Thus, 
\begin{equation}
\sup_{\beta>0} \min_{\epsilon\leq\tilde{\theta}\leq C}\overline{D}(\tilde{\theta},\beta) = \sup_{\beta>0} \inf_{\tilde{\theta}>0}\overline{D}(\tilde{\theta},\beta).
\end{equation}
\subsection{Asymptotic behavior of metrics depending on the solution of the (PO) }
So far, we proved that the PO cost converges to the asymptotic cost of the (AO). We prove now that the uniqueness of the minimizer $\tilde{\theta}_\star$ allows us to carry over this convergence to metrics depending on the solution of the (PO). The recipe is as follows. Let $\eta>0$ and define $$\mathcal{S}_\eta=\left\{{\bf v} \ | \ {\bf v}\in\mathcal{D},  \  \textnormal{and}  \ \frac{1}{K}\|{\bf v}\|^2\in \left(\frac{\delta}{\tilde{\theta}_\star^2}-1-\eta,\frac{\delta}{\tilde{\theta_\star}^2}-1+\eta\right) \right\}.$$ Consider
the ``perturbed'' version of the (AO) in \eqref{eq:AO1} as follows:
\begin{align}
	{\phi}_\eta:=	\frac{1}{\sqrt{K}} \min_{{\bf v}\notin \mathcal{S}_\eta} &\max_{{\bf u}}\frac{1}{K} \|{\bf v}\|{\bf g}^{T}{\bf u}-\frac{1}{K}\|{\bf u}\|{\bf q}^{T}{\bf v} -\frac{1}{\sqrt{K}}{\bf u}^{T}{\bf z} -\frac{1}{4\sqrt{K}}\|{\bf u}\|^2+\frac{1}{\sqrt{K}}\lambda \|{\bf Cv}\|^2.
\end{align}
Following the same analysis carried out previously, we lower bound $\phi_{\eta}$ by
$$
\phi_{\eta}\geq \max_{\beta\geq 0} \min_{\substack{\epsilon\leq \tilde{\theta}\leq C\\ \left|\frac{\|{\bf g}\|^2}{\tilde{\theta}^2K}-\frac{\delta}{\tilde{\theta}_\star^2}\right|>\eta}}\widetilde{D}(\tilde{\theta},\beta,{\bf g},{\bf q}^{\bf 1}).
$$
It is easy to note that if $\tilde{\theta}$ in $\left(\epsilon, C\right)$ is such that $\left|\frac{\|{\bf g}\|^2}{K\tilde{\theta}^2}-\frac{{\delta}}{\tilde{\theta}_\star^2}\right|>\eta$ then 
$$
\left|\tilde{\theta}-\tilde{\theta}_\star\frac{\|{\bf g}\|}{\sqrt{K}\sqrt{\delta}}\right| \geq \frac{\eta \tilde{\theta}^2\tilde{\theta}_\star^2}{\sqrt{\delta}\left(\tilde{\theta}_\star\frac{\|{\bf g}\|}{\sqrt{K}}+\tilde{\theta}\sqrt{\delta}\right)}\geq \frac{\eta\epsilon^2\tilde{\theta}_\star^2}{\sqrt{\delta}\left(C\sqrt{\delta}+\tilde{\theta}_\star\frac{\|{\bf g}\|}{\sqrt{K}}\right)},
$$
or also
$$
\left|\tilde{\theta}_\star-\tilde{\theta}\right|>  \frac{\eta \epsilon \tilde{\theta}_\star}{\frac{1}{\epsilon}+\frac{\sqrt{\delta}}{\tilde{\theta}_\star}}-\tilde{\theta} \left|1-\frac{\|{\bf g}\|}{\sqrt{K}\sqrt{\delta}}\right|.
$$
Since $\left|1-\frac{\|{\bf g}\|}{\sqrt{K}\sqrt{\delta}}\right|$ converges to zero almost surely, there exists $\tilde{\eta}>0$ such that with probability approaching $1$,
$$
\epsilon\leq \tilde{\theta}\leq C \  \text{and} \  \left|\frac{\|{\bf g}\|^2}{K\tilde{\theta}^2}-\frac{\delta}{\tilde{\theta}_\star^2}\right|\geq \eta \  \Longrightarrow \  \left|\tilde{\theta}-\tilde{\theta}_\star\right|\geq \tilde{\eta}.
$$
Hence, with probability approaching $1$,
$$
\phi_{\eta}\geq \sup_{\beta\geq 0} \min_{\substack{\epsilon\leq\tilde{\theta}\leq C\\ \left|\tilde{\theta}-\tilde{\theta}_\star\right|\geq \tilde{\eta}}} \widetilde{D}(\tilde{\theta},\beta,{\bf g},{\bf q}^{\bf 1}).
$$
Following the same asymptotic analysis as in Section C, we can prove similarly that
$$
\sup_{\beta\geq 0} \min_{\substack{\epsilon\leq\tilde{\theta}\leq C\\ \left|\tilde{\theta}-\tilde{\theta}_\star\right|\geq \tilde{\eta}}} \widetilde{D}(\tilde{\theta},\beta,{\bf g},{\bf q}^{\bf 1})\overset{P}{\longrightarrow}\overline{\phi}_\eta:= \sup_{\beta\geq 0} \inf_{\substack{\tilde{\theta}\geq 0\\  \left|\tilde{\theta}-\tilde{\theta}_\star\right|\geq \tilde{\eta}}} \overline{D}(\tilde{\theta},\beta).
$$
Clearly,
$$
\overline{\phi}_\eta\geq \inf_{\substack{\tilde{\theta}\geq 0 \\ \left|\tilde{\theta}-\tilde{\theta}_\star\right|\geq \tilde{\eta}}} \overline{D}(\tilde{\theta},\beta_\star).
$$
As $\tilde{\theta}_\star$ is the unique minimizer of $\inf_{\tilde{\theta}\geq 0}\overline{D}(\tilde{\theta},\beta_\star)$,
$$
\overline{\phi}_{\eta}> \overline{\phi}.
$$
Based on the CGMT in Theorem \ref{eq:th_CGMT} and recalling Remark \ref{remark_imp}, we thus have:
\begin{equation}
	\lim_{K\to \infty }\mathbb{P}\left[{\bf v}_{\rm PO}\in\mathcal{S}_\eta\right]=1,
	\label{eq:v}
\end{equation}
where ${\bf v}_{\rm PO}$ is the solution in ${\bf v}$ of the (PO) in \eqref{POprob}, 
or equivalently:
$$
\frac{1}{K} \|{\bf v}_{\rm PO} \|^2 \overset{P}{\longrightarrow} \frac{\delta}{\tilde{\theta}_\star^2}-1 = \delta\theta_\star^2-1,
$$
where $\theta_\star=\frac{1}{\tilde{\theta}_\star}$. 
\subsection{Convergence of Lipschitz functions of the estimated vector $\widehat{\bf x}$}
The objective here is to study the asymptotic behavior of Lipschitz functions of the solution to the PO which we denote by $\hat{\bf x}$. As will be seen in the next section, such a result is fundamental for the asymptotic analysis of the symbol error rate and can be of independent interest to analyze any other performance metric.   
Let $\beta_\star$ and $\tilde{\theta}_\star$ be the unique solutions of the optimization problem $\sup_{\beta>0}\inf_{\tilde{\theta}>0}\overline{D}(\tilde{\theta},\beta)$. Recall that ${\bf x}_0$ represents the transmitted vector, whose elements are drawn with equal probability from the $M$-PAM constellation.  
\begin{lemma}
For $j=1,\cdots, K$, and for fixed $\tilde{\theta}$ and $\beta$, 
	define $\kappa_j(.;\tilde{\theta},\beta,x_{0,j}):\mathbb{R}\to\mathbb{R}$ as:
$$
	\kappa_j(x;\tilde{\theta},\beta,x_{0,j})=\left\{\begin{array}{l}
	-t, \ \ \textnormal{if} \ \ x<x_{0}^{-}(\tilde{\theta},\beta,x_{0,j}) \\
	t,  \ \ \textnormal{if} \ \ x>x_{0}^{+}(\tilde{\theta},\beta,x_{0,j})\\
	\frac{\beta\left(\rho_{ {d}}\sigma_{\hat{H}}^2x_{0,j}\tilde{\theta}+\sqrt{\rho_{ {d}}}\sigma_{\hat{ {H}}}x\right)}{\rho_{ {d}}\sigma_{\hat{ {H}}}^2\beta\tilde{\theta}+2\lambda\rho_{ {d}}}, \ \ \textnormal{otherwise},
\end{array}
	\right.
$$
	where $x_{0}^{-}(\tilde{\theta},\beta,x_{0,j})$ and $x_{0}^{+}(\tilde{\theta},\beta,x_{0,j})$ are given by \eqref{eq:x0j-} and \eqref{eq:x0j+}. 
Let $\widehat{\bf x}$ be the solution of the (PO). Then for all for 
all Lipschitz functions $\psi:\mathbb{R}\to\mathbb{R}$ with Lipschitz constant $L$, it holds:
\begin{equation}
	\frac{1}{K}\sum_{i=1}^K \psi(\widehat{x}_i)-\frac{1}{K}\sum_{j=1}^{K}\mathbb{E}\left[\psi(\kappa_j(q_j;\tilde{\theta}_\star,\beta_\star,x_{0,j}))\right] \overset{P}{\longrightarrow} 0,
	\label{eq:conv}
\end{equation}
where $q_1,\cdots,q_K$ are independent and identically distributed standard Gaussian random variables and $(\beta_\star,\tilde{\theta}_\star)$ is the unique solution to the optimization problem $\sup_{\beta\geq 0} \inf_{\tilde{\theta}\geq 0}\overline{D}(\tilde{\theta},\beta)$.
	\label{lem:conv}
\end{lemma}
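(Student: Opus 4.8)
The plan is to exploit the fact that, at the unique saddle point $(\tilde{\theta}_\star,\beta_\star)$ established in the previous subsection, the minimizer of the scalarized (AO) is known \emph{explicitly and coordinate-wise}: from \eqref{AO_Sol_box} the optimal $j$-th component equals $\kappa_j(q^1_j;\tilde{\theta}_\star,\beta_\star,x_{0,j})$ with the $q^1_j$ i.i.d.\ standard normal. The argument then splits into two parts. First, a law-of-large-numbers step shows that the empirical functional evaluated at the (AO) minimizer concentrates around $\frac{1}{K}\sum_{j}\mathbb{E}[\psi(\kappa_j(q_j;\tilde{\theta}_\star,\beta_\star,x_{0,j}))]$. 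Second, this concentration is transferred to the (PO) minimizer $\widehat{\bf x}$ through the localization content of the CGMT (Theorem \ref{eq:th_CGMT}), exactly as was carried out for the norm in \eqref{eq:v}.

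For the first part, a key observation is that every value produced by $\kappa_j$ lies in the box $[-t,t]$ (the middle branch in \eqref{AO_Sol_box} matches $\pm t$ at the breakpoints $x_{0}^{\pm}$), so $\kappa_j(\cdot;\tilde{\theta}_\star,\beta_\star,\cdot)$ is uniformly bounded. Since $\psi$ is Lipschitz it is therefore bounded on $[-t,t]$, and the variables $\psi(\kappa_j(q^1_j;\tilde{\theta}_\star,\beta_\star,x_{0,j}))$ are independent across $j$ with finite variance; the weak law of large numbers yields $\frac{1}{K}\sum_{j}\psi(\kappa_j(q^1_j;\tilde{\theta}_\star,\beta_\star,x_{0,j}))-\frac{1}{K}\sum_{j}\mathbb{E}[\psi(\kappa_j(q_j;\tilde{\theta}_\star,\beta_\star,x_{0,j}))]\overset{P}{\longrightarrow}0$. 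Because $q^1_j$ enters $\kappa_j$ only through a soft-thresholding map whose breakpoints depend continuously on $(\tilde{\theta},\beta)$, this convergence is uniform in $(\tilde{\theta},\beta)$ over compact sets, which is what allows it to interact with the outer optimization.

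For the transfer, fix $\epsilon>0$, let $c$ be the almost-sure limit of $\frac{1}{K}\sum_{j}\mathbb{E}[\psi(\kappa_j(q_j;\tilde{\theta}_\star,\beta_\star,x_{0,j}))]$, and define the deviation set $\mathcal{S}^c_\epsilon=\{{\bf x}\in\mathcal{B}:|\frac{1}{K}\sum_i\psi(x_i)-c|\geq\epsilon\}$. Crucially, Theorem \ref{eq:th_CGMT} does not require the probed set to be convex; only the full constraint set $\mathcal{S}_w=\mathcal{B}$, a box, must be convex and compact, so this (generally non-convex) choice is admissible. Following Remark \ref{remark_imp}, it suffices to show that the (AO) constrained to $\mathcal{S}^c_\epsilon$ has asymptotic cost $\overline{\phi}_{\mathcal{S}^c}$ strictly exceeding $\overline{\phi}=\sup_{\beta>0}\inf_{\tilde{\theta}>0}\overline{D}(\tilde{\theta},\beta)$. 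Rerunning the scalarization of Section~C with the extra constraint and flipping the min--max (legitimate since the inner problem is convex in ${\bf x}$ and concave in $\beta$), the per-coordinate objective for fixed $(\tilde{\theta},\beta)$ is strongly convex in each $x_j$ with modulus $\beta\tilde{\theta}\rho_{\mathsmaller{d}}\sigma_{\mathsmaller{\hat{H}}}^2+2\lambda\rho_{\mathsmaller{d}}>0$, and its unconstrained minimizer is precisely $\kappa_j$, yielding $\frac{1}{K}\sum_i\psi(x_i)\to c$ at $(\tilde{\theta}_\star,\beta_\star)$. Forcing $|\frac{1}{K}\sum_i\psi(x_i)-c|\geq\epsilon$ therefore displaces ${\bf x}$ from the coordinate-wise optimum by an amount controlled below through the Lipschitz constant $L$ of $\psi$, and strong convexity converts this displacement into a strictly positive increase of the cost. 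Combined with the uniqueness of $(\tilde{\theta}_\star,\beta_\star)$, which rules out this increase being recovered by re-optimizing $\tilde{\theta}$ and $\beta$, this establishes $\overline{\phi}_{\mathcal{S}^c}>\overline{\phi}$, whence $\lim_{K\to\infty}\mathbb{P}[\widehat{\bf x}\in\mathcal{S}^c_\epsilon]=0$; letting $\epsilon\downarrow0$ gives \eqref{eq:conv}.

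The main obstacle is this last transfer step: quantifying that constraining the non-convex Lipschitz functional to deviate from $c$ strictly raises the asymptotic (AO) cost, \emph{uniformly} across the outer $\sup_{\beta}\inf_{\tilde{\theta}}$. The difficulty is that non-convexity in $\tilde{\theta}$ precludes a direct joint convex--concave saddle-point argument; it is precisely the uniqueness of the minimizer $\tilde{\theta}_\star$, proven via the third-derivative estimate \eqref{eq:req}, that guarantees the deviation penalty incurred at the coordinate level cannot be absorbed by moving $\tilde{\theta}$ or $\beta$, thereby securing the strict separation \eqref{eq:cond_eq} required to invoke the CGMT.
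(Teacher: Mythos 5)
Your proposal is correct and follows essentially the same route as the paper's own proof: the paper likewise defines the deviation set through the Lipschitz functional, evaluates the constrained (AO) at the fixed $\beta_\star$, identifies the unconstrained minimizer coordinate-wise as $\kappa_j(q_j^{1};\hat{\tilde{\theta}},\beta_\star,x_{0,j})$ with $\hat{\tilde{\theta}}\overset{P}{\longrightarrow}\tilde{\theta}_\star$ (a consequence of the uniqueness of $\tilde{\theta}_\star$), and then converts the Lipschitz-induced norm displacement $\|{\bf x}-\hat{\tilde{\bf x}}\|/\sqrt{K}\geq \epsilon/(2L)$ into a strict cost increase via strong convexity of ${\bf x}\mapsto \hat{\mathcal{H}}(\beta_\star,{\bf x})$, exactly as in your displacement argument. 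One minor remark: your appeal to a convex--concave min--max flip on the non-convex deviation set is neither justified there nor needed, since the required inequality $\phi_{\mathcal{S}^c}\geq \min_{{\bf x}\notin\mathcal{S}_\epsilon}\hat{\mathcal{H}}(\beta_\star,{\bf x})$ follows from weak duality (plugging in $\beta_\star$), which is precisely what the paper does.
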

\begin{proof}
To avoid heavy notations, we will remove $x_{0,j}$ from the notations of $\kappa_j(\tilde{\theta},\beta,x_{0,j})$ and that of $x_{0}^{-}(\tilde{\theta},\beta,x_{0,j})$ and $x_{0}^{+}(\tilde{\theta},\beta,x_{0,j})$ as it will not play any role in the proof. 
To prove \eqref{eq:conv}, we consider the set:
$$
	\mathcal{S}_{\epsilon}=\left\{{\bf x} \ \ | \ \  \left|\frac{1}{K}\sum_{i=1}^K \psi(x_i)-\frac{1}{K}\sum_{j=1}^{K}\mathbb{E}\left[\psi(\kappa_j(q_j;\tilde{\theta}_\star,\beta_\star))\right]\right|<\epsilon\right\}.
$$
Then, in view of the CGMT, for \eqref{eq:conv} to hold true, it suffices to show that with probability approaching 1, 
\begin{equation}
	\max_{\beta>0} \min_{\substack{-t<x_j<t\\ j=1,\cdots,K\\ {\bf x}\notin \mathcal{S}_\epsilon }}\hat{\mathcal{H}}(\beta,{\bf x})\geq \overline{\phi},\label{eq:suf}
\end{equation}
where we recall that  $\hat{\mathcal{H}}(\beta,{\bf x})$ is the objective of the optimization problem in \eqref{eq:pr1}. 
Noting that
$$
\max_{\beta>0} \min_{\substack{-t<x_j<t\\ j=1,\cdots,K\\ {\bf x}\notin \mathcal{S}_\epsilon }}\hat{\mathcal{H}}(\beta,{\bf x})\geq  \min_{\substack{-t<x_j<t\\ j=1,\cdots,K\\ {\bf x}\notin \mathcal{S}_\epsilon }}\hat{\mathcal{H}}(\beta_\star,{\bf x}),
$$
the proof of \eqref{eq:suf} boils down to proving that
\begin{equation}
\min_{\substack{-t<x_j<t\\ j=1,\cdots,K \\ {\bf x}\notin \mathcal{S}_\epsilon }}\hat{\mathcal{H}}(\beta_\star,{\bf x})>\overline{\phi}.
	\label{eq:pbt}
\end{equation}
A key step towards showing \eqref{eq:pbt} is to analyze the asymptotic behavior of the following optimization problem:
\begin{equation}
\min_{\substack{-t<x_j<t\\ j=1,\cdots,K  }}\hat{\mathcal{H}}(\beta_\star,{\bf x}).
	\label{eq:H}
\end{equation}
Particularly, the following statements will be shown in the sequel:
\begin{itemize}
	\item[(i)] The following convergence holds true:
		\begin{equation}
			\min_{\substack{-t<x_j<t\\ j=1,\cdots,K  }}\hat{\mathcal{H}}(\beta_\star,{\bf x}) \overset{P}{\longrightarrow}\overline{\phi}=\sup_{\beta\geq 0}\min_{\tilde{\theta}\geq 0}\overline{D}(\tilde{\theta},\beta) \label{eq:conv_p}
		\end{equation}
	\item[(ii)] Let $\hat{\tilde{{\bf x}}}=\left[\hat{\tilde{x}}_1,\cdots,\hat{\tilde{x}}_K\right]^{T}$ be the solution to the optimization problem in \eqref{eq:H}. Then, it holds  with probability approaching 1, 
		\begin{equation}
			\forall \  {\bf x}\in [-t,t]^{K}\backslash \mathcal{S}_\epsilon, \ \ 	\frac{1}{\sqrt{K}} \|{\bf x}-\hat{\tilde{{\bf x}}}\| \geq \frac{\epsilon}{2L}. \label{eq:21}
		\ \ 	\end{equation}
	\item[(iii)]    For any  ${\bf x}=\left[x_1,\cdots,x_K\right]^{T}\in \left[-t,t\right]^{K}$, there exists a constant $\overline{C}$ such that: 
		\begin{equation}
			\hat{\mathcal{H}}(\beta_\star,{\bf x}) \geq 	\hat{\mathcal{H}}(\beta_\star,\hat{\tilde{{\bf x}}})  +\frac{\overline{C}}{2K}\|{\bf x}-\hat{\tilde{{\bf x}}}\|^2. \label{eq:22}
		\end{equation}
\end{itemize}
Prior to proving the above statements, let us see how they lead to the desired inequality \eqref{eq:pbt}. Putting together \eqref{eq:conv_p} and \eqref{eq:22} shows that for any $\ell>0$, with probability approaching $1$, 
\begin{equation}
	\hat{\mathcal{H}}(\beta_\star,{\bf x})\geq \overline{\phi}-\ell+\frac{\overline{C}}{2K}\|{\bf x}-\hat{\tilde{{\bf x}}}\|^2.
	\label{eq:ll}
\end{equation}
From \eqref{eq:21}, we have for any ${\bf x}\in [-t,t]^{K}\backslash \mathcal{S}_\epsilon$, 
\begin{equation}
\frac{1}{K}\|{\bf x}-\hat{\tilde{{\bf x}}}\|^2\geq \frac{\epsilon^2}{4L}.
	\label{eq:ineq}
\end{equation}
Now, setting $\ell=\frac{\overline{C}\epsilon^2}{16L}$ in \eqref{eq:ll}, and using \eqref{eq:ineq} we get:
$$
\hat{\mathcal{H}}(\beta_\star,{\bf x})\geq \overline{\phi}+\frac{\overline{C}\epsilon^2}{16L}.
$$
The above inequality holds for any ${\bf x}\in [-t,t]^{K}\backslash \mathcal{S}_\epsilon$ hence,
$$
\min_{\substack{-t<x_j<t\\ j=1,\cdots,K\\ {\bf x}\notin\mathcal{S}_\epsilon}} \hat{\mathcal{H}}(\beta_\star,{\bf x})\geq \overline{\phi}+\frac{\overline{C}\epsilon^2}{16L}>\overline{\phi},
$$
thus proving \eqref{eq:pbt}.

\noindent\underline{Proof of \eqref{eq:conv_p}}
Following the same calculations used in the analysis of the (AO), we can prove that for sufficiently small $\epsilon$ and large constant $C$, with probability approaching $1$,
\begin{align}
	\!\!\min_{\substack{-t\leq x_j\leq t\\ j=1,\cdots,K }}\hat{\mathcal{H}}(\beta_\star,{\bf x})&\!\!=\!\!
	\min_{\epsilon\leq \tilde{\theta}\leq C} \widetilde{D}(\tilde{\theta},\beta_\star,{\bf g},{\bf q}^{\bf 1}).
\end{align}
Based on similar asymptotic analysis to that carried out in Section B, we can prove that $\tilde{\theta}\mapsto \widetilde{D}(\tilde{\theta},\beta_\star,{\bf g},{\bf q}^{\bf 1})$ converges uniformly to $\tilde{\theta}\mapsto \overline{D}(\tilde{\theta},\beta_\star)$ and hence,
\begin{equation}
	\min_{\substack{-t\leq x_j\leq t\\ j=1,\cdots,K }}\hat{\mathcal{H}}(\beta_\star,{\bf x})\overset{P}{\longrightarrow} \overline{\phi}=\min_{\epsilon\leq \tilde{\theta}\leq C}\overline{D}(\tilde{\theta},\beta_\star). \label{eq:conv_u}
\end{equation}
\noindent\underline{Proof of \eqref{eq:21}}
Let $\hat{\tilde{\theta}} \in\left\{\argmin_{\epsilon\leq\tilde{\theta}\leq C} \widetilde{D}(\tilde{\theta},\beta_\star,{\bf g},{\bf q}^{\bf 1})\right\}$. In view of \eqref{AO_Sol_box}, for $j=1,\cdots,K$, $\left\{\hat{\tilde{x}}_j=\kappa_j(q_j^1,\hat{\tilde{\theta}},\beta_\star)\right\}$ is a solution to \eqref{eq:H}. On the other hand, as the minimum of $\tilde{\theta}\mapsto  \overline{D}(\tilde{\theta},\beta_\star)$ is unique, we conclude that $\hat{\tilde{\theta}}$ converges in probability to $\tilde{\theta}_\star$. 
From this convergence, we argue that 
\begin{equation}
	\max_{1\leq j\leq K}\left|\kappa_j(q_j^{1},\hat{\tilde{\theta}},\beta_\star)-\kappa_j(q_j^{1},\tilde{\theta}_\star,\beta_\star)\right|\overset{P}{\longrightarrow} 0. \label{eq:kappa_jconv}
\end{equation}
Prior to showing \eqref{eq:kappa_jconv}, let us explain how it leads to \eqref{eq:21}. Indeed from the Lipschitz assumption it holds that:
\begin{equation}
	\max_{1\leq j\leq K} \left|\psi(\kappa_j(q_j^{1},\hat{\tilde{\theta}},\beta_\star))-\psi(\kappa_j(q_j^{1},\tilde{\theta}_\star,\beta_\star))\right|\overset{P}{\longrightarrow} 0.
	\label{eq:comb}
\end{equation}
Moreover, from the law of large numbers, 
$$
\frac{1}{K}\sum_{j=1}^{K}\psi(\kappa_j(q_j^{1},\tilde{\theta}_\star,\beta_\star))-\mathbb{E}\frac{1}{K}\sum_{j=1}^{K}\psi(\kappa_j(q_j^{1},\tilde{\theta}_\star,\beta_\star))\overset{P}{\longrightarrow}0,
$$
which combined with \eqref{eq:comb} yields:
$$
\frac{1}{K}\sum_{j=1}^{K}\psi(\kappa_j(q_j^{1},\hat{\tilde{\theta}},\beta_\star))-\mathbb{E}\frac{1}{K}\sum_{j=1}^{K}\psi(\kappa_j(q_j^{1},\tilde{\theta}_\star,\beta_\star))\overset{P}{\longrightarrow}0.
$$
Hence, for $\epsilon>0$, with probability approaching one,
$$
\left|\frac{1}{K}\sum_{j=1}^{K}\psi(\kappa_j(q_j^{1},\hat{\tilde{\theta}},\beta_\star))-\mathbb{E}\frac{1}{K}\sum_{j=1}^{K}\psi(\kappa_j(q_j^{1},\tilde{\theta}_\star,\beta_\star))\right|\leq \epsilon.
$$
By definition of the set $\mathcal{S}_\epsilon$ and the triangle inequality, it holds with probability approaching 1 that for all ${\bf x}\in [-t,t]^{K}\backslash \mathcal{S}_\epsilon$
$$
\left|\frac{1}{K}\sum_{j=1}^{K}\psi(x_i)-\frac{1}{K}\sum_{j=1}^{K}\psi(\hat{\tilde{x}}_j))\right|\geq \frac{\epsilon}{2}.
$$
Then, based on the Lipshitz property of $\psi$, it holds that:
$$
\left\|{\bf x}-\hat{\tilde{\bf x}}\right\|\geq \frac{\epsilon}{2L},
$$
which shows \eqref{eq:21}.

Now, to prove \eqref{eq:kappa_jconv}, note that since $\hat{\tilde{\theta}}\overset{P}{\longrightarrow}\tilde{\theta}_\star$,  for any $\eta>0$, with probability approaching one,
\begin{equation}
-\eta\leq \hat{\tilde{\theta}}-\tilde{\theta}_\star\leq \eta,
	\label{eq:thet}
\end{equation}
from which we deduce that:
$$
\max_{1\leq j\leq K} \left|x_{0}^{+}(\hat{\tilde{\theta}},\beta_\star)-x_{0}^{+}(\tilde{\theta}_\star,\beta_\star)\right|\leq\overline{C}\eta,
$$
where $\overline{C}=\sqrt{\rho_{ {d}}}\sigma_{\hat{ {H}}}(\max_{1\leq j\leq K}|x_{0,j}|+t)$
and similarly,
$$
\left|x_{0}^{-}(\hat{\tilde{\theta}},\beta_\star)-x_{0}^{-}(\tilde{\theta}_\star,\beta_\star)\right|\leq \overline{C}\eta.
$$
Using the fact that $x_{0}^{-}(\tilde{\theta}_\star,\beta_\star)<x_{0}^{+}(\tilde{\theta}_\star,\beta_\star)$, and choosing $\eta$ sufficiently small, we conclude that if for some $j=1,\cdots,K$, $q_j^{1}\leq x_{0}^{-}(\tilde{\theta}_\star,\beta_\star)$, then $q_j^{1}\leq x_{0}^{+}(\tilde{\hat{\theta}},\beta_\star)$, and similarly, if  $q_j^{1}\geq x_{0}^{+}(\tilde{\theta}_\star,\beta_\star)$, then $q_j^{1}\geq x_{0}^{-}(\tilde{\hat{\theta}},\beta_\star)$. As a consequence, 
\begin{align}
	&\max_{1\leq j\leq K}\left|\kappa_j(q_j^{1},\hat{\tilde{\theta}},\beta_\star,)-\kappa_j(q_j^{1},\tilde{\theta}_\star,\beta_\star)\right|\nonumber\\
	&\leq \left|t+\frac{\beta_\star(\rho_{ {d}}\sigma_{\hat{ {H}}}^2x_{0,j}\hat{\tilde{\theta}}+\sqrt{\rho_{ {d}}}\sigma_{\hat{ {H}}}q_j^1)}{\rho_{ {d}}\sigma_{\hat{ {H}}}^2\beta_\star\hat{\tilde{\theta}}+2\lambda\rho_{ {d}}}\right| {\bf 1}_{\{x_{0}^{-}(\hat{\tilde{\theta}},\beta_\star)\leq q_j^{1}\leq x_{0}^{-}(\theta_\star,\beta_\star)\}}\nonumber\\
	&+\left|t+\frac{\beta_\star(\rho_{ {d}}\sigma_{\hat{ {H}}}^2x_{0,j}{\tilde{\theta}_\star}+\sqrt{\rho_{ {d}}}\sigma_{\hat{ {H}}}q_j^1)}{\rho_{ {d}}\sigma_{\hat{ {H}}}^2\beta_\star{\tilde{\theta}}_\star+2\lambda\rho_{ {d}}}\right| {\bf 1}_{\{x_{0}^{-}({\tilde{\theta}}_\star,\beta_\star)\leq q_j^{1}\leq x_{0}^{-}(\hat{\tilde{\theta}},\beta_\star)\}}\nonumber\\
	&+\left|t-\frac{\beta_\star(\rho_{ {d}}\sigma_{\hat{ {H}}}^2x_{0,j}\hat{\tilde{\theta}}+\sqrt{\rho_{ {d}}}\sigma_{\hat{ {H}}}q_j^1)}{\rho_{ {d}}\sigma_{\hat{ {H}}}^2\beta_\star\hat{\tilde{\theta}}+2\lambda\rho_{ {d}}}\right| {\bf 1}_{\{x_{0}^{+}(\tilde{\theta}_\star,\beta_\star)\leq q_j^{1}\leq x_{0}^{+}(\hat{\tilde{\theta}},\beta_\star)\}}\nonumber\\
	&+\left|t-\frac{\beta_\star(\rho_{ {d}}\sigma_{\hat{ {H}}}^2x_{0,j}\tilde{\theta}_\star+\sqrt{\rho_{ {d}}}\sigma_{\hat{ {H}}}q_j^1)}{\rho_{ {d}}\sigma_{\hat{ {H}}}^2\beta_\star\tilde{\theta}_\star+2\lambda\rho_{ {d}}}\right| {\bf 1}_{\{x_{0}^{+}(\hat{\tilde{\theta}},\beta_\star)\leq q_j^{1}\leq x_{0}^{+}(\tilde{\theta}_\star,\beta_\star)\}}\nonumber\\
	&+\left|\frac{\beta_\star(\rho_{ {d}}\sigma_{\hat{ {H}}}^2x_{0,j}{\tilde{\theta}_\star}+\sqrt{\rho_{ {d}}}\sigma_{\hat{ {H}}}q_j^1)}{\rho_{ {d}}\sigma_{\hat{ {H}}}^2\beta_\star{\tilde{\theta}}_\star+2\lambda\rho_{ {d}}}-\frac{\beta_\star(\rho_{ {d}}\sigma_{\hat{ {H}}}^2x_{0,j}\hat{\tilde{\theta}}+\sqrt{\rho_{ {d}}}\sigma_{\hat{ {H}}}q_j^1)}{\rho_{ {d}}\sigma_{\hat{ {H}}}^2\beta_\star\hat{\tilde{\theta}}+2\lambda\rho_{ {d}}}\right|
	 {\bf 1}_{\{\max(x_{0}^{-}(\tilde{\theta}_\star,\beta_\star),x_{0}^{-}(\hat{\tilde{\theta}},\beta_\star))\leq q_j^1 \leq \min\left(x_{0}^{+}(\hat{\tilde{\theta}},\beta_\star),x_{0}^{+}(\tilde{\theta}_\star,\beta_\star)\right)\}}.\label{eq:formula}
\end{align}
Each term of the right-hand side of \eqref{eq:formula} can be bounded by a linear function of $\eta$ using \eqref{eq:thet}, thereby proving \eqref{eq:kappa_jconv}. 

\noindent\underline{Proof of \eqref{eq:22}}
It can be checked that ${\bf x}\mapsto \hat{\mathcal{H}}(\beta_\star,{\bf x})$ is strongly convex, and  its Hessian satisfies $\nabla^2 \hat{\mathcal{H}}(\beta_\star,{\bf x}) \succeq \frac{2\lambda}{K}\rho_{ {d}}{\bf I}_K$. Hence, for any ${\bf x}\in[-t,t]^{K}$,
$$
\hat{\mathcal{H}}(\beta_\star,{\bf x})\geq \hat{\mathcal{H}}(\beta_\star,\hat{\tilde{{\bf x}}}) + \frac{2\lambda}{K}\rho_{ {d}} \left\|{\bf x}-\hat{\tilde{{\bf x}}}\right\|^2.
$$
\end{proof}
\subsection{From Lipschitz to the indicator function of solutions to the (PO)}
\begin{lemma}
	Let $\hat{\bf x}$ be the solution to the (PO). Let $c\in\mathbb{R}$ such that $c\notin \left\{-t,t\right\}$. Then, 
$$
	\frac{1}{K}\sum_{j=1}^K {\bf 1}_{\{{\hat{x}_j}\leq c\}}-\frac{1}{K}\sum_{j=1}^{K}\mathbb{P}\left[{\kappa_j(q,\tilde{\theta}_\star,\beta_\star,x_{0,j})}\leq c\right] \overset{P}{\longrightarrow}0,
$$
where $q$ is assumed to be drawn from a standard normal distribution.  
	\label{lem:indicator}
\end{lemma}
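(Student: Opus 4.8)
The plan is to derive this from the Lipschitz statement of Lemma~\ref{lem:conv} by a standard squeezing argument: I would sandwich the indicator $x\mapsto{\bf 1}_{\{x\leq c\}}$ between two Lipschitz functions, apply Lemma~\ref{lem:conv} to each, and let the approximation error shrink. The only thing that can break such a squeeze is an atom of the limiting law of $\kappa_j$ sitting exactly at $c$, and the hypothesis $c\notin\{-t,t\}$ is precisely what excludes this.

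First I would record the structure of the map $q\mapsto\kappa_j(q,\tilde{\theta}_\star,\beta_\star,x_{0,j})$. Reading off its definition in Lemma~\ref{lem:conv}, it is nondecreasing in $q$: it is clamped at $-t$ for $q<x_{0}^{-}$, equals $t$ for $q>x_{0}^{+}$, and is affine in $q$ on the middle region with the fixed positive slope $s=\frac{\beta_\star\sqrt{\rho_{\mathsmaller{d}}}\sigma_{\hat{\mathsmaller{H}}}}{\rho_{\mathsmaller{d}}\sigma_{\hat{\mathsmaller{H}}}^2\beta_\star\tilde{\theta}_\star+2\lambda\rho_{\mathsmaller{d}}}$, which does \emph{not} depend on $j$ (only the intercept does). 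Hence, for $q$ standard Gaussian, the law of $\kappa_j$ is a mixture of an atom at $-t$ of mass $\mathbb{P}[q<x_{0}^{-}]$, an atom at $t$ of mass $\mathbb{P}[q>x_{0}^{+}]$, and an absolutely continuous part on $(-t,t)$ whose density is $\frac{1}{s}\,p\!\left(\frac{\cdot-m_j}{s}\right)\leq \frac{1}{s\sqrt{2\pi}}=:D$, a bound uniform in $j$ because the finitely many constellation values of $x_{0,j}$ affect only the mean $m_j$. Since $c\notin\{-t,t\}$, I can fix $\epsilon_0>0$ with $[c-\epsilon_0,c+\epsilon_0]\cap\{-t,t\}=\emptyset$, so that $\mathbb{P}[c-\epsilon<\kappa_j\leq c+\epsilon]\leq 2D\epsilon$ for every $0<\epsilon\leq\epsilon_0$ and every $j$.

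For such $\epsilon$ I would introduce the piecewise-linear $\tfrac1\epsilon$-Lipschitz functions $\psi_\epsilon^{+}$ (equal to $1$ on $(-\infty,c]$, dropping linearly to $0$ on $[c,c+\epsilon]$, and $0$ beyond) and $\psi_\epsilon^{-}$ (its translate by $\epsilon$ to the left), which satisfy
$$\psi_\epsilon^{-}(x)\leq {\bf 1}_{\{x\leq c\}}\leq \psi_\epsilon^{+}(x),\qquad \psi_\epsilon^{+}(x)-\psi_\epsilon^{-}(x)\leq {\bf 1}_{\{c-\epsilon<x\leq c+\epsilon\}}.$$
Summing the pointwise sandwich over the entries $\widehat{x}_i$ of the (PO) solution and applying Lemma~\ref{lem:conv} to $\psi_\epsilon^{\pm}$ gives, with probability approaching one,
$$\frac{1}{K}\sum_{j}\mathbb{E}[\psi_\epsilon^{-}(\kappa_j)]-o(1)\leq \frac{1}{K}\sum_i {\bf 1}_{\{\widehat{x}_i\leq c\}}\leq \frac{1}{K}\sum_j\mathbb{E}[\psi_\epsilon^{+}(\kappa_j)]+o(1).$$
Because the target average $\frac1K\sum_j\mathbb{P}[\kappa_j\leq c]$ is sandwiched between the same two expectations, subtracting and using the envelope bound yields
$$\left|\frac{1}{K}\sum_i {\bf 1}_{\{\widehat{x}_i\leq c\}}-\frac{1}{K}\sum_j\mathbb{P}[\kappa_j\leq c]\right|\leq \frac{1}{K}\sum_j\mathbb{P}[c-\epsilon<\kappa_j\leq c+\epsilon]+o(1)\leq 2D\epsilon+o(1).$$

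Finally I would send $K\to\infty$ for fixed $\epsilon\in(0,\epsilon_0]$, so the $o(1)$ terms vanish in probability, leaving the deterministic slack $2D\epsilon$; then an $\varepsilon$--$\eta$ argument (given $\eta>0$, take $\epsilon<\eta/(4D)$, so the probability that the left-hand side exceeds $\eta/2$ tends to zero) delivers convergence in probability to zero, as claimed. The main obstacle is the uniform-in-$j$ density and atom control of the second step: without $c\notin\{-t,t\}$ the probability $\mathbb{P}[c-\epsilon<\kappa_j\leq c+\epsilon]$ would not vanish with $\epsilon$ since it would pick up the mass of the atom, and the squeeze would collapse. It is the combination of the finite constellation (finitely many $x_{0,j}$) and the $j$-independent slope $s$ that makes the uniform density bound $D$, and hence the whole argument, go through.
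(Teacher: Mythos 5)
Your proof is correct and follows essentially the same route as the paper's: sandwiching the indicator between the two piecewise-linear Lipschitz functions, invoking Lemma~\ref{lem:conv} for each, and using $c\notin\{-t,t\}$ to ensure the band probability $\mathbb{P}\left[c-\epsilon<\kappa_j\leq c+\epsilon\right]$ vanishes as $\epsilon\to 0$. Your only refinement is making this last step quantitative via the uniform-in-$j$ density bound $D$ coming from the $j$-independent slope of $\kappa_j$, where the paper simply asserts the limit is zero since the atoms of $\kappa_j$ sit only at $\pm t$.
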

\begin{proof}
	Similar to the proof of Lemma \ref{lem:conv}, for $j=1,\cdots,K$, to easy notations, we shall remove $x_{0,j}$ from the notation $\kappa_j(q,\tilde{\theta}_\star,\beta_\star,x_{0,j})$. 
Let $\eta>0$, and consider the following functions parametrized by $\eta$,
	$$\overline{\psi}_\eta(\alpha):=\left\{\begin{array}{ll}
		1,& \alpha\leq c\\
		1-\frac{1}{\eta}(\alpha-c),  & c\leq \alpha\leq c+\eta\\
		0, & \alpha\geq c+\eta,
		\end{array}\right.
		$$
		and 
$$\underline{\psi}_\eta(\alpha):=\left\{\begin{array}{ll}
		1,& \alpha\leq c-\eta\\
		-\frac{1}{\eta}(\alpha-c),  & c-\eta\leq \alpha\leq c\\
0, & \alpha\geq c. \end{array}
		\right.
		$$
		Both functions are Lipschitz with Lipschitz constant $\frac{1}{\eta}$. Moreover, for any $\alpha$,
		$$
	\underline{\psi}_\eta(\alpha)\leq	{\bf 1}_{\{{\alpha\leq c\}}} \leq \overline{\psi}_\eta(\alpha).
		$$
		Define also function $\tilde{\psi}_\eta(\alpha)=\overline{\psi}_\eta(\alpha)-\underline{\psi}_\eta(\alpha)$. Then it is easy to see that:
		\begin{equation}
		\tilde{\psi}_\eta(\alpha)\leq {\bf 1}_{\{c-\eta\leq \alpha \leq c+\eta\}}.
			\label{eq:tilde_psi}
		\end{equation}
		Hence, 
		\begin{equation}
	\frac{1}{K}\sum_{j=1}^K\underline{\psi}_\eta(\hat{x}_j)\leq	\frac{1}{K}\sum_{j=1}^K{\bf 1}_{\{{\hat{x}_j\leq c\}}} \leq \frac{1}{K}\sum_{j=1}^K\overline{\psi}_\eta(\hat{x}_j),
			\label{eq:xj}
		\end{equation}
		and 
		\begin{align}
			&	\frac{1}{K}\sum_{j=1}^{K}\mathbb{E}\left[\underline{\psi}_\eta(\kappa_j(q,\tilde{\theta}_\star,\beta_\star))\right] \leq \frac{1}{K}\sum_{j=1}^{K}\mathbb{P}\left[\kappa_j(q,\tilde{\theta}_\star,\beta_\star)\leq c\right]\nonumber\\
			&\leq  \frac{1}{K}\sum_{j=1}^{K}\mathbb{E}\left[\overline{\psi}_\eta(\kappa_j(q,\tilde{\theta}_\star,\beta_\star))\right]. \label{eq:xj1}
		\end{align}
		Using \eqref{eq:xj} and \eqref{eq:xj1}, it follows that:
		\begin{align}
			&\frac{1}{K}\sum_{j=1}^K\underline{\psi}_\eta(\hat{x}_j)-\frac{1}{K}\sum_{j=1}^{K}\mathbb{E}\left[\overline{\psi}_\eta(\kappa_j(q,\tilde{\theta}_\star,\beta_\star))\right] \nonumber\\
			&\leq\frac{1}{K}\sum_{j=1}^K{\bf 1}_{\{{\hat{x}_j\leq c\}}}- \frac{1}{K}\sum_{j=1}^{K}\mathbb{P}\left[\kappa_j(q,\tilde{\theta}_\star,\beta_\star)\leq c\right]\nonumber\\
			&\leq \frac{1}{K}\sum_{j=1}^K\overline{\psi}_\eta(\hat{x}_j)-\frac{1}{K}\sum_{j=1}^{K}\mathbb{E}\left[\overline{\psi}_\eta(\kappa_j(q,\tilde{\theta}_\star,\beta_\star))\right].
		\end{align}
	From Lemma \ref{lem:conv}, for $\epsilon>0$, with probability approaching one:
	\begin{equation}
	\left|\frac{1}{K}\underline{\psi}_\eta(\hat{x}_j)-\frac{1}{K}\sum_{j=1}^{K}\mathbb{E}\left[\underline{\psi}_\eta(\kappa_j(q,\tilde{\theta}_\star,\beta_\star))\right]\right|\leq \epsilon,
	\end{equation}
and
	\begin{equation}
	\left|\frac{1}{K}\overline{\psi}_\eta(\hat{x}_j)-\frac{1}{K}\sum_{j=1}^{K}\mathbb{E}\left[\overline{\psi}_\eta(\kappa_j(q,\tilde{\theta}_\star,\beta_\star))\right]\right|\leq \epsilon.
	\end{equation}
	Hence,
	\begin{align}
			&-\epsilon-\frac{1}{K}\sum_{j=1}^{K}\mathbb{E}\left[\tilde{\psi}_\eta(\kappa_j(q,\tilde{\theta}_\star,\beta_\star))\right]\nonumber\\
			&\leq\frac{1}{K}\sum_{j=1}^K{\bf 1}_{\{{\hat{x}_j\leq c\}}}- \frac{1}{K}\sum_{j=1}^{K}\mathbb{P}\left[\kappa_j(q,\tilde{\theta}_\star,\beta_\star)\leq c\right]\nonumber\\
			&\leq \epsilon+\frac{1}{K}\sum_{j=1}^{K}\mathbb{E}\left[\tilde{\psi}_\eta(\kappa_j(q,\tilde{\theta}_\star,\beta_\star))\right].	\label{eq:ine3}
	\end{align}
	From \eqref{eq:tilde_psi}, 
	\begin{align*}
		&	\lim_{\eta\to 0} \frac{1}{K}\sum_{j=1}^{K}\mathbb{E}\left[\tilde{\psi}_\eta(\kappa_j(q,\tilde{\theta}_\star,\beta_\star))\right]\\
		&\leq \lim_{\eta\to 0} \max_{1\leq j\leq K} \mathbb{P}\left[c-\eta \leq \kappa_j(q,\tilde{\theta}_\star,\beta_\star)\leq c+\eta\right].
	\end{align*}
	As $c\notin\left\{-t,t\right\}$, the right-hand side of the above inequality converges to zero. Hence, there exists $\eta_0$ such that for all $\eta\leq \eta_0$,
	\begin{equation}
	\frac{1}{K}\sum_{j=1}^{K}\mathbb{E}\left[\tilde{\psi}_\eta(\kappa_j(q,\tilde{\theta}_\star,\beta_\star))\right]\leq \epsilon.
		\label{eq:ine1}
	\end{equation}
	Combining \eqref{eq:ine3} and \eqref{eq:ine1}, we get the desired result, that is that for $\epsilon>0$, with probability approaching $1$,
$$
\left|\frac{1}{K}\sum_{j=1}^K{\bf 1}_{\{{\hat{x}_j\leq c\}}}- \frac{1}{K}\sum_{j=1}^{K}\mathbb{P}\left[\kappa_j(q,\tilde{\theta}_\star,\beta_\star)\leq c\right]\right|\leq 2\epsilon. 
$$
\end{proof}
\begin{corollary}
	Consider the setting of Lemma \ref{lem:indicator}.
	 We thus have:
	$$
	\frac{1}{K}\sum_{j=1}^{K}\mathbb{P}\left[\hat{x}_j\leq c\right] -\frac{1}{K}\sum_{j=1}^{K}\mathbb{P}\left[\kappa_j(q,\tilde{\theta}_\star,\beta_\star,x_{0,j})\leq c \right] \to 0.
	$$
	\label{cor:indicator}
\end{corollary}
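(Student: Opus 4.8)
The plan is to deduce the (deterministic) convergence of averaged probabilities from the (in-probability) convergence already furnished by Lemma \ref{lem:indicator}, via a standard uniform-integrability argument. To this end, I would introduce the random variable
\begin{equation*}
Z_K := \frac{1}{K}\sum_{j=1}^K {\bf 1}_{\{{\hat{x}_j}\leq c\}}-\frac{1}{K}\sum_{j=1}^{K}\mathbb{P}\left[{\kappa_j(q,\tilde{\theta}_\star,\beta_\star,x_{0,j})}\leq c\right],
\end{equation*}
whose randomness stems only from the first term, the second being a deterministic sum of probabilities. Lemma \ref{lem:indicator} asserts precisely that $Z_K \overset{P}{\longrightarrow} 0$.

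The key observation is that $Z_K$ is uniformly bounded: both $\frac{1}{K}\sum_{j=1}^K {\bf 1}_{\{{\hat{x}_j}\leq c\}}$ and $\frac{1}{K}\sum_{j=1}^{K}\mathbb{P}[\kappa_j(q,\tilde{\theta}_\star,\beta_\star,x_{0,j})\leq c]$ take values in $[0,1]$, so that $|Z_K|\leq 1$ almost surely. A bounded sequence converging to zero in probability is uniformly integrable and therefore converges to zero in $L^1$, whence $\mathbb{E}|Z_K|\to 0$ and, a fortiori, $\mathbb{E}[Z_K]\to 0$. This is the only analytic ingredient beyond Lemma \ref{lem:indicator}.

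It then remains to identify $\mathbb{E}[Z_K]$ with the quantity of interest. Taking expectations term by term (Fubini applies since the summand is bounded), the expected indicator average equals $\frac{1}{K}\sum_{j=1}^K \mathbb{E}[{\bf 1}_{\{\hat{x}_j\leq c\}}] = \frac{1}{K}\sum_{j=1}^K \mathbb{P}[\hat{x}_j\leq c]$, while the already deterministic second term is left unchanged, giving
\begin{equation*}
\mathbb{E}[Z_K] = \frac{1}{K}\sum_{j=1}^{K}\mathbb{P}\left[\hat{x}_j\leq c\right] -\frac{1}{K}\sum_{j=1}^{K}\mathbb{P}\left[\kappa_j(q,\tilde{\theta}_\star,\beta_\star,x_{0,j})\leq c \right].
\end{equation*}
Combining this identity with $\mathbb{E}[Z_K]\to 0$ yields exactly the claimed convergence.

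I do not anticipate a genuine obstacle: the sole point requiring care is the passage from convergence in probability to convergence of expectations, which is licensed purely by the uniform boundedness of $Z_K$ and needs no rate or nontrivial dominating envelope beyond the constant $1$. The measurability of $\hat{x}_j$ as a function of the underlying Gaussian data — implicit in writing $\mathbb{P}[\hat{x}_j\leq c]$ and in the exchange of expectation and summation — is guaranteed by the fact that $\widehat{\bf x}$ is the \emph{unique} minimizer of a strongly convex program, as established in the proof of \eqref{eq:22}.
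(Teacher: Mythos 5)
Your proof is correct and takes essentially the same route as the paper: the paper dispatches this corollary in one line by invoking the dominated convergence theorem, and your argument — bounding $|Z_K|\leq 1$, passing from convergence in probability to convergence in $L^1$ via uniform integrability (i.e., the bounded-convergence form of DCT), and identifying $\mathbb{E}[Z_K]$ with the stated difference — is exactly the detailed version of that invocation. No gap.
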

\begin{proof}
	The proof follows straightforwardly by applying the dominated convergence theorem. 
\end{proof}
\subsection{{{Applying the CGMT: MSE of Box-RLS}}}
Let $\widehat{\bf x}$ be the solution of \eqref{eq:Box-RLS matrix}.
Recall that the MSE is given by:
$$
{\rm MSE}=\frac{1}{K}\|\widehat{\bf x}-{\bf x}_0\|^2,
$$ 
which can be also written as:
\begin{align*}
	{\rm MSE}&=\frac{1}{\rho_d\sigma_{ {\hat{H}}}^2}\left(\frac{1}{K}\|{\bf v}\|^2-\frac{\rho_d}{K}\sigma_{ {\Delta}}^2\|{\bf x}_0\|^2\right).\\
\end{align*}
As $\frac{1}{K}\|{\bf v}\|^2\overset{P}{\longrightarrow} \delta\theta_\star^2-1$, and $\frac{1}{K}\|{\bf x}_0\|^2\overset{P}{\longrightarrow} 1$, we thus have:
$$
{\rm MSE}\overset{P}{\longrightarrow} \frac{1}{\rho_d\sigma_{ {\hat{H}}}^2}\left(\delta \theta_\star^2-1-{\rho_d}\sigma_{ {\Delta}}^2\right).
$$
\subsection{{{Applying the CGMT: SEP for Box-RLS}}}
In this subsection, we study the limiting behavior of the SEP defined in \eqref{eq:SEP}.  
 For $j=1,\cdots,K$, consider the output of the (PO) problem which denoted by $\hat{\bf x}$ . Recall the expression of the SEP  
 $$
 {\rm SEP}=\frac{1}{K}\sum_{j=1}^{K}\mathbb{P}\left[\hat{s} \neq x_{0,j}, \text{with} \ \ \hat{s}=\argmin_{s \in \mathcal{C}}  |\frac{\hat{x}_j}{B}-s|\right].
 $$
  In PAM-constellations, we distinguish inner symbols when ${x}_{0,j}$ belongs to $\frac{1}{\sqrt{\mathcal{E}}}\{\pm1,\pm3,\cdots,\pm(M-3) \}$ from  edge symbols when ${x}_{0,j}=\pm \frac{M-1}{\sqrt{\mathcal{E}}}$.
Let us consider first the case where ${x}_{0,j}$ is a certain inner symbol.   
In this case, there is an error if and only if $| \frac{\hat{x}_j }{B}-x_{0,j} | > \frac{1}{\sqrt{\mathcal{E}}}$, where $B = \frac{ \xi^2 \beta_\star \tilde{\theta}_\star }{\xi^2 {\beta_\star}\tilde{\theta}_\star + 2 \lambda \rho_{ {d}}}$.
Similarly, considering the edge points $x_{0,j}=\pm\frac{M-1}{\sqrt{\mathcal{E}}}$, we deduce that there is an error if $ \frac{\hat{x}_j }{B}-x_{0,j}  > \frac{1}{\sqrt{\mathcal{E}}}$ when $x_{0,j}=\frac{M-1}{\sqrt{\mathcal{E}}}$ and if $\frac{\hat{x}_j }{B}-x_{0,j}  < -\frac{1}{\sqrt{\mathcal{E}}}$ when $x_{0,j}=-\frac{M-1}{\sqrt{\mathcal{E}}}$. The SEP thus becomes:
\begin{align*}
	{\rm SEP}&=\frac{1}{K}\sum_{j=1}^K\frac{1}{M}\sum_{\substack{i=\pm 1,\pm 3,\\
	\cdots,\pm (M-2)}}\mathbb{P}\left[\left|\frac{\hat{x}_j}{B}-{ x}_{0,j}\right|> \frac{1}{\sqrt{\mathcal{E}}} \ | x_{0,j}=\frac{i}{\sqrt{\mathcal{E}}}\right]\\
	&+\frac{1}{K}\sum_{j=1}^{K}\frac{1}{M}\mathbb{P}\left[\frac{\hat{x}_j}{B}-{ x}_{0,j}< -\frac{1}{\sqrt{\mathcal{E}}} \ | \ x_{0,j}=\frac{M-1}{\sqrt{\mathcal{E}}}\right]\\
	&+\frac{1}{K}\sum_{j=1}^{K}\frac{1}{M}\mathbb{P}\left[\frac{\hat{x}_j}{B}-{ x}_{0,j}> \frac{1}{\sqrt{\mathcal{E}}} \ | \ x_{0,j}=-\frac{M-1}{\sqrt{\mathcal{E}}}\right].\\
\end{align*}
Based on Corollary \ref{cor:indicator}, we have
$$
{\rm SEP}-\widetilde{{\rm SEP}} \to 0,
$$
where $\widetilde{{\rm SEP}}$ follows after some tedious but straightforward calculations and is given in \eqref{eq:SEP_BoX}.
\begin{figure*}
	\begin{equation}
		\label{eq:SEP_BoX}
	\begin{aligned}	\widetilde{\rm SEP}&=\frac{4}{M}\sum_{i=1,3,\dots,M-3}{\bf 1}_{\{\frac{t}{B}\geq\frac{i+1}{\sqrt{\mathcal{E}}}\}}Q\left(\frac{\xi\tilde{\theta}_\star}{\sqrt{\mathcal{E}}}\right)+\frac{2}{M}\sum_{i=1,3,\cdots,M-3}\left\{{\bf 1}_{\{\frac{i-1}{\sqrt{\mathcal{E}}}\leq \frac{t}{B}\leq \frac{i+1}{\sqrt{\mathcal{E}}}\}}Q\left(\frac{\xi\tilde{\theta}_\star}{\sqrt{\mathcal{E}}}\right)+{\bf 1}_{\{\frac{t}{B}\leq \frac{i-1}{\sqrt{\mathcal{E}}}\}}\right\}\\
		&+\frac{2}{M}{\bf 1}_{\{\frac{t}{B}\geq \frac{M-2}{\sqrt{\mathcal{E}}}\}}Q\left(\frac{\xi\tilde{\theta}_\star}{\sqrt{\mathcal{E}}}\right)+\frac{2}{M}{\bf 1}_{\{\frac{t}{B}\leq \frac{M-2}{\sqrt{\mathcal{E}}}\}}.
	\end{aligned}
	\end{equation}
	\hrule
\end{figure*}
\section{Un-Boxed RLS Proofs}\label{sec:unbox}
The analysis of the Un-Boxed RLS scheme is similar to the analysis of the Box-RLS. Below we provide a brief sketch of the proof. Following the same analysis as before, we identify the same (PO) and (AO) with the single difference that the constraints on $\left\{x_j \right\}_{j=1}^{K}$ are now removed. Particularly, the (AO) associated with the RLS writes as:
\begin{align}
	\phi=	& \ \underset{\beta > 0}{\operatorname{\max}} \ \underset{0\leq \tilde{\theta}  \leq C}{\operatorname{\min}}\ \frac{  \beta \| \gv \|^2}{2\tilde{\theta}K} + \frac{ \beta \tilde{\theta}}{2 } - \frac{\beta^2}{4} + \frac{\beta \tilde{\theta}\rho_{ {d}} \| {\xv}_0 \|^2}{2 K}   + \sqrt{\rho_{ {d}}} \beta \frac{1}{K}{\xv}_0^{T} \tilde{\hv} \nonumber \\
&+ \frac{1}{K}\sum_{j = 1}^{K} \biggl[ \underset{ x_j\in\mathbb{R}}{\operatorname{\min}} \ \biggl(\frac{\beta \tilde{\theta}\rho_{ {d}}\sigma_{ {\hat{H}}}^2}{2} + \lambda \rho_{ {d}} \biggr) x_j^2 -  \beta \biggl( \rho_{ {d}} \sigma_{ {\hat{H}}}^2 {x}_{0,j}\tilde{\theta}+\sqrt{\rho_{ {d}}} \sigma_{ {\hat{H}}} q^{1}_{j}   \biggr) {x}_j \biggr].
\end{align}
which is similar to \eqref{AA23} with the difference that the optimization over $x_j$ is on the whole real axis. 
Optimizing over the variables $x_j,j=1,\cdots,K$, we thus obtain:
\begin{align}
	\phi&= \ \underset{\beta > 0}{\operatorname{\max}} \ \underset{0\leq \tilde{\theta}  \leq C}{\operatorname{\min}}\ \frac{  \beta \| \gv \|^2}{2\tilde{\theta}K} + \frac{ \beta \tilde{\theta}}{2 } - \frac{\beta^2}{4} + \frac{\beta \tilde{\theta}\rho_{ {d}} \| {\xv}_0 \|^2}{2 K} + \sqrt{\rho_{ {d}}} \beta \frac{1}{K}{\xv}_0^{T} \tilde{\hv}-\frac{1}{K}\sum_{j=1}^{K} \frac{\beta^2(\xi^2 x_{0,j}\tilde{\theta}+\xi q_j^{1})^2}{2\xi^2\beta\tilde{\theta}+4\lambda \rho_{ {d}}}. \label{eq:cost}
\end{align}
Using the same approach as that in the Box-RLS, we can prove that $\phi$ converges to:
\begin{align}
	\phi\overset{P}{\rightarrow} \overline{\phi}:=\sup_{\beta>0} \inf_{\tilde{\theta}>0} \ \ &\frac{\beta \delta}{2\tilde{\theta}}+\frac{\beta\tilde{\theta}}{2}-\frac{\beta^2}{4} + \frac{\beta\tilde{\theta}\rho_{ {d}}}{2}-\frac{\beta^2\xi^2\tilde{\theta}^2+\xi^2}{2\xi^2\beta\tilde{\theta}+4\lambda\rho_{ {d}}}.
\end{align}
Using the change of variable $\theta=\frac{1}{\tilde{\theta}}$, this brings us to solve the following max-min problem:
$$
\overline{\phi}=\sup_{\beta >0}\inf_{\theta>0}\beta\theta\frac{\delta}{2}+\frac{\beta}{2\theta}(1+\rho_{ {d}})-\frac{\beta^2}{4}-\frac{\beta^2\sigma_{\hat{ {H}}}^2}{\frac{2\beta\sigma_{ {H}}^2}{\theta}+4\lambda}(1+\frac{\rho_{ {d}}\sigma_{\hat{ {H}}}^2}{\theta^2}),$$
for which we prove that there exists a unique solution $(\theta_\star,\beta_\star)$. 
As a matter of fact, writing the first order optimality conditions, we obtain:
\begin{subequations}\label{all_derv}
\begin{align}\label{deriv_t}
\delta \beta - \frac{\beta}{\theta^2}-\frac{\rho_{ {d}} \sigma_{ {\Delta}}^2 \beta}{\theta^2} - \frac{\beta \sigma_{ {\hat{H}}}^2  (\beta^2 \sigma_{ {\hat{H}}}^2 + 4 \rho_{ {d}} \lambda^2 )}{( \beta \sigma_{ {\hat{H}}}^2 + 2 \lambda \theta )^2} = 0, \\
\label{deriv_b}
\delta \theta + \frac{1}{\theta} - \beta + \frac{\rho_{ {d}} \sigma_{ {\Delta}}^2}{\theta} -  \frac{\sigma_{ {\hat{H}}}^2 \theta(\beta^2\sigma_{ {\hat{H}}}^2 + 4  \lambda \theta \beta - 4 \rho_{ {d}}\lambda ^2)}{( \beta \sigma_{ {\hat{H}}}^2 + 2 \lambda \theta )^2} = 0.
\end{align}
\end{subequations}
Combining the two equations together $\big(\frac{1}{ \beta}$ (\ref{deriv_t}) $+ \frac{1}{ \theta}$ (\ref{deriv_b})$\big)$, gives
\begin{equation}
\delta - \frac{\beta}{2 \theta} - \frac{\beta \sigma_{ {\hat{H}}}^2 }{ \beta \sigma_{ {\hat{H}}}^2 + 2 \lambda \theta } =0,
\end{equation}
which is equivalent to a second order polynomial in $\theta$, the solution of which is given by:
\begin{equation}\label{eq:tau}
\theta = \frac{1}{2 \lambda'} \Upsilon(\lambda', \delta) \beta,
\end{equation}
where, $\Upsilon(\lambda', \delta) = \frac{-(\delta - \lambda' -1)+\sqrt{(\delta - \lambda' -1)^2 + 4 \lambda' \delta}}{2 \delta}$, and $\lambda' = \frac{\lambda}{ \sigma_{ {\hat{H}}}^2}$.
Substituting (\ref{eq:tau}) into (\ref{all_derv}) and after some algebraic manipulations, the unique solution $(\theta_\star,\beta_\star)$ of the system is 
\begin{subequations}
\begin{align}
\theta_\star = \sqrt{\frac{\rho_{ {d}} \sigma_{ {\hat{H}}}^2  \kappa^2+  \rho_{ {d}} \sigma_{ {\Delta}}^2 +1 }{\delta-(1-\kappa)^2}}, \label{tau:eq} \\
\beta_\star = 2 \biggl( (\delta -\lambda' -1) + \delta \Upsilon(\lambda', \delta) \biggr) \theta_\star, \label{beta:eq}
\end{align}
\end{subequations}
where $\kappa = \frac{\Upsilon(\lambda', \delta)}{1+  \Upsilon(\lambda', \delta)}$.
Substituting $\kappa$ back gives the same expression as $\theta_\star$ in Theorem \ref{MSE-RLS}.
{{Similar to the Box-RLS case, we can obtain the same MSE and SEP expressions as before. The only difference is that $\theta_\star$ and $\beta_\star$ are now given by the closed form expressions in (\ref{tau:eq}) and (\ref{beta:eq}) respectively.}}
\section{Proof of \eqref{eq:HH}}
\label{app:pp}
To begin with, we perform the change of variable ${\bf s}=\sigma_{\hat{H}}{\bf x}-{\bf x}_0$. Function $\hat{\mathcal{H}}({\beta},{\bf x})$ can be lower-bounded as:
\begin{align*}
\hat{\mathcal{H}}({\beta},{\bf x})\geq &  {\beta}\frac{\|{\bf g}\|}{\sqrt{K}}\sqrt{\rho_d} -\frac{\tilde{\beta}^2}{4}-\tilde{\beta}\sqrt{\rho_d}\frac{{\bf s}^{T}{\bf q}^{\bf 1}}{K}-{\beta}\sqrt{\rho_d} \frac{{\bf x}_0^{T}{\bf q}^{\bf 1}}{K} +{\beta}\sqrt{\rho_{ {d}}} \frac{1}{K}{\bf x}_0^{T}\tilde{\bf h} +\frac{\lambda \rho_d}{\sigma_{\hat{H}}^2} \frac{1}{K}\left(\|{\bf s}\|^2+\|{\bf x}_0\|^2+2{\bf s}^{T}{\bf x}_0\right).
\end{align*}
Denote by $\kappa$ the norm of $\frac{1}{\sqrt{K}}\|{\bf s}\|$.
Minimizing the lower-bound over ${\bf s}$, we obtain:
\begin{align*}
	\hat{\mathcal{H}}({\beta},{\bf x})\geq  \min_{\kappa\geq 0} & \tilde{\beta}\frac{\|{\bf g}\|}{\sqrt{K}}\sqrt{\rho_{ {d}}}-\frac{{\beta}^2}{4}-{\beta}\sqrt{\rho_d}\frac{{\bf x}_0^T{\bf q}^{\bf 1}}{K} +\tilde{\beta}\sqrt{\rho_d}\frac{1}{K}{\bf x}_0^{T}\tilde{\bf h}\\
	&+\frac{\lambda\rho_{ {d}}}{\sigma_{\hat{ {H}}}^2} \kappa^2 +\frac{\lambda\rho_{ {d}}}{\sigma_{\hat{H}}^2} \|{\bf x}_0\|^2-\kappa \left\|\tilde{\beta}\sqrt{\rho_d}\frac{1}{\sqrt{K}}{\bf q}^{\bf 1}-2\frac{\lambda \rho_{ {d}}}{\sqrt{K}\sigma_{\hat{ {H}}}^2}{\bf x}_0 \right\| \\
	&={{\beta}}\frac{\|{\bf g}\|}{\sqrt{K}}\sqrt{\rho_{ {d}}}-\frac{{\beta}^2}{4}-{\beta}\sqrt{\rho_{ {d}}}\frac{{\bf x}_0^T{\bf q}^{\bf 1}}{K} +{\beta}\sqrt{\rho_{ {d}}}\frac{1}{K}{\bf x}_0^{T}\tilde{\bf h} +\frac{\lambda \rho_{ {d}}}{\sigma_{\hat{ {H}}}^2}\|{\bf x}_0\|^2 -\frac{\sigma_{\hat{ {H}}}^2}{4\lambda \rho_{ {d}} K}  \|2\frac{\lambda\rho_{ {d}}}{\sigma_{\hat{ {H}}}^2}{\bf x}_0-{\beta}\sqrt{\rho_{ {d}}}{\bf q}^{\bf 1}\|^2.
\end{align*}
In the asymptotic regime with $K$ and $N$ tending to infinity, and taking ${\beta}$ as fixed, the right-hand side  of the above equation converges to:
$$
{\beta}\sqrt{\rho_d}\sqrt{\delta }-\frac{{\beta^2}}{4}-{\beta}^2\frac{\sigma_{\hat{ {H}}}^2}{4\lambda},
$$
which simplifies to $\frac{\rho_{ {d}}\delta}{1+\frac{\sigma_{\hat{ {H}}}^2}{4\lambda}} $ when ${\beta}$ is replaced by ${\beta}_0:=\frac{2\sqrt{\rho_{ {d}}}\sqrt{\delta}}{1+\frac{\sigma_{\hat{ {H}}}^2}{4\lambda}}$. We thus obtain almost surely
$$
\hat{\mathcal{H}}({\beta}_0,{\bf x})\geq \frac{1}{2}\frac{\rho_{ {d}}\delta}{1+\frac{\sigma_{\hat{ {H}}}^2}{4\lambda}}, 
$$
and as such, almost surely,
$$
\min_{\substack{-t\leq  x_j\leq t\\ j=1,\cdots,K}}\hat{\mathcal{H}}({\beta}_0,{\bf x})\geq \frac{1}{2}\frac{\rho_{ {d}}\delta}{1+\frac{\sigma_{\hat{ {H}}}^2}{4\lambda}}.
$$
Setting $\Theta=\frac{1}{2}\frac{\rho_{ {d}}\delta}{1+\frac{\sigma_{\hat{ {H}}}^2}{4\lambda}}$ ends up the proof of \eqref{eq:HH}.  
\section{{ Monotonicity of the MSE and SEP of the RLS decoder}}\label{Azx}
We showed that for the LS case, optimizing the power allocation in MSE sense is equivalent to optimizing the SEP and it boils down to maximizing $\rho_{\text{eff}}$. 

In this appendix, we will show that this holds also true for the RLS decoder that employs optimal regularization coefficient. Towards this goal, we proceed with the following change of variables  
$J=\frac{1}{\rho_\text{eff}}, c_1 = 2 (1+\delta)$, and $c_2 =(1-\delta)^2$. Then, the MSE and SEP write as:
\begin{align}
\widetilde{\rm{ MSE}}_{\text{\tiny{RLS}}}&=\frac{1}{2}\bigg( -J +(1-\delta) +\sqrt{J^2+c_1 J+c_2}\bigg),\label{eq:MSE_RLS}\\
\widetilde{\rm{ SEP}}_{\text{\tiny{RLS}}}&=2\left(1-\frac{1}{M}\right)
  Q\left(\sqrt{\frac{\delta}{\mathcal{E}\left(\frac{1}{2}\bigg( J +(1-\delta) +\sqrt{J^2+c_1 J+c_2}\bigg)\right)}}\right).\label{eq:SER_RLS}
\end{align}
It appears from \eqref{eq:MSE_RLS} and \eqref{eq:SER_RLS} that to minimize the MSE or the SEP, it suffices to minimize for all $i\in\left\{0,1\right\}$, function $\mathcal{F}_i(J)=(-1)^i J+ \sqrt{J^2+c_1J+c_2}$. We can check easily that the first derivative of  $\mathcal{F}_i$ is given by: $\mathcal{F}_i^{'}(J)=(-1)^{i}+\frac{2J+c_1}{2\sqrt{J^2+c_1J+c_2}}$ which is strictly positive for all $i\in\{0,1\}$. It follows thus that both  $\widetilde{\rm{ MSE}}_{\text{\tiny{RLS}}}$ and $\widetilde{\rm{ SER}}_{\text{\tiny{RLS}}}$ are increasing functions of $\frac{1}{\rho_{\text{eff}}}$ and hence minimizing them amounts to maximizing $\rho_{\text{eff}}$. 
Then, $\widetilde{\rm{ MSE}}_{\text{\tiny{RLS}}} =\frac{1}{2}\bigg( -J +(1-\delta) +\sqrt{J^2+c_1 J+c_2}\bigg)$. On the other hand, from \eqref{SER_MSE_RLS}, it is clear that to minimize the SEP we need to maximize the argument of the $Q$-function, which means to minimize $\widetilde{\rm{ MSE}}_{\text{\tiny{RLS}}} +J$.
\section{Optimal Power Allocation Derivation: Proof of Theorem~\ref{Power_Th}}\label{power_proof}
Here, we derive the optimal power allocation given in Theorem~\ref{Power_Th}. First, rewrite $\rho_\text{eff}$ as follows
\begin{align*}
\rho_\text{eff}&= \frac{\alpha \rho \frac{\tau}{\tau_ {d}}. (1-\alpha) \rho \tau}{(1+ \alpha \rho\frac{\tau}{\tau_ {d}}) + (1- \alpha)\rho \tau} \nonumber \\
& =\frac{(\rho \tau)^2}{\tau_ {d}} \cdot \frac{\alpha (1-\alpha)}{1+ \rho \tau - \alpha \rho \tau (1 - \frac{1}{\tau_ {d}}) } \nonumber \\
& = \frac{\rho \tau}{\tau_ {d} -1} \cdot \frac{\alpha (1- \alpha)}{-\alpha + \frac{1 + \rho \tau}{\rho \tau(1- \frac{1}{\tau_ {d}})}}.
\end{align*}
We need to maximize $\rho_\text{eff}$ over $0< \alpha <1$. To do so, we consider the following cases:\\
1) $\tau_ {d} =1$:
\begin{align*}
\rho_\text{eff} = \frac{(\rho \tau)^2}{1 + \rho \tau} \alpha (1-\alpha).
\end{align*}
In this case, $\alpha_\star = \frac{1}{2}$.\\
2) $\tau_ {d} >1$:
\begin{align*}
\rho_\text{eff} = \frac{\rho \tau}{\tau_ {d} -1} \cdot \frac{\alpha (1-\alpha)}{-\alpha +\vartheta}, \quad \vartheta = \frac{1+\rho \tau}{\rho \tau(1- \frac{1}{\tau_ {d}})} >1.
\end{align*}
Hence, the optimal $\alpha_\star$ that maximizes $\rho_{\text{eff}}$ is given by:
\begin{align*}
\alpha_\star &= \argmax_{\substack{0<\alpha<1 \\ \vartheta>1}} \frac{\alpha(1-\alpha)}{-\alpha+ \vartheta} \\
&= \vartheta -\sqrt{\vartheta(\vartheta-1)}.
\end{align*}
3) $\tau_ {d} <1$:
\begin{align*}
\rho_\text{eff} = \frac{\rho \tau}{1-\tau_ {d} } \cdot \frac{\alpha (1-\alpha)}{\alpha -\vartheta}, \quad \vartheta = \frac{1+\rho \tau}{\rho \tau(1- \frac{1}{\tau_ {d}})} <0.
\end{align*}
In this case, we have
\begin{align*}
\alpha_\star &= \argmax_{\substack{0<\alpha<1 \\ \vartheta<0}} \frac{\alpha(1-\alpha)}{\alpha- \vartheta} \\
&= \vartheta +\sqrt{\vartheta(\vartheta-1)}.
\end{align*}
\section{Optimal Power and Training Time Allocation Derivation based on Goodput}\label{goodput_proof}
In this section, we determine the optimal power and training time allocation that optimizes the asymptotic value of the goodput for LS decoder which we denote here by $\widetilde{G}_{LS}$.

To this end, we proceed with the change of variable $ \gamma = \frac{T_ {p}}{T}$. Then, $\tau_ {p} =\frac{T_ {p}}{K} = \frac{T_ {p}}{T} $,  $ \frac{T}{K} = \gamma \tau, \delta' = \frac{\delta-1}{\mathcal{E}}$, $\rho_ {p} = \frac{(1-\alpha) \rho}{\gamma}$, and $\rho_ {d} = \frac{\alpha \rho}{1-\gamma}$.
The SEP for LS case is
\begin{align*}
\widetilde{\text{SEP}}_{\tiny{\text{LS}}} &= 2(1- \frac{1}{M}) Q \bigg(\sqrt{\frac{\delta -1}{\mathcal{E}} \rho_{\text{eff}}}\bigg) \\
 & = 2(1- \frac{1}{M}) Q \bigg(\sqrt{\frac{\delta -1}{\mathcal{E}} \cdot \frac{\rho_ {d} \rho_ {p} \tau_ {p}}{1+ \rho_ {d} + \rho_ {p} \tau_ {p}}}\bigg)\\
& = 2(1- \frac{1}{M}) Q \bigg(\sqrt{\frac{\delta -1}{\mathcal{E}} \cdot \frac{\frac{\alpha \rho}{1-\gamma} \frac{(1-\alpha) \rho}{\gamma} \gamma \tau}{1+ \frac{\alpha \rho}{1-\gamma}+ \frac{(1-\alpha)\rho}{\gamma} \gamma \tau} }\bigg)\\
& =2\big(1- \frac{1}{M}\big) Q \bigg(\sqrt{ \frac{\delta' \alpha (1-\alpha) \rho^2 \tau}{1+\alpha \rho +(1-\alpha) \rho \tau - \gamma (1 +(1-\alpha) \rho \tau)}}\bigg) \\
&= 2\big(1- \frac{1}{M}\big) Q \bigg(\sqrt{\frac{\frac{\delta' \alpha (1-\alpha) \rho^2 \tau}{1 +(1-\alpha) \rho \tau}}{\frac{1+ \alpha \rho +(1-\alpha) \rho \tau}{1 +(1-\alpha) \rho \tau} - \gamma}} \bigg)\\
& = \widetilde{M} Q \bigg(\sqrt{\frac{b(\alpha)}{a(\alpha) -\gamma}} \bigg),
\end{align*}
where $\widetilde{M} = 2\big(1- \frac{1}{M}\big), a(\alpha)= \frac{1+ \alpha \rho +(1-\alpha) \rho \tau}{1 +(1-\alpha) \rho \tau}>1$ , and $b(\alpha) = \frac{\delta' \alpha (1-\alpha) \rho^2 \tau}{1 +(1-\alpha) \rho \tau}>0$. The asymptotic goodput becomes
\begin{align*}
\widetilde{G}_{\tiny{\text{LS}}}(\alpha,\gamma) &= (1- \gamma) (1- \widetilde{\text{SEP}}_{\tiny{\text{LS}}})\\
& =(1-\gamma)\bigg(1- \widetilde{M} Q\bigg(\sqrt{\frac{b(\alpha)}{a(\alpha) -\gamma}} \bigg) \bigg).
\end{align*}
The power allocation problem amounts thus to solving:
\begin{equation}
(\alpha_\star,\gamma_\star) = \argmax_{\substack{0 < \alpha < 1 \\ \frac{1}{\tau} \leq \gamma < 1}} \widetilde{G}_{\tiny{\text{LS}}}(\alpha,\gamma),
\end{equation}
Recall that we need $T_{ {p}} \geq K$ or $\tau_{ {p}} \geq 1$, but $\tau_{ {p}} = \gamma \tau$, hence $\gamma \geq \frac{1}{\tau}$. We also require $T_{ {p}} < T$, hence $\gamma <1$. This justifies the constraint imposed on $\gamma$ above.\\
To begin with, it is easy to see that  the optimal $\alpha$ is the one that maximizes $\rho_{\rm eff}$ and this is for any $\gamma$. It remains thus to optimize the goodput in terms of $\gamma$. We will solve this by proving that the good-put is a decreasing function with respect to $\gamma$. 
To proceed, let us make the change of variables: $x =\sqrt{\frac{b}{a-\gamma}}$, then $\gamma = a -\frac{b}{x^2}$. Hence, for any $0<\alpha<1$, the goodput is
\begin{equation*}
	\widetilde{G}_{\rm LS}(\alpha,\gamma) = \overline{G}_{\text{\tiny LS}}(x):=\bigg(1- a +\frac{b}{x^2} \bigg) \bigg(1- \widetilde{M} Q(x)\bigg).
\end{equation*}
Taking the derivative of $\overline{G}_{\tiny{\text{LS}}}(x)$ with respect to $x$ yields
\begin{align*}
\overline{G}_{\tiny{\text{LS}}}'(x) &=\frac{-2 b}{x^3}\bigg(1- \widetilde{M} Q(x)\bigg) + \bigg(1- a +\frac{b}{x^2}\bigg)\bigg(\frac{\widetilde{M}}{\sqrt{2 \pi}} e^{\frac{-x^2}{2}}\bigg)\\
& = \widetilde{M} e^{\frac{-x^2}{2}}\bigg[\frac{-2 b}{x^3}e^{\frac{x^2}{2}}\bigg(\frac{1}{\widetilde{M} }- Q(x)\bigg) + \frac{1}{\sqrt{2 \pi}}\bigg(1- a +\frac{b}{x^2}\bigg) \bigg].
\end{align*}
We need to study the sign of $\overline{G}_{\tiny{\text{LS}}}'(x)$. To do so, first write the Taylor series expansion of $e^{\frac{x^2}{2}}\bigg(\frac{1}{\widetilde{M}}-  Q(x)\bigg)$ as:\\
$e^{\frac{x^2}{2}}\bigg(\frac{1}{\widetilde{M}}-  Q(x)\bigg) = \frac{1}{2(M-1)} + \frac{x}{\sqrt{2 \pi}} + \frac{x^2}{4(M-1)}+ \cdots$, then
\begin{align*}
\overline{G}_{\tiny{\text{LS}}}'(x) &= \widetilde{M} e^{\frac{-x^2}{2}}\bigg[ \frac{-b}{(M-1) x^3}- \frac{b}{\sqrt{2 \pi} x^2} 
 -\frac{b}{4(M-1) x}+ \cdots +\frac{1}{\sqrt{2 \pi}} (1- a) \bigg].
\end{align*}
Recall that $a>1$ and note that all the terms in $\overline{G}_{\tiny{\text{LS}}}'(x)$ expression above are negative. Hence, $\overline{G}_{\tiny{\text{LS}}}'(x)<0$ for all $\alpha$.
Now, by the chain rule: $\frac{{\rm d} \overline{G}_{\tiny{\text{LS}}}}{{\rm d} x} = \frac{d \overline{G}_{\tiny{\text{LS}}}}{{\rm d} \gamma}\cdot \frac{{\rm d} \gamma}{{\rm d} x}$, but $\frac{{\rm d} \gamma}{{\rm d} x} =\frac{2 b}{x^3}>0$, then $\frac{{\rm d} \overline{G}_{\tiny{\text{LS}}}}{{\rm d} \gamma}<0$.\\
Hence, $\frac{\partial {G}_{\tiny{\text{LS}}}(\alpha, \gamma)}{\partial \alpha}<0$ for any $\alpha$. This suggests that we choose $\gamma$ as small as possible to maximize $\widetilde{G}_{\tiny{\text{LS}}}(\alpha, \gamma)$.\\
So, $\gamma_\star = \frac{1}{\tau}$. Or, $\tau_{ {p}}^{\star} = \gamma_\star \tau$, then $\tau_{ {p}}^{\star} =1.$\\

The proof above is done for LS but the same conclusions hold for both RLS and also for the Box-RLS under the conjecture that for Box-RLS, the optimal $\alpha$ is the one that maximizes $\rho_{\rm eff}$. We omit details for briefness. 
\section{Comparison with the LMMSE Decoder}\label{LMMSE_Append}
In this appendix we will show that the LMMSE estimator of $\xv_0$ is equivalent to an RLS estimator with the optimal regularizer $\lambda_\star = \frac{1}{\rho_ {d}}+ \sigma_{ {\Delta}}^2$. The LMMSE estimate of $\xv_0$ is given by \cite{kay1993fundamentals}
\begin{equation}
\widehat{\xv}_{\text{LMMSE}} = \Cm_{x y} \Cm_{y y}^{-1} \yv,
\end{equation}
where $ \Cm_{x y} = \mathbb{E} [ (\xv_0 -\mathbb{E}[\xv_0]) (\yv -\mathbb{E}[\yv])^T]$, and $ \Cm_{y y} = \mathbb{E} [ (\yv -\mathbb{E}[\yv]) (\yv -\mathbb{E}[\yv])^T]$. It can be shown that $\mathbb{E}[\yv] =\mathbb{E}[\xv_0] = \mathbf{0}$, and
\begin{equation}
\Cm_{x y} = \sqrt{\frac{\rho_{ {d}}}{K}} \widehat{\Hm}^T.
\end{equation}
To find $ \Cm_{y y}$, let us write $\yv$ as
\begin{align}
 \yv=& \sqrt{\frac{\rho_{ {d}}}{K}} (\widehat{\Hm}+\Deltam) \xv_0 +\zv, \nonumber \\
  =&\sqrt{\frac{\rho_{ {d}}}{K}} \widehat{\Hm} \xv_0 +\widetilde{\zv}, 
\end{align}
where $\widetilde{\zv} \triangleq \sqrt{\frac{\rho_{ {d}}}{K}} \Deltam \xv_0 +\zv$ which is a zero-mean vector with $\Cm_{\tilde{z} \tilde{z}} =\frac{\rho_{ {d}}}{K}\mathbb{E}_ {{\xv_0,\Delta}}[\Deltam\xv_0 \xv_0^T \Deltam^T]+\Cm_{zz}=\frac{\rho_{ {d}}}{K}\mathbb{E}_ {{\Delta}}[\Deltam \mathbb{E}_{\xv_0}[\xv_0 \xv_0^T] \Deltam^T]+\Cm_{zz}=\frac{\rho_{ {d}}}{K}\mathbb{E}_ {{\Delta}}[\Deltam \Cm_{x x} \Deltam^T]+\Cm_{zz} =\frac{\rho_{ {d}}}{K}\mathbb{E}_ {{\Delta}}[\Deltam \Deltam^T]+\Id_N=(\rho_{ {d}} \sigma_{ {\Delta}}^2 +1) \Id_N$, then,
\begin{equation}
\Cm_{y y} =\frac{\rho_{ {d}}}{K} \widehat{\Hm} \widehat{\Hm}^T +(\rho_{ {d}} \sigma_{ {\Delta}}^2 +1) \Id_N. 
\end{equation}
Note that we used $\Cm_{xx} =\Id_K$, $\Cm_{zz} =\Id_N$, and $\mathbb{E}[\Deltam \Deltam^T]= K\sigma_{ {\Delta}}^2 \Id_N$.}\\
Hence,  
\begin{align}\label{LMMSE_a}
\widehat{\xv}_{\text{LMMSE}} =& \sqrt{\frac{\rho_{ {d}}}{K}} \widehat{\Hm}^T \bigg(\frac{\rho_{ {d}}}{K} \widehat{\Hm} \widehat{\Hm}^T + (\rho_{ {d}} \sigma_{ {\Delta}}^2 +1) \Id_N \bigg)^{-1} \yv, \nonumber\\
=&\sqrt{\frac{\rho_{ {d}}}{K}} \bigg(\frac{\rho_{ {d}}}{K} \widehat{\Hm}^T \widehat{\Hm} + (\rho_{ {d}} \sigma_{ {\Delta}}^2 +1) \Id_K \bigg)^{-1} \widehat{\Hm}^T \yv, \nonumber \\
=& \big(\Am^T \Am +(1+\rho_{ {d}} \sigma_{ {\Delta}}^2 )\Id_K \big)^{-1} \Am^T \yv,
\end{align}
where $\Am =\sqrt{\frac{\rho_{ {d}}}{K}} \widehat{\Hm}$, and the second equality follows from the \emph{matrix inversion Lemma}.
The LMMSE estimate in \eqref{LMMSE_a} is equivalent to the RLS solution as given in \eqref{eq:LS matrixb} with $\lambda \rho_{ {d}} = \lambda_\star \rho_{ {d}}= 1+ \rho_{ {d}} \sigma_{ {\Delta}}^2$. 
This shows that the RLS with optimal regularizer is nothing but the popular LMMSE decoder. Finally, the LMMSE estimate can be written in terms of $\widehat{\Hm}$ as
\begin{align}
\widehat{\xv}_{\text{LMMSE}} &=\sqrt{\frac{K}{\rho_{ {d}}}} \bigg( \widehat{\Hm}^T \widehat{\Hm} + \frac{\sigma_{ {\hat{H}}}^2}{\rho_{\text{eff}}} K \Id_K \bigg)^{-1} \widehat{\Hm}^T \yv \nonumber \\
& = \widehat{\xv}_{\text{RLS}}(\lambda_\star).
\end{align}
\newpage
\clearpage
\bibliographystyle{IEEEbib}
\bibliography{References_Arxiv}
\end{document}